\newcolumntype{C}{>{\centering\arraybackslash $}p{1.3cm}<{$}}
\newcolumntype{D}{>{\centering\arraybackslash $}p{1.4cm}<{$}}
\newcolumntype{S}{>{\centering\arraybackslash $}p{0.8cm}<{$}}
\newtheorem{lemma}{Lemma}
\newtheorem{proposition}{Proposition}
\newtheorem{theorem}{Theorem}
\newtheorem{definition}{Definition}
\newtheorem{example}{Example}
\newtheorem{observation}{Observation}
\newtheorem*{thmot}{Kantorovich Duality}
\DeclareMathOperator*{\argmax}{arg\,max}
\newcommand{\term}[1]{\textit{#1}}
\DeclareRobustCommand{\iscircle}{\mathord{\mathpalette\is@circle\relax}}
\newcommand\is@circle[2]{%
  \begingroup
  \sbox\z@{\raisebox{\depth}{$\m@th#1\bigcirc$}}%
  \sbox\tw@{$#1\square$}%
  \resizebox{!}{\ht\tw@}{\usebox{\z@}}%
  \endgroup
}
\renewcommand{\Re}{\mathbb{R}}
\newcommand{\supp}{\text{supp}}
\newcommand{\proj}{\text{proj}_M}
\renewcommand{\epsilon}{\varepsilon}
\newcommand*\circled[1]{\tikz[baseline=(char.base)]{
            \node[shape=circle,draw,inner sep=1.6pt] (char) {#1};}}
\definecolor{Burgundy}{RGB}{206,5,40}
\definecolor{Rust}{RGB}{200,10,10}
\definecolor{UCSDblue}{RGB}{55,80,200}
\title{Credible Persuasion\footnote{We are indebted to Nageeb Ali for his continuing guidance and support. We are also grateful for comments and suggestions from Ian Ball, Carl Davidson, Henrique de Oliveira, Piotr Dworczak, Alex Frankel,  Nima Haghpanah, Marc Henry, Tetsuya Hoshino, Yuhta Ishii, Emir Kamenica, Navin Kartik, Vijay Krishna,  Meg Meyer, Moritz Meyer-ter-Vehn,  Harry Pei, Daniel Rappoport, Andrew Rhodes, Ron Siegel, Alex Smolin, Juuso Toikka, Jia Xiang, Takuro Yamashita, Hanzhe Zhang, and participants at various conferences and seminars.}}
\begin{document}

\author{Xiao Lin\thanks{Department of Economics, Pennsylvania State University (e-mail: \href{mailto: xiao@psu.edu}{xiao@psu.edu}).} \and Ce Liu\thanks{Department of Economics, Michigan State University (e-mail: \href{mailto: celiu@msu.edu}{celiu@msu.edu}).}}

\date{May 6, 2022}

\maketitle

\begin{abstract}
We propose a new notion of credibility for Bayesian persuasion problems. A disclosure policy is credible if the sender cannot profit from tampering with her messages while keeping the message distribution unchanged. We show that the credibility of a disclosure policy is equivalent to a cyclical monotonicity condition on its induced distribution over states and actions. We also characterize how credibility restricts the Sender's ability to persuade under different payoff structures. In particular, when the sender's payoff is state-independent, all disclosure policies are credible. We apply our results to the market for lemons, and show that no useful information can be credibly disclosed by the seller, even though a seller who can commit to her disclosure policy would perfectly reveal her private information to maximize profit.

\end{abstract}

\thispagestyle{empty}

\clearpage
\setcounter{tocdepth}{2}
\tableofcontents
\thispagestyle{empty}
\clearpage

\setcounter{page}{1}

\section{Introduction} \label{section: introduction}

When an informed party (Sender; she) discloses information to persuade her audience (Receiver; he), it is in her interest to convey only messages that steer the outcome in her own favor: schools may want to inflate their grading policies to improve their job placement records; similarly, credit rating agencies may publish higher ratings in exchange for future business. Even when the Sender claims to have adopted a disclosure policy, she may still find it difficult to commit to following its prescriptions, since the adherence to such policies is often impossible to monitor. By contrast, what \textit{is} often publicly observable is the final distribution of the Sender's messages: students' grade distributions at many universities are publicly available, and so is the distribution of rating scores from credit rating agencies.

Motivated by this observation, we propose a notion of \textit{credible persuasion}. In contrast to standard Bayesian persuasion, our Sender cannot commit to a disclosure policy; however, to avoid detection, she must keep the final message distribution unchanged when deviating from her disclosure policy. For example, in the context of grade distributions, if a university had announced a disclosure policy that features a certain fraction of A's, B's, and C's, it cannot switch to a distribution that assigns every student an A without being detected. Analogously, if a credit rating agency were to tamper with its rating schemes, any resulting change in the overall distribution of ratings may be detected. Our notion of credibility closely adheres to this definition of detectability: we say that a disclosure policy is {credible} if given how the Receiver reacts to her messages, the Sender has no profitable deviation to any other disclosure policy that has the same message distribution.

Can the Sender persuade the Receiver by using credible disclosure policies? We find that in many settings, no informative disclosure policy is credible. An important special case where this effect is exhibited is the market for lemons \citep{Akerlof1970}. Here, we show that the seller of an asset cannot credibly disclose any useful information to the buyer; this effect arises even though the seller benefits from persuasion when she can fully commit to her disclosure policy. Conversely, we also provide conditions for when the Sender is guaranteed to benefit from credible persuasion so that credibility does not entirely eliminate the scope for persuasion. In general, we show that credibility is characterized by a \textit{cyclical monotonicity} condition that is analogous to that studied in decision theory and mechanism design \citep{rochet1987necessary}.

\medskip
To illustrate these ideas, consider the following example. A buyer (Receiver) chooses whether to buy a car from a used car seller (Sender). It is common knowledge that $30\%$ of the cars are of \textit{high} quality and the remaining $70\%$ are of \textit{low} quality. For simplicity, suppose that all cars are sold at an exogenously fixed price.\footnote{In \cref{section: applications} we study a competitive market for lemons with endogenous prices, and emerge with similar findings.} The payoffs in this example are in \cref{example: adverse selection payoffs}.
The seller always prefers selling a car, but the buyer is only willing to purchase if and only if he believes its quality is {high} with at least $0.5$ probability. Conditional on a car being sold, the seller obtains the same payoff regardless of its quality; but when a car is not sold, she receives a higher value from retaining a {high} quality car.

\begin{table}[t]
\centering
\begin{subtable}[h]{{0.4\textwidth}}
\centering
{\setlength{\extrarowheight}{2pt}
\begin{tabular}{C | D  D  }
\Xhline{3\arrayrulewidth}
  & \text{\small Buy} & \text{\small Not Buy} \\
\Xhline{3\arrayrulewidth}
\text{\small High}  & 2  & 1  \\
\text{\small Low} & 2 & 0 \\
\Xhline{3\arrayrulewidth}
\end{tabular}
}
\caption*{Seller}
\end{subtable}
\hspace{2ex}
\begin{subtable}[h]{0.4\textwidth}
\centering
{\setlength{\extrarowheight}{2pt}
\begin{tabular}{C | D  D}
\Xhline{3\arrayrulewidth}
   & \text{\small Buy} & \text{\small Not Buy} \\
\Xhline{3\arrayrulewidth}
\text{\small High}  & 1  & 0  \\
\text{\small Low} & -1 & 0 \\
\Xhline{3\arrayrulewidth}
\end{tabular}
}
\caption*{Buyer}
\end{subtable}
\caption{Used Car Example Payoffs\label{example: adverse selection payoffs}}
\end{table}

As a benchmark, let us first see what the seller achieves if she could commit to a disclosure policy. We depict the optimal disclosure policy in \Cref{figure: used car full-commitment disclosure}. The policy uses two messages, \textit{pass} and \textit{fail}: all high-quality cars pass, along with $3/7$ of the low-quality cars; the remaining $4/7$ of the low-quality cars receive a failing grade. Conditional on the car passing, the buyer believes that the car is of high quality with probability $0.5$, which is just enough to convince him to make the purchase. If a car fails, the buyer believes that the car is of low quality for sure and will refuse to buy.  With this disclosure policy, the buyer expects to see the seller pass $60\%$ of the cars and fail the remaining $40\%$.

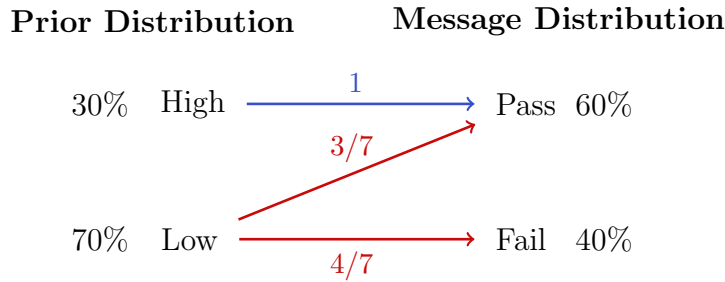
\begin{figure}[ht]
\centering
\begin{tikzpicture}[shorten > = 4pt, shorten <= 4pt]
{
\path (0.6,1.8) node (state1)[left,xshift=6]{Low};
\path (0.7,3.6) node (state2)[left,xshift=6]{High};

\path (-0.2,4.7) node (text1){\textbf{Prior Distribution}};
\path (5.2,4.7) node (text1){\textbf{Message Distribution}};

\path (-0.9, 1.8) node (prior1) {$70\%$};
\path (-0.9, 3.6) node (prior2) {$30\%$};

\path (4.5,1.8) node (m1)[right,xshift=-8]{Fail};
\path (4.5,3.6) node (m2)[right,xshift=-8]{Pass};

\draw[] (5.8,1.8) node (posterior1) {$40\%$};
\draw[] (5.8,3.6) node (posterior2) {$60\%$};
}

\tikzset{every path/.append style={->, line width=1pt}}
\draw[UCSDblue] (state2) -- (m2) node[pos=0.48,above] {\small 1};
\draw[Rust] (state1) -- (m2) node[above, pos=0.48] {\small 3/7};
\draw[Rust] (state1) -- (m1) node[below, pos=0.48] {\small 4/7};
\end{tikzpicture}
\caption{Optimal Commitment Policy \label{figure: used car full-commitment disclosure}}
\end{figure}

The policy above is optimal for the seller if she can commit to following its prescriptions. But suppose the buyer cannot observe how the seller rates her cars. Instead, the buyer only observes the fraction of cars being passed and failed.  In such a setting, the seller can profitably deviate from the above disclosure policy without being detected by the buyer.  Specifically, the seller can switch to failing all high-quality cars and passing an equal number of low-quality cars. This disclosure policy, illustrated in \Cref{figure: undetectable deviation}, induces the same distribution of messages (i.e., $60\%$ pass, $40\%$ fail). Holding fixed the buyer's behavior, this deviation is profitable for the seller because she is still selling the same number of cars but now is able to retain more high-quality cars. Accordingly, we view the optimal full-commitment policy not to be credible: after having promised to share information according to a disclosure policy, the seller would not find it rational to follow through and would instead profit from an undetectable deviation.

\begin{figure}[ht]
\centering

\begin{tikzpicture}[shorten > = 4pt, shorten <= 4pt]
{
\path (0.6,1.8) node (state1)[left,xshift=6]{Low};
\path (0.7,3.6) node (state2)[left,xshift=6]{High};

\path (-0.2,4.7) node (text1){\textbf{Prior Distribution}};
\path (5.2,4.7) node (text1){\textbf{Message Distribution}};

\path (-0.9, 1.8) node (prior1) {$70\%$};
\path (-0.9, 3.6) node (prior2) {$30\%$};

\path (4.5,1.8) node (m1)[right,xshift=-8]{Fail};
\path (4.5,3.6) node (m2)[right,xshift=-8]{Pass};

\draw[] (5.8,1.8) node (posterior1) {$40\%$};
\draw[] (5.8,3.6) node (posterior2) {$60\%$};
}
\tikzset{every path/.append style={->, line width=1pt}}

\draw[Rust] (state1) -- (m2) node[below, pos=0.38] {\footnotesize 6/7};
\draw[Rust] (state1) -- (m1) node[pos=0.38, below] {\footnotesize 1/7};
\draw[UCSDblue] (state2) -- (m1) node[pos=0.38, above] {\footnotesize 1};

\end{tikzpicture}
\caption{An Undetectable Deviation \label{figure: undetectable deviation}}
\end{figure}
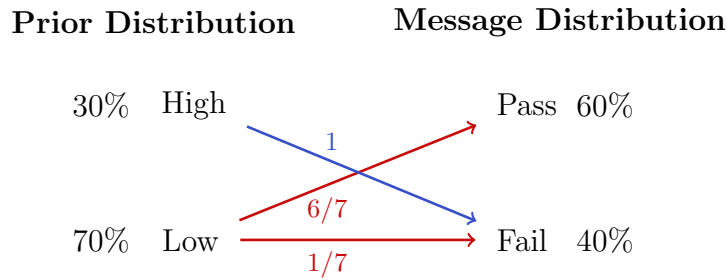

More generally, we introduce the following notion of credibility for disclosure policies. 
Consider a \emph{profile} consisting of the Sender's disclosure policy and the Receiver's strategy (mapping messages to actions). We say that a profile is \emph{Receiver incentive compatible} if the Receiver's strategy best responds to the Sender's disclosure policy---this requirement is standard in Bayesian persuasion problems. We say that a profile is \emph{credible} if, given the Receiver's strategy, the Sender has no profitable deviation to any other disclosure policy that induces the same message distribution. Together, credibility and Receiver incentive compatibility require that conditional on the Sender's message distribution, the Sender and Receiver best respond to each other.\footnote{Our solution-concept is therefore analogous to a Nash equilibrium condition in which the set of feasible deviations for the Sender is to other disclosure policies that induce the same message distribution.}

We have just argued that in the used car example, the optimal full-commitment disclosure policy was not credible given the Receiver's best response. Can any car be sold in a profile that is both credible and Receiver incentive compatible? We show that the answer is no. Note that this is what happens when no information is disclosed. In other words, credibility completely shuts down the possibility for useful information transmission.

To see why, suppose towards a contradiction that the buyer purchases a car after observing a message $m_1$ that is sent with positive probability. By Receiver incentive compatibility, the buyer must believe that the car is of high quality with more than $0.5$ probability after observing $m_1$. Since $m_1$ is sent with positive probability, the martingale property of beliefs implies that there must be another message $m_2$, also sent with positive probability, that makes the buyer assign less than $0.5$ to the car's quality being high. Necessarily, when the buyer observes the message $m_2$, he does not make a purchase. This creates an incentive for the seller to tamper with her disclosure policy: by exchanging some of the good cars being mapped into $m_1$ with an equal number of bad cars being mapped into $m_2$, she can improve her payoff without  changing the distribution of messages.

One may wonder if credibility always shuts down communication entirely. The next example features a setting in which the optimal full-commitment disclosure policy is credible. Consider the disclosure problem faced by a school (Sender) and an employer (Receiver).\footnote{See \cite{ostrovsky2010information} for an early study of how schools strategically design their grading policies in a competitive setting.} Just as in the used car example, a student's ability is either \textit{high} with probability $0.3$ or \textit{low} with probability $0.7$. Payoffs are as shown in \Cref{example: school payoffs}.  The employer is willing to hire a student if he believes the student has high ability with at least $0.5$ probability. The school would like all its students to be employed, but derives a higher payoff from placing a good student than it does from placing a bad one.

\begin{table}[h]
\centering
\begin{subtable}[h]{{0.4\textwidth}}
\centering
{\setlength{\extrarowheight}{2pt}
\begin{tabular}{C | D  D  }
\Xhline{3\arrayrulewidth}
 & \text{\small Hire} & \text{\small Not Hire} \\
\Xhline{3\arrayrulewidth}
\text{\small High}  & 2  & 0  \\
\text{\small Low } & 1 & 0 \\
\Xhline{3\arrayrulewidth}
\end{tabular}
}
\caption*{School}
\end{subtable}
\hspace{2ex}
\begin{subtable}[h]{0.4\textwidth}
\centering
{\setlength{\extrarowheight}{2pt}
\begin{tabular}{C | D  D}
\Xhline{3\arrayrulewidth}
 & \text{\small Hire} & \text{\small Not Hire} \\
\Xhline{3\arrayrulewidth}
\text{\small High}  & 1  & 0  \\
\text{\small Low} & -1 & 0 \\
\Xhline{3\arrayrulewidth}
\end{tabular}
}
\caption*{Employer}
\end{subtable}
\caption{School Example Payoffs\label{example: school payoffs} }
\end{table}

The school's optimal full-commitment disclosure policy is identical to the one in the used car example (\Cref{figure: used car full-commitment disclosure}), and so are the employer's best responses. But unlike the used car example, the school cannot profitably deviate without changing the message distribution.

To see why, note that without changing the message distribution, any deviation must involve passing some low ability students while failing an equal number of {high} ability students. This would increase the employment of {low} ability students at the expense of their {high} ability counterparts, which  makes the school worse off. Since the school cannot profit from undetectable deviations, the optimal full-commitment policy is credible. In contrast to the previous example where credibility shuts down all useful communication, the current example shows that credibility sometimes imposes no cost on the Sender relative to persuasion with full commitment.

\medskip

In the two examples above, credibility has starkly different implications for information transmission. The key difference is that in the used car example, when the car's quality is higher, the Sender has a weaker incentive to trade while the Receiver's incentive to trade is stronger; in the school example, by contrast, both the Sender and Receiver have a stronger incentive to trade as the student's ability increases. Our results formalize this intuition.

\cref{proposition: noinformation} shows that when the Sender and Receiver's preferences have opposite modularities (e.g. when the Sender's payoff is strictly supermodular and the Receiver's payoff is submodular), no useful information can be credibly communicated. When players' preferences share the same modularity, the Sender does not always benefit from credible persuasion relative to the no-information benchmark. \cref{proposition: alignedpreference} and \cref{proposition: no cost} provide additional conditions that guarantee the Sender does benefit from credible persuasion, as well as conditions under which the optimal full-commitment disclosure policy is credible.

Generalizing further, we use optimal transport theory to characterize credibility using a familiar condition from mechanism design and decision theory---cyclical monotonicity. \cref{theorem-cyclical} shows that for every profile of Sender's disclosure policy and Receiver's strategy, the credibility of the profile is equivalent to a {cyclical monotonicity} condition on its induced  distribution over states and actions. As is illustrated in the examples above, credibility requires that the Sender cannot benefit from any pairwise swapping in the matching of states and actions. The cyclical monotonicity condition generalizes this idea to cyclical swapping: for every sequence of state-action pairs in the support, the sum of the Sender's utility should be lower after the matchings of states and actions in this sequence are permuted. In \cref{section: relationship}, we discuss the connection of \cref{theorem-cyclical} to \cite{rochet1987necessary}.

We apply our results to a classic setting of asymmetric information---the market for lemons. It is well-known that market outcome in such a setting may be inefficient due to adverse selection \citep*{Akerlof1970}: despite common knowledge of gain from trade, some cars may not be traded. If the seller can commit to a disclosure policy to persuade the buyers, she can completely solve the market inefficiency by perfectly revealing $\theta$ to the buyers. However, we show that if the buyers can only observe the message distribution of the seller's disclosure policy, but not exactly how these messages are generated, then the seller cannot credibly disclose any useful information to the buyer.

Our paper also offers foundations for studying Bayesian persuasion in a number of settings. One example is when the Sender's payoff is state-independent: in these cases, our results imply that all disclosure policies are credible, so the full-commitment assumption in the Bayesian persuasion approach is nonessential as long as the message distribution is observable. Another example is when the Sender's payoff is supermodular, in which case all monotone disclosure policies are credible.

\medskip
The rest of the paper is organized as follows: \cref{section: model} introduces our credibility notion as well as the main results. \cref{section: applications} considers an application: in the market for lemons with endogenous prices, we show that the seller cannot credibly disclose any useful information to the buyers, even though full disclosure would maximize the seller's profit. \cref{section: discussions} discusses several aspects of our model: first, it connects our cyclical-monotonicity characterization to \cite{rochet1987necessary}; it then discusses a finite-population approximation of our model, and provides an extensive-form foundation to our partial-commitment formulation. \cref{section: conclusion} concludes. All omitted proofs are in \cref{section: appendix}. The remainder of this introduction places our contribution within the context of the broader literature.

\paragraph{Related Literature:}

Our work contributes to the study of strategic communication. The Bayesian persuasion model in \citet{kamenicagentzkow2011} studies a Sender who can fully commit to an information structure.\footnote{\citet{brocas2007influence} and \cite{rayosegal2010} also study optimal disclosure policy in more specific settings.} By contrast, the cheap-talk approach pioneered by \citet{crawford1982strategic} models a Sender who observes the state privately and, given the Receiver's strategy, chooses an optimal (sequentially rational) message. The partial commitment setting that we model is between these two extremes: here, the Sender can commit to a (marginal) distribution over messages but not the entire information structure.

Our model considers a Sender who can misrepresent her messages as long as the misrepresentation still produces the original message distribution. This contrasts with existing approaches to modeling limited commitment in Bayesian persuasion. One approach, pioneered by \citet*{lipnowski2019persuasion} and \citet{min2021bayesian}, is to allow the Sender to alter the messages from her chosen test with some fixed probability. Another approach is to consider settings where the Sender can revise her test at a cost. \citet{nguyen2021} consider a Sender who can distort the messages from her chosen information structure, whereas \citet{perezskreta} consider a Sender who can falsify the state, or input, of the information structure.

The way that we model the Sender's feasible deviations is closely related to the literature on quota mechanisms, which use message budgets to induce truth-telling;
see, for example, \cite{jackson2007overcoming}, \cite{matsushima2010role}, \cite{rahman2010detecting}, and \cite{frankel2014aligned}. Similar ideas have also been explored in communication games. For example, \cite{chakraborty2007comparative} consider multi-issue cheap-talk problems, and study equilibria where the Sender assigns a ranking to each issue.\footnote{In such equilibria, the Sender's on-path deviations are essentially restricted to permuting the assignment of rankings to issues. This is similar to quota mechanisms where deviations are restricted to permuting the matching between the Sender's true types and her reported types.} \cite{renault2013dynamic}  study repeated cheap-talk models where only messages and the Receiver's actions are publicly observable. They characterize equilibria in the repeated communication game via a static reporting game where the Sender directly reports her type. The key condition in their characterization requires truthful reporting to be optimal among all reporting strategies that reproduce the true type distribution, which is akin to \cite{rahman2010detecting}'s characterization of implementable direct mechanisms.\footnote{Relatedly, \cite{escobar2013efficiency} consider repeated private-value Bayesian games with Markovian states, and show that efficiency can be achieved using a similar quota-like construction.} \cite{margaria2018dynamic} use a different approach to study the case where the Sender's payoff is state-independent, and \cite{meng2021value} provides a unified approach to characterizing the Receiver's optimal value in these repeated cheap-talk models. \citet*{kuvalekar2021goodwill} study a related model where the Receiver is short-lived, and show that the equilibrium payoffs can be characterized via a static cheap-talk model with capped money burning.

A different strand of the repeated cheap-talk literature studies models where the Receiver can observe feedback about past state realizations. \citet{best2020persuasion} considers how coarse feedback of past states can substitute for commitment; \citet*{mathevet2022reputation} allows for the possibility of non-strategic commitment types; \cite{pei2020repeated} characterizes when Sender's persistent private information about her own lying cost allows her to achieve her full-commitment payoff.

Finally, our approach to credible persuasion is reminiscent of how \citet{akbarpour2020credible} model credible auctions. They study mechanism design problems where the designer's deviations are ``safe'' so long as they lead to outcomes that are possible when she is acting honestly, and characterize mechanisms that ensure the designer has no safe and profitable deviations. By contrast, we study persuasion problems where the Sender's deviations are undetectable if they do not alter the message distribution, and characterize information structures where the Sender has no profitable and undetectable deviation.

\section{Model} \label{section: model}
\subsection{Setup}\label{section: modelsetup}

We consider an environment with a single Sender ($S$; she) and a single Receiver ($R$; he). Both players' payoffs depend on an unknown state $\theta\in \Theta$ and the Receiver's action $a\in A$. Both $\Theta$ and $A$ are finite sets. The payoff functions are given by $u_S:\Theta\times A\rightarrow \mathbb{R}$ and $u_R:\Theta\times A \rightarrow \mathbb{R}$. Players hold full-support common prior $\mu_{0}\in\Delta(\Theta)$.

Let $M$ be a finite message space that contains $A$. The Sender chooses a disclosure policy, which we henceforth refer to as a ``test," to influence the Receiver's action. A test $\lambda \in \Delta(\Theta\times M)$ is a joint distribution of states and messages, so that the marginal distribution of states agrees with the prior; that is, $\lambda_\Theta=\mu_{0}$.\footnote{For a probability measure $P$ defined on some product space $X\times Y$, we use $P_{X}$ and $P_Y$ to denote its marginal distribution on $X$ and $Y$, respectively.} The Receiver chooses an action after observing each message according to a pure strategy $\sigma:M\rightarrow A$.\footnote{We focus on pure strategies to abstract from the Receiver using randomization to deter the Sender's deviations.}

Our interest is in understanding the Sender's incentives to deviate from her test, which depends on the Receiver's strategy. To avoid ambiguity, we refer explicitly to pairs of $(\lambda, \sigma)$---or \textit{profiles}---that consist of a Sender's disclosure policy and a Receiver's strategy. For each profile $(\lambda, \sigma)$, the players' expected payoffs are 
\begin{equation*}
U_S(\lambda,\sigma)= \sum_{\theta,m} u_S(\theta,\sigma(m))\lambda(\theta,m)\qquad\text{and}\qquad 
U_R(\lambda,\sigma)= \sum_{\theta,m} u_R(\theta,\sigma(m))\lambda(\theta,m).
\end{equation*}

We consider a setting where the Sender cannot commit to her test, and can deviate to another test so long as it leaves the final message distribution unchanged. This embodies the notion that the distribution of the Sender's messages is observable, even though it may be difficult to observe exactly how these messages are generated.
Formally, if $\lambda$ is a test promised by the Sender, let $D(\lambda)\equiv \{\lambda'\in \Delta(\Theta \times M): \lambda'_\Theta=\mu_0,\lambda'_M=\lambda_M\}$
denote the set of tests that induce the same distribution of messages as $\lambda$: these test are indistinguishable from $\lambda$ from the Receiver's perspective. Our credibility notion requires that conditioning on how the Receiver responds to the Sender's messages, no deviation in $D(\lambda)$ can be profitable for the Sender.

\begin{definition}\label{definition: credible}
A profile  $(\lambda,\sigma)$ is \textbf{credible} if
\begin{equation}\label{IC-S}
\lambda \in \argmax_{\lambda'\in D(\lambda)} \; \sum_{\theta,m}\; u_S(\theta,\sigma(m))\;\lambda'(\theta,m)
\end{equation}
\end{definition}

Moreover, the Receiver's strategy is required to be a best response to the Sender's chosen test.

\begin{definition} \label{definition: RIC}
A profile $(\lambda,\sigma)$ is \textbf{Receiver Incentive Compatible} (R-IC) if
\begin{equation}\label{equation: RIC}
	\sigma\in \argmax_{\sigma': M \rightarrow A} \; \sum_{\theta, m}\; u_R(\theta,\sigma'(m))\;\lambda(\theta,m)
\end{equation}
\end{definition}
Together, credibility and R-IC ensure that conditioning on the message distribution of the Sender's test, both the Sender and the Receiver best respond to each other.

An immediate observation is that there always exists a profile $(\lambda,\sigma)$ that is both credible and R-IC. This is the profile of a completely uninformative test and a Receiver strategy that takes the ex ante optimal action after every message. Given the test, the Receiver is taking a best response and given the Receiver's strategy, the Sender has no incentive to deviate to any other test that induces the same message distribution.

\bigskip

Our credibility notion is motivated by the observability of the Sender's message distribution, which is modeled as a restriction on the Sender's feasible deviations. Below we first discuss the observability of message distribution in light of the population interpretation of Bayesian persuasion models; we then connect this observability to the partial commitment approach we adopt in our model.

The observability of the message distribution is best understood through a population interpretation of persuasion models,\footnote{For a more detailed discussion of various interpretations of Bayesian persuasion models, see e.g. Section 2.2 of \citet{kamenica2019bayesian}.} where there are a continuum of objects with types distributed according to $\mu_0\in\Delta(\Theta)$. The Sender's test $\lambda$ assigns each object a message based on its type, which generates a message distribution $\lambda_M$. Working with a continuum population affords us a cleaner exposition by abstracting from sampling variation.  In \cref{section: finite sample}, we consider a finite approximation where the Sender privately observes $N$ i.i.d. samples from $\mu_0\in \Delta(\Theta)$, and assigns each realization a message $m\in M$ subject to quotas on message frequencies; the Receiver then chooses an action after observing the Sender's message. We show that credible and R-IC profiles in our continuum model are approximated by those in the finite-sample model when the sample size $N$ becomes large.

The credibility notion in \autoref{definition: credible} restricts the Sender's feasible deviations by requiring them to preserve the original message distribution. To formalize the connection between the observability of message distribution and the Sender's partial commitment, in \cref{section: extensivesingle} we present an extensive-form game in which the Sender is permitted to deviate to any test, and the Receiver observes the message distribution generated by the chosen test. As we show therein, if the Sender chooses a test that induces a different message distribution, the Receiver can ``punish'' the Sender by assuming that the chosen test is uninformative. If one focuses on profiles where the Sender is weakly better off than disclosing no information (as we do in this paper), this analysis provides a justification for our focus on deviations that generate the same distribution of messages as the equilibrium test.

\subsection{Stable Outcome Distributions}

We characterize credible and receiver incentive compatible profiles through the induced probability distribution of states and actions.
Formally, an \term{outcome distribution} is a distribution $\pi \in \Delta(\Theta\times A) $ that satisfies $\pi_\Theta=\mu_{0}$: this is a consistency requirement that stipulates that the marginal distribution of states must conform to the prior. We say an outcome distribution $\pi$ is induced by a profile $(\lambda,\sigma)$ if for every $(\theta,a)\in \Theta \times A$, $\pi(\theta,a)=\lambda(\theta,\sigma^{-1}(a))$, where $\sigma^{-1}$ is the inverse mapping of $\sigma$.
We are interested in characterizing outcome distributions that can be induced by profiles that are both credible and R-IC, and refer to such outcome distributions as stable.
\begin{definition}\label{Definition-Stable}
An outcome distribution $\pi\in\Delta(\Theta\times A)$ is \textbf{stable} if it is induced by a profile $(\lambda,\sigma)$ that is both credible and R-IC.
\end{definition}

Our first result characterizes stable outcome distributions. 
\begin{theorem}\label{theorem-cyclical}
An outcome distribution $\pi\in\Delta (\Theta\times A)$ is stable if and only if:
\begin{enumerate}
 	\item $\pi$ is $u_R-$obedient: for each $a\in A$ such that $\pi(\Theta, a)>0$,
	\begin{equation*}
 	\sum_{\theta\in \Theta} \pi(\theta,a)\; u_R(\theta,a)\geq \sum_{\theta\in  \Theta}\pi(\theta,a)\; u_R(\theta,a') \;\text{ for all } \;a'\in A.
 	 \end{equation*}
 	\item $\pi$ is $u_S-$cyclically monotone: for each sequence $(\theta_1,a_1),\ldots,(\theta_n,a_n)\in \supp(\pi)$ and $a_{n+1}\equiv a_1$,
	\begin{equation*}
	\sum_{i=1}^n u_S(\theta_i,a_i)\geq \sum_{i=1}^n u_S(\theta_i,a_{i+1});
	\end{equation*}
\end{enumerate}
\end{theorem}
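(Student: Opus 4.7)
The plan is to prove both directions of the equivalence. For necessity, I would start from a credible and R-IC profile $(\lambda,\sigma)$ inducing $\pi$ and derive each of the two conditions separately by appropriate aggregation and perturbation arguments. For sufficiency, I would construct a canonical ``direct'' profile with message space equal to the action space and verify both properties.

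\emph{Necessity of $u_R$-obedience.} Fix any $a\in A$ with $\pi(\Theta,a)>0$. By R-IC, for every message $m$ with $\lambda(\Theta,m)>0$, action $\sigma(m)$ is an expected-utility maximizer against the joint mass $\lambda(\cdot,m)$. Restricting attention to $m\in\sigma^{-1}(a)$ gives $\sum_\theta u_R(\theta,a)\lambda(\theta,m)\geq\sum_\theta u_R(\theta,a')\lambda(\theta,m)$ for every $a'\in A$; summing over $m\in\sigma^{-1}(a)$ and using $\sum_{m\in\sigma^{-1}(a)}\lambda(\theta,m)=\pi(\theta,a)$ yields obedience.

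\emph{Necessity of $u_S$-cyclical monotonicity.} Given a sequence $(\theta_1,a_1),\ldots,(\theta_n,a_n)\in\supp(\pi)$, pick for each $i$ a message $m_i\in\sigma^{-1}(a_i)$ with $\lambda(\theta_i,m_i)>0$. For sufficiently small $\epsilon>0$, define the ``cyclic swap'' $\lambda'$ by subtracting $\epsilon$ from each $\lambda(\theta_i,m_i)$ and adding $\epsilon$ to each $\lambda(\theta_i,m_{i+1})$ (indices modulo $n$). A short computation shows $\lambda'\geq 0$, $\lambda'_\Theta=\mu_0$, and $\lambda'_M=\lambda_M$, so $\lambda'\in D(\lambda)$. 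Credibility then gives $U_S(\lambda,\sigma)\geq U_S(\lambda',\sigma)$, which reduces after cancellation to exactly $\sum_i u_S(\theta_i,a_i)\geq\sum_i u_S(\theta_i,a_{i+1})$. Repeated or coinciding $m_i$'s cause no problem since only $\epsilon$ needs shrinking.

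\emph{Sufficiency.} Take $M=A$, $\lambda(\theta,a)=\pi(\theta,a)$, and $\sigma$ equal to the identity. The induced outcome distribution is $\pi$ by construction. R-IC for messages in $\supp(\pi_A)$ is precisely $u_R$-obedience, and is vacuous for messages with zero marginal. Credibility then amounts to showing that $\pi$ maximizes $\sum_{\theta,a} u_S(\theta,a)\lambda'(\theta,a)$ over all couplings $\lambda'\in\Delta(\Theta\times A)$ with marginals $\mu_0$ and $\pi_A$. This is a finite-dimensional optimal transport problem, and the classical characterization of optimal couplings (the finite-dimensional version of Kantorovich duality) says that such a coupling is optimal if and only if its support is $u_S$-cyclically monotone, which is exactly our hypothesis.

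\emph{Main obstacle.} The nontrivial step is the sufficiency direction's appeal to the transport-theoretic fact that cyclical monotonicity implies optimality, not just the other way around. I would prove it directly in the finite setting by contradiction: if some $\pi'$ with the same marginals strictly improves on $\pi$, then the signed measure $\pi'-\pi$ has zero $\Theta$- and $A$-marginals; decomposing its support on the bipartite graph $\Theta\times A$ into alternating cycles produces a finite sequence in $\supp(\pi)$ whose cyclic rearrangement strictly raises the $u_S$-sum, contradicting cyclical monotonicity. The necessity direction and the R-IC bookkeeping are essentially routine by comparison.
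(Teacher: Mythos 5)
Your proof is correct and follows the same basic architecture as the paper's: sufficiency via the canonical direct profile ($M=A$, $\lambda=\pi$, $\sigma$ the identity) together with the finite Kantorovich-type fact that a coupling with $u_S$-cyclically monotone support maximizes $\sum u_S\,d\lambda'$ over all couplings with the given marginals; necessity by unwinding credibility and R-IC. Where you deviate is in the level of self-containment. For the necessity of $u_S$-cyclical monotonicity, the paper argues by contradiction and invokes the transport lemma (optimality $\Rightarrow$ cyclical monotonicity of $v(\theta,m)=u_S(\theta,\sigma^*(m))$); you instead carry out the direct $\epsilon$-perturbation (the ``cyclic swap'' $\lambda'$), which is exactly the inline proof of that easy direction and has the advantage of making visible that the deviation preserves $\lambda_M$ precisely because the index set is cyclic. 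For sufficiency, the paper simply cites a finite version of Theorem 5.10 of Villani as its Lemma 3, whereas you correctly identify the nontrivial implication (cyclical monotonicity $\Rightarrow$ optimality) and sketch the standard combinatorial proof via conformal decomposition of the signed measure $\pi'-\pi$ into alternating cycles on the bipartite graph $\Theta\sqcup A$. This trades citation for a short flow-decomposition argument; both are valid, and yours is more elementary and transparent in the finite setting. Your derivation of $u_R$-obedience by aggregating over $m\in\sigma^{-1}(a)$ also replaces the paper's citation to Bergemann--Morris with a direct computation, again with no loss.

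One small point worth making explicit if you write this up: in the cyclic-swap step, for $\lambda'\geq 0$ you need $\epsilon<\min_i\lambda(\theta_i,m_i)/n$ rather than just $\epsilon<\min_i\lambda(\theta_i,m_i)$, since a given cell $(\theta_i,m_i)$ may be subtracted from up to $n$ times when the sequence repeats; you flag this (``only $\epsilon$ needs shrinking'') but it merits the explicit bound.
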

The first condition is the standard obedience constraint \citep{BM2016,taneva2019information}, which specifies that the Receiver finds it incentive compatible to follow the recommended action given the belief that she forms when receiving that recommendation. The second condition, namely $u_S$-cyclical monotonicity, is the new constraint that maps directly to our notion of credibility. Below, we describe this condition and explain why it is both necessary and sufficient for credibility.

To understand the cyclical monotonicity condition, consider an outcome distribution $\pi$ and a sequence $(\theta_i,a_i)_{i=1}^n$ in the support of $\pi$. A ``cyclical'' deviation in this case consists of subtracting $\varepsilon$ mass from $(\theta_i,a_i)$ while adding it to $
(\theta_i,a_{i+1})$ for each $i=1,\ldots,n$, where $a_{n+1}\equiv a_1$. Each step of this cyclical deviation changes the Sender's payoff by
$\varepsilon \big[u_S(\theta_i,a_{i+1} ) -u_S(\theta_i,a_{i} )\big]$,
so the total change in the Sender's payoff is
\begin{equation*}
\varepsilon \big[\sum_{i=1}^n u_S(\theta_i,a_{i+1} )  - \sum_{i=1}^n u_S(\theta_i,a_{i} )\big].
\end{equation*}
The cyclical monotonicity condition requires that the Sender can find no profitable cyclical deviations.

To see why cyclical monotonicity is necessary, observe that cyclical deviations do not change the distribution of recommended actions. Therefore, any such deviation could not be detected solely on the basis of the distribution of messages. Because we require that such undetectable deviations are not profitable, this implies the cyclical monotonicity condition above.

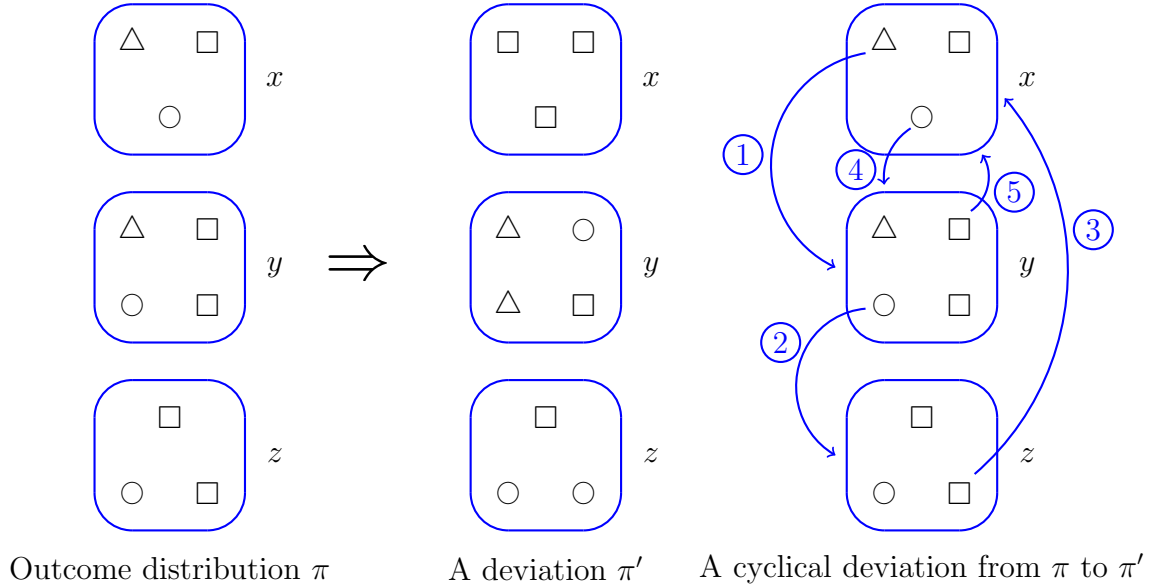
\begin{figure}[ht]
\centering
	\begin{tikzpicture}[domain=0:3, scale=5, thick]    

	\def\x{0.2};
	\def\y{0.2};
	\def\s{0.25};
	\filldraw (0,0)node{$\bigtriangleup$};
	\filldraw (0+\x,0)node{$\Box$} ;
	\filldraw (0.1,0-\y)node{$\iscircle$};

	\draw[blue]  (0,-\y-0.1)--(0+\x,-\y-0.1) (0+\x+0.1,-\y)--(0+\x+0.1,0) (0+\x,0+0.1)--(0,0+0.1) (0-0.1,0)--(0-0.1,-\y);

	\draw[blue] (0-0.1,-\y) arc[start angle=-180, end angle=-90, x radius=0.1, y radius=0.1];
	\draw[blue] (0+\x+0.1,-\y) arc[start angle=0, end angle=-90, x radius=0.1, y radius=0.1];	
	\draw[blue] (0+\x+0.1,0) arc[start angle=0, end angle=90, x radius=0.1, y radius=0.1];
	\draw[blue] (0-0.1,0) arc[start angle=180, end angle=90, x radius=0.1, y radius=0.1];

	\filldraw (0,-0.5)node{$\bigtriangleup$}   ;
	\filldraw (0+\x,-0.5)node{$\Box$} (0+\x,0-\y-0.5)node{$\Box$};
	\filldraw[] (0,0-\y-0.5)node{$\iscircle$}   ;	
	
	\draw[blue]  (0,-\y-0.1-0.5)--(0+\x,-\y-0.1-0.5) (0+\x+0.1,-\y-0.5)--(0+\x+0.1,0-0.5) (0+\x,0+0.1-0.5)--(0,0+0.1-0.5) (0-0.1,0-0.5)--(0-0.1,-\y-0.5);

	\draw[blue] (0-0.1,-\y-0.5) arc[start angle=-180, end angle=-90, x radius=0.1, y radius=0.1];
	\draw[blue] (0+\x+0.1,-\y-0.5) arc[start angle=0, end angle=-90, x radius=0.1, y radius=0.1];	
	\draw[blue] (0+\x+0.1,0-0.5) arc[start angle=0, end angle=90, x radius=0.1, y radius=0.1];
	\draw[blue] (0-0.1,0-0.5) arc[start angle=180, end angle=90, x radius=0.1, y radius=0.1];

	\filldraw (0.1,0-1)node{$\Box$} (0+\x,0-\y-1)node{$\Box$};
	\filldraw[] (0,0-\y-1)node{$\iscircle$}   ;	
	
	\draw[blue]  (0,-\y-0.1-1)--(0+\x,-\y-0.1-1) (0+\x+0.1,-\y-1)--(0+\x+0.1,0-1) (0+\x,0+0.1-1)--(0,0+0.1-1) (0-0.1,0-1)--(0-0.1,-\y-1);

	\draw[blue] (0-0.1,-\y-1) arc[start angle=-180, end angle=-90, x radius=0.1, y radius=0.1];
	\draw[blue] (0+\x+0.1,-\y-1) arc[start angle=0, end angle=-90, x radius=0.1, y radius=0.1];	
	\draw[blue] (0+\x+0.1,0-1) arc[start angle=0, end angle=90, x radius=0.1, y radius=0.1];
	\draw[blue] (0-0.1,0-1) arc[start angle=180, end angle=90, x radius=0.1, y radius=0.1];

    \draw (0.6,-0.6)node{\huge $\Rightarrow$};
    
	\draw[] (0.1,-1.4)node{Outcome distribution $\pi$};
	\draw[] (1.1,-1.4)node{A deviation $\pi'$};
	\draw[] (2.1,-1.4)node{A cyclical deviation from $\pi$ to $\pi'$};

    \draw[] (0.38,0.15-0.25)node[]{$x$};
    \draw[] (0.38,-0.35-0.25)node[]{$y$};
    \draw[] (0.38,-0.85-0.25)node[]{$z$};
    
    \draw[] (1.38,0.15-0.25)node[]{$x$};
    \draw[] (1.38,-0.35-0.25)node[]{$y$};
    \draw[] (1.38,-0.85-0.25)node[]{$z$};
    
    \draw[] (2.38,0.15-0.25)node[]{$x$};
    \draw[] (2.38,-0.35-0.25)node[]{$y$};
    \draw[] (2.38,-0.85-0.25)node[]{$z$};
	\def\z{1};
	\filldraw (0+\z,0)node{$\Box$};
	\filldraw (0+\x+\z,0)node{$\Box$} ;
	\filldraw (0.1+\z,0-\y)node{$\Box$};

	\draw[blue]  (0+\z,-\y-0.1)--(0+\x+\z,-\y-0.1) (0+\x+0.1+\z,-\y)--(0+\z+\x+0.1,0) (0+\x+\z,0+0.1)--(0+\z,0+0.1) (0-0.1+\z,0)--(0-0.1+\z,-\y);

	\draw[blue] (0-0.1+\z,-\y) arc[start angle=-180, end angle=-90, x radius=0.1, y radius=0.1];
	\draw[blue] (0+\x+0.1+\z,-\y) arc[start angle=0, end angle=-90, x radius=0.1, y radius=0.1];	
	\draw[blue] (0+\x+0.1+\z,0) arc[start angle=0, end angle=90, x radius=0.1, y radius=0.1];
	\draw[blue] (0-0.1+\z,0) arc[start angle=180, end angle=90, x radius=0.1, y radius=0.1];

	\filldraw (0+\z,-0.5)node{$\bigtriangleup$}   ;
	\filldraw (0+\x+\z,-0.5)node{$\iscircle$} (0+\x+\z,0-\y-0.5)node{$\Box$};
	\filldraw[] (0+\z,0-\y-0.5)node{$\bigtriangleup$}   ;	
	
	\draw[blue]  (0+\z,-\y-0.1-0.5)--(0+\z+\x,-\y-0.1-0.5) (0+\z+\x+0.1,-\y-0.5)--(0+\z+\x+0.1,0-0.5) (0+\z+\x,0+0.1-0.5)--(0+\z,0+0.1-0.5) (0-0.1+\z,0-0.5)--(0+\z-0.1,-\y-0.5);

	\draw[blue] (0-0.1+\z,-\y-0.5) arc[start angle=-180, end angle=-90, x radius=0.1, y radius=0.1];
	\draw[blue] (0+\x+0.1+\z,-\y-0.5) arc[start angle=0, end angle=-90, x radius=0.1, y radius=0.1];	
	\draw[blue] (0+\x+0.1+\z,0-0.5) arc[start angle=0, end angle=90, x radius=0.1, y radius=0.1];
	\draw[blue] (0-0.1+\z,0-0.5) arc[start angle=180, end angle=90, x radius=0.1, y radius=0.1];
	
	\filldraw (0.1+\z,0-1)node{$\Box$} (0+\x+\z,0-\y-1)node{$\iscircle$};
	\filldraw[] (0+\z,0-\y-1)node{$\iscircle$}   ;	
	
	\draw[blue]  (0+\z,-\y-0.1-1)--(0+\x+\z,-\y-0.1-1) (0+\x+0.1+\z,-\y-1)--(0+\x+0.1+\z,0-1) (0+\x+\z,0+0.1-1)--(0+\z,0+0.1-1) (0-0.1+\z,0-1)--(0-0.1+\z,-\y-1);

	\draw[blue] (0-0.1+\z,-\y-1) arc[start angle=-180, end angle=-90, x radius=0.1, y radius=0.1];
	\draw[blue] (0+\x+0.1+\z,-\y-1) arc[start angle=0, end angle=-90, x radius=0.1, y radius=0.1];	
	\draw[blue] (0+\x+0.1+\z,0-1) arc[start angle=0, end angle=90, x radius=0.1, y radius=0.1];
	\draw[blue] (0-0.1+\z,0-1) arc[start angle=180, end angle=90, x radius=0.1, y radius=0.1];

	\def\w{2};
	\filldraw (0+\w,0)node{$\bigtriangleup$};
	\filldraw (0+\x+\w,0)node{$\Box$} ;
	\filldraw (0.1+\w,0-\y)node{$\iscircle$};

	\draw[blue]  (0+\w,-\y-0.1)--(0+\w+\x,-\y-0.1) (0+\x+0.1+\w,-\y)--(0+\x+0.1+\w,0) (0+\x+\w,0+0.1)--(0+\w,0+0.1) (0-0.1+\w,0)--(0-0.1+\w,-\y);

	\draw[blue] (0-0.1+\w,-\y) arc[start angle=-180, end angle=-90, x radius=0.1, y radius=0.1];
	\draw[blue] (0+\x+0.1+\w,-\y) arc[start angle=0, end angle=-90, x radius=0.1, y radius=0.1];	
	\draw[blue] (0+\x+0.1+\w,0) arc[start angle=0, end angle=90, x radius=0.1, y radius=0.1];
	\draw[blue] (0-0.1+\w,0) arc[start angle=180, end angle=90, x radius=0.1, y radius=0.1];

	\filldraw (0+\w,-0.5)node{$\bigtriangleup$}   ;
	\filldraw (0+\x+\w,-0.5)node{$\Box$} (0+\x+\w,0-\y-0.5)node{$\Box$};
	\filldraw[] (0+\w,0-\y-0.5)node{$\iscircle$}   ;	
	
	\draw[blue]  (0+\w,-\y-0.1-0.5)--(0+\x+\w,-\y-0.1-0.5) (0+\x+0.1+\w,-\y-0.5)--(0+\x+0.1+\w,0-0.5) (0+\x+\w,0+0.1-0.5)--(0+\w,0+0.1-0.5) (0-0.1+\w,0-0.5)--(0-0.1+\w,-\y-0.5);

	\draw[blue] (0-0.1+\w,-\y-0.5) arc[start angle=-180, end angle=-90, x radius=0.1, y radius=0.1];
	\draw[blue] (0+\x+0.1+\w,-\y-0.5) arc[start angle=0, end angle=-90, x radius=0.1, y radius=0.1];	
	\draw[blue] (0+\x+0.1+\w,0-0.5) arc[start angle=0, end angle=90, x radius=0.1, y radius=0.1];
	\draw[blue] (0-0.1+\w,0-0.5) arc[start angle=180, end angle=90, x radius=0.1, y radius=0.1];

	\filldraw (0.1+\w,0-1)node{$\Box$} (0+\x+\w,0-\y-1)node{$\Box$};
	\filldraw[] (0+\w,0-\y-1)node{$\iscircle$}   ;	
	
	\draw[blue]  (0+\w,-\y-0.1-1)--(0+\x+\w,-\y-0.1-1) (0+\x+0.1+\w,-\y-1)--(0+\x+0.1+\w,0-1) (0+\x+\w,0+0.1-1)--(0+\w,0+0.1-1) (0-0.1+\w,0-1)--(0-0.1+\w,-\y-1);

	\draw[blue] (0-0.1+\w,-\y-1) arc[start angle=-180, end angle=-90, x radius=0.1, y radius=0.1];
	\draw[blue] (0+\x+0.1+\w,-\y-1) arc[start angle=0, end angle=-90, x radius=0.1, y radius=0.1];	
	\draw[blue] (0+\x+0.1+\w,0-1) arc[start angle=0, end angle=90, x radius=0.1, y radius=0.1];
	\draw[blue] (0-0.1+\w,0-1) arc[start angle=180, end angle=90, x radius=0.1, y radius=0.1];
	
	\draw[->,blue] (-0.05+\w,-0.03) to [curve through={(-0.3+\w,-0.3)}](-0.13+\w,-0.6);
	\draw[blue] (-0.3+\w,-0.3)node[left,xshift=-3,shape=circle,draw,inner sep=1.8pt]{1};
	
	\draw[->,blue] (-0.05+\w,-0.71) to [curve through={(-0.2+\w,-0.8)}](-0.13+\w,-1.1);	
	\draw[blue] (-0.2+\w,-0.8)node[left,xshift=-3,shape=circle,draw,inner sep=1.8pt]{2};
	
	\draw[->,blue] (0.24+\w,-1.15) to [curve through={(0.48+\w,-0.5)}](0.32+\w,-0.15);
	\draw[blue] (0.48+\w,-0.5)node[right,xshift=3,shape=circle,draw,inner sep=1.8pt]{3};
	
	\draw[->,blue] (0.07+\w,-0.23) to [curve through={(0+\w,-0.34)}](0+\w,-0.38);
	\draw[blue] (0+\w,-0.34)node[left,xshift=-3,shape=circle,draw,inner sep=1.8pt]{4};
	
	\draw[->,blue] (0.23+\w,-0.45) to [curve through={(0.27+\w,-0.4)}](0.26+\w,-0.3);
	\draw[blue] (0.27+\w,-0.4)node[right,xshift=3,shape=circle,draw,inner sep=1.8pt]{5};

	\end{tikzpicture}   
	\caption{Cyclical Deviation}
	\label{figure:cyclical}
\end{figure}

For sufficiency, the intuition is that any deviation that keeps the marginal distribution on messages unchanged can either be expressed as or approximated by cyclical deviations. To see the idea, let us look at the graphical representation of an outcome distribution $\pi$ in the left panel of \autoref{figure:cyclical}. In this example, $\Theta=\{\square, \iscircle, \triangle\}$ and $A=\{x,y,z\}$. Each $\square$, $\iscircle$, and $\triangle$ in the graph is associated with $10\%$ probability mass. The prior belief assigns $20\%$ to $\triangle$, which is represented by the presence of two $\triangle$'s; similarly, the prior assigns $30\%$ to $\iscircle$ and $50\%$ to $\square$. The pairing between states and actions pins down the outcome distribution, as well as its induced distribution of actions. For example, $\pi(\triangle, x)=10\%$ and $\pi(\Theta,y)=40\%$. 

The middle panel depicts a possible deviation $\pi'$ that maintains the same distribution of actions. In particular, the matchings among states and actions are permuted, but the number of shapes matched to each action remains the same. Our credibility notion requires that no such deviation can be profitable. The right panel illustrates how this deviation can be expressed as a $5$-step cyclical deviation. 

In fact, it is not hard to see from the graph that every deviation that involves moving integer numbers of $\square$, $\iscircle$, or $\triangle$ is essentially a permutation of the locations of the shapes, which can be written as a disjoint composition of cycles. It is therefore sufficient to check for deviations that form a cycle. However, an implicit assumption of the graphical representation is that both the original outcome distribution and the deviation contain only rational probabilities; otherwise, we cannot find an $N$ large enough so that every joint probability can be represented as integer numbers of shapes. In this case, however, one can approximate irrational outcome distributions with rational ones, and then apply the same argument outlined above. In \cref{section: appendix proof of thm 1}, we prove \cref{theorem-cyclical} by directly citing Kantorovich duality from optimal transport theory.

The final remark is that in the cyclical monotonicity condition, we can restrict attention to those sequences with length $n\leq \max\{|\Theta|,|M|\}$. As an illustration, the right panel of \Cref{figure:cyclical} represents a cyclical deviation of length $5$, which is greater than both the number of states and the number of actions. However, we can split this cycle into two shorter cyclical deviations, one comprising $\circled{1}\,\circled{2}\,\circled{3}$ and the other made up of $\circled{4}\,\circled{5}$. If the deviation of length $5$ is profitable, one of these two shorter sequences has to be profitable. \autoref{theorem-cyclical} therefore reduces the problem to checking only a finite number of deviations.

\subsection{The Case of Additively Separable Payoffs}
If $u_S(\theta,a)$ is additively separable in $\theta$ and $a$, then $u_S-$cyclical monotonicity is automatically satisfied. So we have the following observation.
\begin{observation}\label{observation: additive}
If $u_S(\theta,a) = v(\theta) + w(a)$ for some $v:\Theta\rightarrow\Re$ and $w:A\rightarrow\Re$, then every outcome distribution that satisfies $u_R$-obedience is stable.
\end{observation}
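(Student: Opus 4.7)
The plan is to deduce this observation as an immediate corollary of \cref{theorem-cyclical}. Since that theorem characterizes stability as the conjunction of $u_R$-obedience and $u_S$-cyclical monotonicity, it suffices to show that additive separability of $u_S$ forces cyclical monotonicity to hold (with equality, in fact) on \emph{every} outcome distribution, regardless of its support.

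Concretely, I would fix an arbitrary sequence $(\theta_1,a_1),\ldots,(\theta_n,a_n)\in\supp(\pi)$ with $a_{n+1}\equiv a_1$, and plug the decomposition $u_S(\theta,a)=v(\theta)+w(a)$ into both sides of the cyclical monotonicity inequality. The ``state'' terms $\sum_{i=1}^n v(\theta_i)$ appear identically on both sides and cancel. For the ``action'' terms, the indices $i\mapsto i+1$ (mod $n$) define a cyclic permutation of $\{1,\dots,n\}$, so
\begin{equation*}
\sum_{i=1}^n w(a_{i+1}) = \sum_{i=1}^n w(a_i).
\end{equation*}
Hence $\sum_{i=1}^n u_S(\theta_i,a_i)=\sum_{i=1}^n u_S(\theta_i,a_{i+1})$, and the cyclical monotonicity inequality holds (as an equality) for every such sequence.

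There is no real obstacle here; the statement is essentially a structural remark, and the only thing to be careful about is that the cyclical monotonicity condition in \cref{theorem-cyclical} is quantified over \emph{all} sequences in $\supp(\pi)$, so the argument must not rely on any particular property of the sequence chosen. The computation above does not, so we conclude: given $u_R$-obedience, the remaining condition of \cref{theorem-cyclical} is automatic, and thus $\pi$ is stable.
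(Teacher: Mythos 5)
Your proof is correct and follows essentially the same route as the paper: the paper simply notes before the observation that additive separability makes $u_S$-cyclical monotonicity automatic and then invokes \cref{theorem-cyclical}, which is exactly your argument, just stated more briefly. Your explicit calculation (cancellation of the $v(\theta_i)$ terms and invariance of $\sum_i w(a_i)$ under the cyclic shift $i \mapsto i+1$) fills in the detail that the paper leaves implicit.
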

Therefore, in this case, there is no gap between what is achievable by a sender who can fully commit to a test relative to a sender who can only partially commit to a distribution of messages. This observation is relevant to a special and widely studied case of additively separable preferences, namely that in which the Sender has state-independent payoffs.
State-independent payoffs feature in many analyses of communication and persuasion \citep*[e.g.][]{chakraborty2010persuasion,alonso2016persuading,lipnowski2020cheap,lipnowski2019persuasion,gitmez2022polarization}. Our analysis suggests that for these settings, the Sender can persuade with full commitment power even without committing to a test, merely by making public (and committing to) her distribution of messages.

\subsection{When is Credibility Restrictive?}\label{section: whenisrestrictive}
When the state and action interact in the Sender's payoff, credibility limits the Sender's choice of tests.  The goal of this section is to understand how these limits can restrict the Sender's ability to persuade the Receiver. As benchmarks, we will often draw comparisons to what the Sender can achieve when she can fully commit to her disclosure policy, as well as what is achievable when all or no information is disclosed. We say an outcome distribution $\pi^*$ is an \textit{optimal full-commitment outcome} if it maximizes the Sender's payoff among outcome distributions that satisfy $u_R$-obedience. 
An outcome distribution $\hat{\pi}$ is a \textit{fully revealing outcome} if the Receiver always chooses a best response to every state; that is,
\begin{equation*}
a\in \argmax_{a'\in A} u_R(\theta, a')\; \text{ for every }\; (\theta,a)\in \supp (\hat{\pi}).
\end{equation*}
Finally, an outcome distribution $\pi^{\circ}$ is a \textit{no-information outcome} if the Receiver always chooses the same action that best responds to the prior belief $\mu_{0}$; in other words, there exists 
\begin{equation*}
a^*\in \argmax_{a\in A}\sum_{\theta\in\Theta}\mu_{0}(\theta)u_R(\theta,a) \; \text{ such that } \;  \pi^{\circ}_A( a^*) =1.
\end{equation*}

We say the Sender \emph{benefits from persuasion} if an optimal full-commitment outcome gives the Sender a higher payoff than every no-information outcome. Similarly, we say the Sender \emph{benefits from credible persuasion} if there exists a stable outcome distribution that gives the Sender a strictly higher payoff than every no-information outcome.

We make a few assumptions for ease of exposition. First, suppose every $a\in A$ is a best response to some belief $\mu\in\Delta(\Theta)$ for the receiver. This assumption is without loss of generality, since an action that is never a best response would never be played by the Receiver in any R-IC profile, and can be removed from the action set $A$ without changing results in this paper. Second, suppose there exists no distinct $a,a'\in A$ such that $u_R(\theta,a)=u_R(\theta,a')$ for all $\theta\in \Theta$; in other words, from the Receiver's perspective, there are no duplicate actions. This second assumption is \emph{not} without loss, but greatly simplifies the statements of \cref{proposition: noinformation} and \cref{proposition: alignedpreference}.

\paragraph{Modular Preferences:}
In the examples in \cref{section: introduction}, we see that whether the Sender can credibly persuade the Receiver depends crucially on the alignment of their marginal incentives to trade. To understand this logic more generally, we assume that $\Theta$ and $A$ are totally ordered sets, which without loss of generality can be assumed to be subsets of $\Re$. 
Recall that a payoff function $u:\Theta\times A\rightarrow \mathbb{R}$ is \emph{supermodular} if for all $\theta>\theta'$ and $a> a'$, we have
\begin{equation*}
u(\theta,a)+u(\theta',a')\geq u(\theta,a')+u(\theta',a).
\end{equation*}
and \textit{submodular} if 
\begin{equation*}
u(\theta,a)+u(\theta',a')\leq  u(\theta,a')+u(\theta',a).
\end{equation*}
Furthermore, the function is strictly supermodular or strictly submodular if the inequalities above are strict for $\theta>\theta'$ and $a> a'$.

The modularity of players' payoff functions captures how the marginal utility from higher actions varies with the state. This generalizes the marginal incentive to trade in the examples in \cref{section: introduction}: intuitively, the Sender and the Receiver have {aligned marginal incentives} when both players' payoff functions share the same modularity, and {opposed marginal incentives} when their payoff functions have opposite modularities. To fix ideas, we will assume that the Sender's  payoff is supermodular and vary the modularity of the Receiver's payoff.

We now introduce a lemma that simplifies the $u_S$-cyclical monotonicity condition in \cref{theorem-cyclical} when the Sender's payoff is supermodular. Say that an outcome distribution $\pi\in\Delta (\Theta\times A)$ is \textit{comonotone} if for all $(\theta,a)$, $(\theta',a')\in \supp(\pi)$ satisfying $\theta<\theta'$, we have $a\leq a'$. Comonotonicity requires that the states and the Receiver's actions are positive-assortatively matched in the outcome distribution.
The following lemma, whose variant appears in \citet{rochet1987necessary}, shows that $u_S$-cyclical monotonicity reduces to comonotonicity when the Sender's preference is supermodular.
\begin{lemma}\label{lemma: comonotone}
If $u_S$ is supermodular, then every comonotone outcome distribution is  $u_S$-cyclically monotone. Furthermore, if $u_S$ is strictly supermodular, then every $u_S$-cyclically monotone outcome distribution is also comonotone.
\end{lemma}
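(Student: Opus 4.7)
The plan is to prove the two directions of the lemma separately. The second direction (strict supermodular $u_S$ implies comonotonicity of any $u_S$-cyclically monotone $\pi$) follows immediately from applying the cyclical-monotonicity definition to a single $2$-cycle. Suppose, toward contradiction, that there exist $(\theta_1,a_1),(\theta_2,a_2)\in\supp(\pi)$ with $\theta_1<\theta_2$ but $a_1>a_2$. Setting $n=2$ (so $a_3\equiv a_1$) in the cyclical-monotonicity condition gives
\[
u_S(\theta_1,a_1)+u_S(\theta_2,a_2)\;\geq\; u_S(\theta_1,a_2)+u_S(\theta_2,a_1),
\]
which contradicts the strict supermodular inequality applied at $\theta_2>\theta_1$ and $a_1>a_2$.

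For the first direction, fix a comonotone $\pi$ and any sequence $(\theta_i,a_i)_{i=1}^n$ in $\supp(\pi)$. I would prove the slightly stronger rearrangement claim that, for every permutation $\sigma$ of $\{1,\ldots,n\}$,
\[
\sum_{i=1}^n u_S(\theta_i,a_i)\;\geq\;\sum_{i=1}^n u_S(\theta_i,a_{\sigma(i)}),
\]
and then specialize $\sigma$ to the cyclic shift $i\mapsto i+1$ to obtain $u_S$-cyclical monotonicity. The rearrangement claim is classical for supermodular functions, and I would establish it by a bubble-sort on inversions: as long as there exist indices $i,j$ with $\theta_i<\theta_j$ but $a_{\sigma(i)}>a_{\sigma(j)}$, transposing $\sigma(i)$ and $\sigma(j)$ weakly raises the sum via supermodularity applied to the induced $2$-cycle. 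Because there are only finitely many inversions, the procedure terminates at a matching of $(\theta_i)$ with the multiset $(a_i)$ that respects both orders; since $\pi$ is comonotone, the identity pairing $(\theta_i,a_i)$ is itself such an order-preserving matching, and hence realizes the maximum.

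The only subtlety is bookkeeping around ties among the $\theta_i$'s or the $a_i$'s. When $\theta_i=\theta_j$ (or $a_i=a_j$), the assignment among the tied coordinates does not affect the sum, so any two order-preserving matchings of the same multisets are payoff-equivalent; in particular, the bubble-sort terminates at a matching whose total $u_S$-value equals $\sum_i u_S(\theta_i,a_i)$. I do not anticipate a genuine obstacle: the substance is the one-line $2$-cycle argument for the converse, together with the standard rearrangement-inequality reduction for the forward direction.
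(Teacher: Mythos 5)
Your proof is correct and takes essentially the same route as the paper: the converse via a single $2$-cycle, and the forward direction by iteratively swapping ``inverted'' pairs (the paper isolates this in its \cref{lemma: index crossing}, which identifies a specific crossing to undo so that the number of fixed points strictly increases, guaranteeing termination; your bubble-sort-on-inversions accomplishes the same thing). The only place you are slightly more informal than the paper is the termination claim---``there are only finitely many inversions'' implicitly relies on the standard fact that undoing an inversion strictly decreases the inversion count---but that is a routine bookkeeping lemma, not a gap.
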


Combined with \cref{theorem-cyclical}, \cref{lemma: comonotone} implies that when the Sender's preference is strictly supermodular, the credibility of a profile $(\lambda, \sigma)$ is equivalent to the comonotonicity of its induced outcome distribution. Comonotone outcome distributions have attracted much attention in the persuasion literature in part due to their simplicity and ease of implementation; for example, see \citet{dworczak2019simple}, \citet{goldstein2018stress}, \citet{mensch2019monotone}, \citet{ivanov2020optimal}, \citet{kolotilin2018optimal}, and \citet{kolotilin2020relational}. Our credibility notion provides an additional motive for focusing on monotone information structures.

\bigskip

\paragraph{When Credibility Shuts Down Communication:}
The next result generalizes the used-car example in \cref{section: introduction}. 
\begin{proposition}\label{proposition: noinformation}
 If $u_S$ is strictly supermodular and $u_R$ is submodular, then every stable outcome distribution is a no-information outcome.
\end{proposition}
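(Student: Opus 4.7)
My approach is by contradiction: assume a stable $\pi$ has at least two actions in $\supp(\pi_A)$ and derive that the Receiver must be indifferent between some distinct pair of actions at every state, contradicting the standing no-duplicate-action hypothesis. By \cref{theorem-cyclical}, $\pi$ is $u_R$-obedient and $u_S$-cyclically monotone; strict supermodularity of $u_S$ combined with \cref{lemma: comonotone} then upgrades this to comonotonicity, i.e., $(\theta,a),(\theta',a')\in\supp(\pi)$ with $\theta<\theta'$ imply $a\leq a'$. Let $a<a'$ be respectively the smallest and largest actions in $\supp(\pi_A)$, and write $\bar\theta_a=\max\supp\pi(\cdot|a)$ and $\underline\theta_{a'}=\min\supp\pi(\cdot|a')$; comonotonicity immediately gives $\bar\theta_a\leq\underline\theta_{a'}$.

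The key combinatorial step uses the full-support prior to identify the endpoints of the two conditional supports. For any $\theta\in\Theta$ with $\theta\leq\bar\theta_a$, the prior puts positive mass there, so $\theta$ is played at some action $a''\in\supp(\pi_A)$. Minimality of $a$ gives $a''\geq a$; meanwhile, comonotonicity applied to $(\theta,a'')$ and $(\bar\theta_a,a)$ forces $a''\leq a$, so $a''=a$ and $\theta\in\supp\pi(\cdot|a)$. In particular $\theta_{\min}\equiv\min\Theta\in\supp\pi(\cdot|a)$ and $\supp\pi(\cdot|a)=[\theta_{\min},\bar\theta_a]\cap\Theta$; symmetrically $\supp\pi(\cdot|a')=[\underline\theta_{a'},\theta_{\max}]\cap\Theta$ where $\theta_{\max}\equiv\max\Theta$. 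Now define $\phi(\theta):=u_R(\theta,a)-u_R(\theta,a')$; submodularity of $u_R$ and $a<a'$ make $\phi$ weakly increasing on $\Theta$. Obedience at $a$ and $a'$ gives $\mathbb{E}_{\pi(\cdot|a)}[\phi]\geq 0$ and $\mathbb{E}_{\pi(\cdot|a')}[\phi]\leq 0$; since $\phi$ is bounded above by $\phi(\bar\theta_a)$ on $\supp\pi(\cdot|a)$ and below by $\phi(\underline\theta_{a'})$ on $\supp\pi(\cdot|a')$, we obtain $\phi(\bar\theta_a)\geq 0$ and $\phi(\underline\theta_{a'})\leq 0$. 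Together with $\phi(\bar\theta_a)\leq\phi(\underline\theta_{a'})$, the squeeze forces $\phi(\bar\theta_a)=\phi(\underline\theta_{a'})=0$, and monotonicity then propagates $\phi\equiv 0$ to $[\bar\theta_a,\underline\theta_{a'}]\cap\Theta$. Moreover, because $\phi\leq\phi(\bar\theta_a)=0$ on $\supp\pi(\cdot|a)$ and its expectation there is nonnegative, we must have $\phi\equiv 0$ on $\supp\pi(\cdot|a)=[\theta_{\min},\bar\theta_a]\cap\Theta$; analogously $\phi\equiv 0$ on $[\underline\theta_{a'},\theta_{\max}]\cap\Theta$.

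Combining the three intervals, $\phi\equiv 0$ on all of $\Theta$, i.e., $u_R(\cdot,a)=u_R(\cdot,a')$, contradicting the no-duplicate-action hypothesis. Hence $\supp(\pi_A)=\{a^*\}$ for a single $a^*$; the conditional $\pi(\cdot|a^*)$ is then the prior $\mu_0$, and the obedience condition at $a^*$ reads $a^*\in\argmax_{a\in A}\sum_\theta\mu_0(\theta)u_R(\theta,a)$, which is exactly the definition of a no-information outcome. The main obstacle in this argument is the support-identification step---without leveraging the full-support prior to extend $\supp\pi(\cdot|a)$ down to $\theta_{\min}$ and $\supp\pi(\cdot|a')$ up to $\theta_{\max}$, the $\phi\equiv 0$ conclusion would only cover an interior slice of $\Theta$ and could not be converted into a contradiction with no-duplicate-actions.
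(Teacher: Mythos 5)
Your proof is correct, and its core strategy mirrors the paper's: use strict supermodularity of $u_S$ together with \cref{lemma: comonotone} to get comonotonicity, then exploit submodularity of $u_R$ and the two obedience inequalities to squeeze the marginal-utility difference to zero, contradicting the no-duplicate-action assumption. Where you differ is in how the indifference is propagated to all of $\Theta$. The paper applies the obedience-squeeze to \emph{every} pair $a_1,a_2\in\supp(\pi_A)$, concluding $u_R(\cdot,a_1)=u_R(\cdot,a_2)$ on $I_{a_1}\cup I_{a_2}$, and then relies on the $I_a$'s covering $\Theta$ (plus an implicit chaining through a common $a''$) to reach the global statement. You instead apply the squeeze only to the extreme pair $(a,a')=(\min,\max)\supp(\pi_A)$, but first do extra work to pin down $\supp\pi(\cdot|a)=[\theta_{\min},\bar\theta_a]\cap\Theta$ and $\supp\pi(\cdot|a')=[\underline\theta_{a'},\theta_{\max}]\cap\Theta$ via full support, comonotonicity, and extremality; that interval identification is exactly what lets your three pieces tile $\Theta$ without any chaining. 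Both routes are valid; yours trades the paper's pair-by-pair iteration for a one-shot argument at the cost of the support-characterization lemma, and it has the side benefit of making fully explicit the coverage step that the paper states rather tersely.
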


\autoref{proposition: noinformation} says that when the players have opposed marginal incentives, credibility considerations completely shut down information transmission. The logic generalizes what we saw in the example: if two distinct messages resulted in different actions from the Receiver, the Sender and Receiver would have diametrically opposed preferences regarding which action to induce in which state. Therefore, if a profile satisfies R-IC and is even partially informative, the Sender would have a profitable deviation to another test that swaps states and induces the same marginal distribution of messages.

\bigskip

\paragraph{When the Sender Benefits from Credible Persuasion:}

In light of the school example in \cref{section: introduction}, one might expect credibility to not limit the Sender's ability to persuade when her marginal incentives are aligned with the Receiver's. However, this is false without imposing additional assumptions. For an illustration, consider the following example, in which  both the Sender and Rceiver have supermodular payoffs. The Sender benefits from persuasion when she can fully commit to her disclosure policy, but no stable outcome distribution can give her a higher payoff than the best no-information outcome.

\begin{figure}[h]  \centering
\begin{minipage}[b]{0.45\textwidth} \centering

\begin{center}
\begin{tabular}{|c|S|S|S|S|}
\hline
    $u_S(\theta,a)$       &  a_1  &  a_2  &  a_3  &  a_4  \\ \hline
$\theta=H$ &  -1   &  0.75    &  1    &  0    \\ \hline
$\theta=L$ &  0    &  0.75    &  0.5  &  -1   \\ \hline
\end{tabular}
\end{center}

\vspace{0.05in}
\begin{center}
\begin{tabular}{|c|S|S|S|S|}
\hline
    $u_R(\theta,a)$       & a_1 & a_2 &  a_3  &  a_4  \\ \hline
$\theta=H$ &  0    &  0.6  &  0.8  &  1    \\ \hline
$\theta=L$ &  1    &  0.8  &  0.6  &  0    \\ \hline
\end{tabular}   
\end{center}
\vspace{-0.2in}
\end{minipage} 
\hspace{3ex}
\begin{minipage}[b]{0.4\textwidth} 
\centering
\begin{tikzpicture}[domain=0:3, scale=5, thick]

\draw[<->] (1.1,0)node[right]{$\mu$}--(0,0)node[left,yshift=-5]{0}--(0,0.9)node[above]{${v}(\mu)$};

\draw (1.02,0)node[below]{1};

\draw[blue] (0,0)--(0.25,-0.25)  (0.25,0.75)--(0.5, 0.75) (0.5,0.75)--(0.75,0.75+0.125) (0.75,-0.25)--(1,0);

\draw[red,thick] (0,0)--(0.25,0.75)--(0.75,0.75+0.125)--(1,0);

\draw[ dashed] (0.6,0)node[below]{$\mu_0=0.6$}--(0.6,0.84);

\draw (0.125,0)node[above]{\small$a_1$} (0.375,0)node[above]{\small$a_2$} (0.625,0)node[above,xshift=2]{\small$a_3$} (0.875,0)node[above]{\small$a_4$};

\draw (0.25,0)--(0.25,0.02) (0.5,0)--(0.5,0.02) (0.75,0)--(0.75,0.02) (1,0)--(1,0.02)	;
\end{tikzpicture}
\vspace{-0.3in}
\end{minipage}
\vspace{0.4in}

\begin{minipage}[t]{0.45\textwidth}
\captionof{table}{Sender and Receiver's payoffs \label{table: example concavify}}    
\end{minipage}
\begin{minipage}[t]{0.4\textwidth}
\caption{Concavification\label{figure: example concavification}}
\end{minipage}
\end{figure}

\begin{example}\label{example: alignednocommunication}
Suppose $\Theta= \{H,L\}$ with prior $\mu_{0} = P(\theta=H)=0.6$ and $A= \{a_1,a_2,a_3,a_4\}$. The Sender and Receiver's payoffs are as given in \Cref{table: example concavify}. Note that both players' payoffs are strictly supermodular. The Receiver's best response is $a_1$ when $\mu_0\in [0,0.25)$, $a_2$ when $\mu_0\in [0.25,0.5)$, $a_3$ when $\mu_0\in [0.5,0.75)$, and $a_4$ when $\mu_0\in [0.75,1]$; this leads to the Sender's indirect utility function (blue) and its concave envelope (red) depicted in \Cref{figure: example concavification}. From \citet{kamenicagentzkow2011}, the red line represents the Sender's optimal value under full commitment. It is clear that at $\mu_0=0.6$, the Sender strictly benefits from persuasion if she can fully commit to her test.

However, no stable outcome distribution can make the Sender better off than the no-information outcome. To see why, consider any stable outcome distribution that has at least two actions in the support. By \autoref{lemma: comonotone}, at most one of these actions can be matched with more than one state, for otherwise the outcome distribution would not be comonotone. So at most one of the actions in the support of the outcome distribution can be induced by interior posterior beliefs. However, it is clear that in order for the Sender to benefit from persuasion, she must induce both $a_2$ and $a_3$, both of which can only happen when the Receiver holds interior posterior beliefs. As a result, no stable outcome distribution can make the Sender better off.

\end{example}

\Cref{example: alignednocommunication} above shows that besides the co-modularity of preferences, additional conditions are needed in order to ensure the Sender can benefit from credible persuasion. \cref{proposition: alignedpreference} below offers several such conditions.

Let $\overline{a} \equiv \max A$ and $\underline{a} \equiv \min A$ denote the highest and lowest Receiver actions, and let $\overline{\theta} \equiv \max\Theta$ and $\underline{\theta} \equiv \min\Theta$ denote the highest and lowest states.

\begin{proposition}\label{proposition: alignedpreference}
Suppose both $u_S$ and $u_R$ are supermodular. 
\begin{enumerate}
    \item If the highest action is dominant for the Sender, that is,  if $u_S(\theta,\overline{a} ) > u_S(\theta,{a} )$ for all $\theta$ and $a\ne \overline{a}$, then for generic priors,\footnote{Formally, by generic we mean a set of priors $T\subset\Delta(\Theta) $ with the same Lebesgue measure as $\Delta(\Theta)$.} the Sender benefits from credible persuasion as long as she benefits from persuasion.
    \item If the Sender favors extreme actions in extreme states, that is, if $u_S(\overline{\theta},\overline{a} ) > u_S(\overline{\theta}, {a} )$ for all $a\ne \overline{a}$ and $u_S(\underline{\theta},\underline{a} ) > u_S(\underline{\theta}, {a} )$ for all $a\ne \underline{a}$, then for generic priors, the Sender benefits from credible persuasion.
    \item If the Sender is strictly better off from a fully revealing outcome than from every no-information outcome, then the Sender benefits from credible persuasion.
\end{enumerate}
\end{proposition}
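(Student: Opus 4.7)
The plan is to use \cref{lemma: comonotone} in all three parts: since $u_S$ is supermodular, it suffices to construct a comonotone, $u_R$-obedient outcome distribution that strictly improves upon every no-information outcome, at which point \cref{theorem-cyclical} together with \cref{lemma: comonotone} delivers stability. A standing tool I would rely on is that supermodularity of $u_R$, combined with the assumption that every action is a best response at some belief, implies by monotone comparative statics that $\overline{a}$ lies in the Receiver's best-response set at the degenerate belief $\delta_{\overline{\theta}}$, and $\underline{a}$ lies in the best-response set at $\delta_{\underline{\theta}}$. Moreover, for generic priors $\mu_{0}$, the Receiver's best response $a_0$ to the prior is unique and remains so under small perturbations of the posterior.

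For Part 3, I would pick a Sender-preferred monotone selection $a^{*}(\theta)\in\argmax_{a\in B(\theta)}u_S(\theta,a)$ from the Receiver's best-response correspondence $B(\theta)=\argmax_{a}u_R(\theta,a)$. A Topkis-type argument---using strong-set-order monotonicity of $B(\cdot)$ (from supermodularity of $u_R$) together with supermodularity of $u_S$---ensures such a monotone selection exists. The associated fully revealing outcome $\hat\pi$ is then comonotone, hence $u_S$-cyclically monotone by \cref{lemma: comonotone}, and trivially $u_R$-obedient; so $\hat\pi$ is stable. Since $a^{*}$ maximizes the Sender's payoff across all fully revealing outcomes (the maximization being a linear program whose extreme points are pointwise selections), and by hypothesis some fully revealing outcome strictly beats every no-information outcome, $\hat\pi$ does too.

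For Part 1, the Sender's benefit from persuasion combined with strict Sender-dominance of $\overline{a}$ forces $a_0\neq\overline{a}$---otherwise the no-information outcome already attains the Sender's state-wise maximum, contradicting the hypothesis. I would then construct a monotone binary test parametrized by small $p>0$: conditional on $\theta=\overline{\theta}$, send $m_H$ with probability $p/\mu_{0}(\overline{\theta})$ and $m_L$ otherwise; conditional on $\theta\neq\overline{\theta}$, always send $m_L$. Under $m_H$ the posterior is $\delta_{\overline{\theta}}$ so the Receiver plays $\overline{a}$; under $m_L$ the posterior lies in a small neighborhood of $\mu_{0}$, so by genericity the Receiver still plays $a_0$. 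The induced outcome is supported on $\{(\theta,a_0):\theta\in\Theta\}\cup\{(\overline{\theta},\overline{a})\}$ and is therefore comonotone (as $a_0<\overline{a}$); the Sender's payoff exceeds the unique no-information payoff by $p\cdot[u_S(\overline{\theta},\overline{a})-u_S(\overline{\theta},a_0)]>0$. Part 2 then follows an analogous three-message construction---small-probability reveals of $\overline{\theta}$ (via $m_H\mapsto\overline{a}$) and of $\underline{\theta}$ (via $m_L\mapsto\underline{a}$), with pooling otherwise. Genericity keeps $a_0$ the unique best response at the pooling message; since $\overline{a}\neq\underline{a}$, at least one of $a_0\neq\overline{a}$ or $a_0\neq\underline{a}$ holds, and the extreme-action assumption supplies a strict Sender gain $p_H[u_S(\overline{\theta},\overline{a})-u_S(\overline{\theta},a_0)]+p_L[u_S(\underline{\theta},\underline{a})-u_S(\underline{\theta},a_0)]>0$.

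The main obstacle is the optimality claim in Part 3: verifying that the monotone selection weakly dominates every other fully revealing outcome. This reduces to a rearrangement argument: for any two pairs $(\theta,a),(\theta',a')$ with $\theta<\theta'$ and $a>a'$, the strong-set-order monotonicity of $B(\cdot)$ yields $a\wedge a'\in B(\theta)$ and $a\vee a'\in B(\theta')$, so the reversal $(\theta,a')$, $(\theta',a)$ remains feasible; supermodularity of $u_S$ then gives $u_S(\theta,a')+u_S(\theta',a)\geq u_S(\theta,a)+u_S(\theta',a')$, making the swap Sender-weakly-improving. Finite iteration produces a comonotone selection without decreasing the Sender's payoff. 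A secondary subtlety common to Parts 1 and 2 is quantifying how small $p$ (or $p_H, p_L$) must be so that the pooling posterior remains inside the uniqueness region of $a_0$---this is precisely where the genericity qualifier enters.
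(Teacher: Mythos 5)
Your proposal is correct, and for Parts 1 and 2 it follows essentially the same belief-splitting approach as the paper: split off a small mass at an extreme state (or both extremes in Part 2), invoke genericity to keep the pooling action unchanged, invoke supermodularity of $u_R$ to place $\overline{a}$ (resp.\ $\underline{a}$) in the best-response set at $\delta_{\overline{\theta}}$ (resp.\ $\delta_{\underline{\theta}}$), and observe that the resulting outcome is comonotone and therefore stable by \cref{lemma: comonotone} and \cref{theorem-cyclical}. Your Part 2 construction reveals both extremes at once and then argues that at least one of the two gains is strictly positive; the paper instead performs a case split (reveal $\overline{\theta}$ if $a^\circ\neq\overline{a}$, else reveal $\underline{\theta}$). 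Both are fine; yours is a minor variant.

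For Part 3 your route is genuinely different and arguably cleaner than the paper's. The paper starts from an arbitrary fully revealing outcome that beats no-information, then performs a finite sequence of pairwise swaps of non-comonotone pairs, using strong-set-order monotonicity of $B(\cdot)$ (so obedience survives) and supermodularity of $u_S$ (so each swap weakly improves the Sender), terminating at a comonotone, obedient, fully revealing outcome. You instead observe directly that the Sender-optimal fully revealing outcome can be found by a pointwise, state-by-state maximization over $B(\theta)$, and that a Topkis/Milgrom--Shannon argument---$B(\cdot)$ is nondecreasing in the strong set order since $u_R$ is supermodular, and $u_S$ is supermodular---yields a monotone selection from $\theta\mapsto\argmax_{a\in B(\theta)}u_S(\theta,a)$. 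Comonotonicity is then automatic, and since this selection achieves the Sender's best fully revealing payoff, the hypothesis delivers the strict improvement over no information. What your approach buys is that no iteration or termination argument is needed; what the paper's buys is that it works with any starting fully revealing outcome without appealing explicitly to the parameterized monotone-comparative-statics theorem (it only uses the set-order monotonicity of $B(\cdot)$, not MCS of the constrained Sender problem). Your closing ``rearrangement'' paragraph is redundant given the MCS argument---the pointwise argmax is already Sender-optimal by separability, and its monotone selection exists by Topkis---but it is a valid fallback and is in fact the paper's actual argument.
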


The first condition in \cref{proposition: alignedpreference} is satisfied in settings like the school example, where the school would always want to place a student regardless of the student's ability. The second condition is applicable in environments where both parties have agreement on extreme states. For example, both doctors and patients favor aggressive treatment if the patient's condition is severe, and both favor no treatment if the patient is healthy, but they might disagree in intermediate cases. Lastly, a special case of the third condition is quadratic loss preferences as commonly used in models of communication \citep[e.g.][]{crawford1982strategic}.

The first two parts of \autoref{proposition: alignedpreference} are based on belief splitting. Let us briefly describe the proof under the first condition; the proof for the second part follows similar arguments. Note that if $\overline{a}$ is a dominant action for the Sender, and the Sender can benefit from persuasion (under full commitment), then $\overline{a}$ must not already be a best response for the Receiver under the prior $\mu_{0}$. The Sender can then split the prior into a point mass posterior $\delta_{\overline{\theta}}$ and some other posterior $\tilde{\mu}$ that is close to $\mu_{0}$. At $\delta_{\overline{\theta}}$, the Receiver is induced to choose $\overline{a}$ since his payoff is supermodular. In addition, for generic priors the Receiver's best response to $\tilde{\mu}$ remains the same as his best response to $\mu_{0}$. The Sender benefits from this belief-splitting since the same action is still played most of the time, but  in addition her favorite action is now played with positive probability. Moreover, the resulting outcome distribution matches higher states with higher actions, so it is stable due to the supermodularity of $u_S$ and \autoref{lemma: comonotone}.

The intuition for the third part of \autoref{proposition: alignedpreference} is straightforward to see when the Sender's payoff is strictly supermodular. 
Consider $(\theta,a)$ and $(\theta', a')$ in the support of a fully revealing outcome distribution $\pi$, so
$a$ and $a'$ best respond to $\theta$ and $\theta'$, respectively. From \citet{topkis2011}, it follows that $a\geq a'$ if $\theta>\theta'$.
Therefore, $\pi$ is comonotone and satisfies $u_S$-cyclical monotonicity by \autoref{lemma: comonotone}. By construction, $\pi$ also satisfies obedience, so $\pi$ is stable by \autoref{theorem-cyclical}.

\paragraph{When Credibility Imposes No Cost to the Sender:} In \cref{observation: additive}, we see that when the Sender's payoff is additively separable, credibility does not restrict the set of stable outcomes. \cref{proposition: no cost} below provides a condition which guarantees that credibility imposes no loss on the Sender's optimal value, even when credibility does restrict the set of stable outcomes. 

\begin{proposition} \label{proposition: no cost}
Suppose $|A|=2$. If both $u_S$ and $u_R$ are supermodular, then at least one optimal full-commitment outcome is stable; if in addition $u_S$ is strictly supermodular, then every optimal full-commitment outcome is stable.
\end{proposition}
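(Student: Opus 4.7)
My plan is to exploit the fact that when $|A|=2$, the $u_S$-cyclical monotonicity condition in \cref{theorem-cyclical} reduces, via \cref{lemma: comonotone}, to the very tractable condition of comonotonicity: there is a threshold state below which only $\underline{a}$ is played and above which only $\overline{a}$ is played (with both possibly played at the threshold). Since every full-commitment outcome is already $u_R$-obedient by definition, the task reduces to showing that starting from any optimal full-commitment outcome $\pi^{*}$, one can produce a comonotone outcome with the same marginals that is still obedient and yields the Sender at least as much payoff (and strictly more if $\pi^*$ is not already comonotone and $u_S$ is strictly supermodular).

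The main tool is a \emph{pairwise swap}. Suppose $\pi^*$ is not comonotone, so there exist $\theta<\theta'$ with $\pi^*(\theta,\overline{a})>0$ and $\pi^*(\theta',\underline{a})>0$. For $\varepsilon>0$ small, form $\pi'$ by subtracting $\varepsilon$ from $\pi^*(\theta,\overline{a})$ and $\pi^*(\theta',\underline{a})$ and adding $\varepsilon$ to $\pi^*(\theta,\underline{a})$ and $\pi^*(\theta',\overline{a})$. By construction, $\pi'$ preserves both marginals (on $\Theta$ and on $A$). The change in $U_S$ equals $\varepsilon\bigl[u_S(\theta',\overline{a})+u_S(\theta,\underline{a})-u_S(\theta,\overline{a})-u_S(\theta',\underline{a})\bigr]\geq 0$ by supermodularity of $u_S$, with strict inequality when $u_S$ is strictly supermodular.

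The delicate check is obedience. Because $u_R$ is supermodular, $\Delta(\theta)\equiv u_R(\theta,\overline{a})-u_R(\theta,\underline{a})$ is non-decreasing in $\theta$. The swap shifts mass in the conditional distribution $\pi^*(\cdot\mid\overline{a})$ from $\theta$ up to $\theta'$, producing a stochastically larger conditional; so $\mathbb{E}[\Delta(\theta)\mid\overline{a}]$ weakly increases, preserving obedience for $\overline{a}$. Symmetrically, the swap shifts mass in $\pi^*(\cdot\mid\underline{a})$ from $\theta'$ down to $\theta$, producing a stochastically smaller conditional, so $\mathbb{E}[-\Delta(\theta)\mid\underline{a}]$ weakly increases, preserving obedience for $\underline{a}$. (If the swap zeros out $\pi(\Theta,a)$ for some action, the corresponding obedience constraint becomes vacuous, but the action marginal is preserved, so this does not occur.)

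Now iterate: taking $\varepsilon$ as large as possible at each step eliminates at least one pair $(\theta,\overline{a})$ or $(\theta',\underline{a})$ from the support, so the process terminates in finitely many steps at a comonotone distribution $\tilde{\pi}$ that is obedient, has the same marginals as $\pi^*$, and satisfies $U_S(\tilde{\pi})\geq U_S(\pi^*)$. Since $\pi^*$ is an optimal full-commitment outcome, the inequality must be an equality, so $\tilde{\pi}$ is also optimal; by \cref{lemma: comonotone} and \cref{theorem-cyclical}, it is stable, proving the first claim. For the second claim, if $u_S$ is strictly supermodular and $\pi^*$ were not comonotone, the first swap would already give $U_S(\tilde{\pi})>U_S(\pi^*)$, contradicting optimality of $\pi^*$; hence every optimal full-commitment outcome is already comonotone, and therefore stable. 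The only obstacle worth flagging is the simultaneous verification that a single swap preserves both obedience constraints, which is precisely where the supermodularity of $u_R$ is used.
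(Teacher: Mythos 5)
Your proposal follows exactly the route the paper sketches in the paragraph preceding \cref{proposition: no cost}, except that where the paper defers the technical step to Theorem~1 of \citet{mensch2019monotone}, you supply a self-contained swap argument. The key verifications---that each swap preserves both marginals, weakly raises $U_S$ by supermodularity of $u_S$, and preserves $u_R$-obedience because $\Delta(\theta)=u_R(\theta,\overline a)-u_R(\theta,\underline a)$ is nondecreasing and the swap shifts each conditional $\pi(\cdot\mid a)$ in the ``right'' FOSD direction---are all correct, and the appeal to \cref{lemma: comonotone} and \cref{theorem-cyclical} to conclude stability matches the paper. This is a perfectly legitimate and slightly more elementary way to obtain the same result.

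One step is stated too quickly. You claim that ``taking $\varepsilon$ as large as possible at each step eliminates at least one pair $(\theta,\overline a)$ or $(\theta',\underline a)$ from the support, so the process terminates in finitely many steps.'' Removing one non-comonotone pair is not enough on its own: the swap also deposits mass on $(\theta,\underline a)$ and $(\theta',\overline a)$, and these newly positive cells can generate fresh non-comonotone pairs (e.g.\ $(\theta,\underline a)$ together with some $(\theta'',\overline a)$ for $\theta''<\theta$). The count of non-comonotone pairs is therefore not guaranteed to decrease under an arbitrary choice of $(\theta,\theta')$. The fix is to always swap the \emph{outermost} non-comonotone pair, taking $\theta=\min\{\tilde\theta: \pi(\tilde\theta,\overline a)>0\}$ and $\theta'=\max\{\tilde\theta: \pi(\tilde\theta,\underline a)>0\}$; with this choice the new cells $(\theta,\underline a)$ and $(\theta',\overline a)$ cannot be part of any non-comonotone pair, and each swap either strictly raises $\min\supp\pi(\cdot,\overline a)$ or strictly lowers $\max\supp\pi(\cdot,\underline a)$. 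Since $\Theta$ is finite, the quantity $\max\supp\pi(\cdot,\underline a)-\min\supp\pi(\cdot,\overline a)$ can only take finitely many values and strictly decreases at each step, so the iteration terminates at a comonotone distribution. With this refinement your argument is complete.
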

\cref{proposition: no cost} says that in setting where both players have supermodular payoffs and the Receiver faces a binary decision, such as ``accept'' or ``reject'', then credibility imposes no cost to the Sender. This result follows from combining our \cref{theorem-cyclical} and \cref{lemma: comonotone} with Theorem 1 in \citet{mensch2019monotone}. He shows that under the assumptions in our \cref{proposition: no cost}, there exists a optimal full-commitment outcome that is comonotone. The intuition is that for any outcome distribution $\pi$ that is $u_R$-obedient but not comonotone, the Sender can weakly improve her payoff by swapping the non-comonotone pairs in the support of $\pi$, so that they become matched assortatively. Such swapping also benefits the Receiver due to the supermodularity of $u_R$, so $u_R$-obedience remains satisfied. As a result, the Sender can always transform a non-comonotone outcome distribution into one that is comonotone without violating $u_R$-obedience, while weakly improving her own payoff. Therefore, there must be an optimal full-commitment outcome that is comonotone, which is also stable by \autoref{theorem-cyclical} and \autoref{lemma: comonotone}.

\section{Application: The Market for Lemons} \label{section: applications}

A classic insight from \citet*{Akerlof1970} is that in markets with asymmetric information, adverse selection can lead to dramatic efficiency loss. In practice, buyers and sellers often rely on warranty or third-party certification to overcome this inefficiency. A seemingly more direct solution to their predicament is for the seller to fully reveal her private information, so that there is no information asymmetry between players.  In this section, however, we show that this apparently easy fix to the adverse selection problem relies on unrealistic assumptions on the seller's ability to commit. Indeed, we show that any information disclosure that improves efficiency cannot be credible.

To fix ideas, we adapt the formulation in \cite{mas1995microeconomic} and consider a seller who values an asset she owns (say, a car) at $\theta \in \Theta\subset [0,1]$; two buyers (1 and 2) both value the car at $v(\theta)$  which is increasing in $\theta$. Buyers share a common prior belief $\mu_{0}\in\Delta(\Theta)$. We assume $v(\theta)>\theta$ for all $\theta\in \Theta$ so there is common knowledge of gain from trade. Moreover, we assume $E_{\mu_0}[v(\theta)]<1$, so that without information disclosure, some cars will not be traded due to adverse selection. Below we first describe the base game without information disclosure, then augment the base game to allow the seller to choose a test to influence the buyers' belief.

\paragraph{The Base Game $G$:} The seller and the buyers move simultaneously. The seller learns her value and chooses an ask price $a_s\in A_S=[0,v( 1)]$; each buyer $i=1,2$ chooses a bid $b_i\in A_i = [0, v(1) ]$. If the ask price is lower than or equal to the highest bid, the car is sold at the highest bid to the winning buyer, and ties are broken evenly. If the ask price is higher than the highest bid, the seller keeps the car and receives the reserve value $\theta$, while both buyers get $0$. More formally, the seller's payoff function is
\[u_S(\theta,a_S,b_1,b_2)=\begin{cases}
\max\{b_1,b_2\}&\text{ if }a_S\leq \max\{b_1,b_2\}\\
\theta&	\text{ if }a_S> \max\{b_1,b_2\}
\end{cases}
\]
and buyer $i$'s payoff is
\[u_i(\theta,a_S,b_1,b_2)=\begin{cases}
v(\theta)-b_i&\text{ if }b_i>b_{-i}\text{ and }b_i\geq a_S\\
\frac{1}{2}	[v(\theta)-b_i]&\text{ if }b_i=b_{-i}\text{ and }b_i\geq a_S\\
0&\text{ otherwise.}
\end{cases}
\]

\paragraph{The Game with Disclosure:}
Before the base game is played, the seller chooses a test $\lambda$ to publicly disclose information to the buyers.\footnote{In our setting, $\lambda$ determines only the buyers' information structure, and the seller is perfectly informed about $\theta$. That is, the seller 
cannot prevent herself from learning the true quality of the car. This differs from \citet{kartik2019lemonade}, who fully characterize payoffs in the market for lemons under all possible information structures.}
Together the test $\lambda$ and the base game $G$ defines a Bayesian game $\langle G,\lambda \rangle$. Every message $m$ from the test $\lambda$ induces a posterior belief $\mu_m \equiv \lambda(\cdot|m) \in \Delta(\Theta)$ for the buyers. The buyers $i=1,2$ choose their respective bids $\beta_i(m)$, while the seller choose an ask price $\alpha_S(\theta,m)$. We restrict attention to Bayesian Nash equilibria where the seller plays her \emph{weakly dominant strategy} $\alpha_S(\theta,m)=\theta$.  As we show in \cref{lemma: BNEexist}, such equilibria exist in $\langle G,\lambda\rangle$ for every $\lambda$. These equilibria also give rise to the familiar fixed-point characterization of equilibrium price:  buyers' bids satisfy
\begin{equation*}
\max \big\{ \beta_1(m), \beta_2(m) \big\} =  E_{\mu_m} \big[v(\theta)|\theta\leq \max\big\{\beta_1(m),\beta_2(m) \big\} \big].
\end{equation*}

The trading game above differs from the Sender-Receiver setting in \cref{section: model} in two ways: first, the Sender in the current setting publicly discloses information to multiple Receivers; second, in addition to the Receivers, the Sender also chooses an action (ask price) after observing the realization of the test. Nevertheless, the notion of stable outcome distribution extends to the current setting. In particular, the credibility notion is based on the same idea that the Sender cannot profitably deviate to a different test without changing the message distribution. The Receiver Incentive Compatible (R-IC) condition, meanwhile, is replaced by a new IC condition that asks both the Sender and Receivers to play according to a Bayesian Nash equilibrium in $\langle G,\lambda \rangle$. 
As mentioned above, in the market for lemons we will focus on a special class of 
Bayesian Nash equilibria where the seller plays her \textit{weakly dominant strategy} $\alpha_S(\theta,m)=\theta$ in the game $\langle G,\lambda\rangle$. We will call such profiles $(\lambda,\sigma)$ WD-IC to distinguish from the weaker IC requirement. The formal discussion of our credibility notion in this multiple-Receiver setting is notationally cumbersome, and is deferred to \cref{section: games}.

Next we state our result,  discuss its implications, and provide intuition for its proof. As a benchmark, fix an arbitrary message ${m}_0\in M$, and let $\lambda_0 \equiv \mu_{0}\times \delta_{{m}_0}$ be a null information structure. Let $R_0$ denote the supremum of the seller's payoffs among profiles $(\lambda_0, \sigma)$ that are WD-IC, so $R_0$ represents the highest equilibrium payoff the seller can achieve when providing no information.

\begin{proposition}\label{prop: adverse selection}
Under every credible and WD-IC profile, the seller's payoff is no more than $R_0$.
\end{proposition}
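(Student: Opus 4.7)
Fix a credible, WD-IC profile $(\lambda,\sigma)$. Because the seller plays the weakly dominant strategy $\alpha_S(\theta,m)=\theta$, trade at message $m$ occurs iff $\theta\le p(m):=\max\{\beta_1(m),\beta_2(m)\}$, and at every trading message the Bertrand-competition fixed-point condition
\[
p(m)\,\mu_m(\theta\le p(m))=\sum_{\theta\le p(m)}\mu_m(\theta)v(\theta)
\]
holds. The plan is to use pairwise swaps to constrain the placement of trading types, then bound the seller's payoff in closed form, and finally locate a null-information equilibrium that weakly dominates it.

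\medskip

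\emph{Structural constraints from credibility.} Suppose $m_1,m_2$ satisfy $p(m_1)<p(m_2)$ and $\theta_1\le p(m_1)$ is a trading type at $m_1$. For any $\theta_2\in\supp\mu_{m_2}$ with $\theta_2>p(m_1)$, the deviation that swaps $\varepsilon$ mass between $(\theta_1,m_1)$ and $(\theta_2,m_2)$ preserves the message distribution and strictly raises the seller's payoff---by $\varepsilon(\theta_2-p(m_1))>0$ when $\theta_2\le p(m_2)$ and by $\varepsilon(p(m_2)-p(m_1))>0$ when $\theta_2>p(m_2)$---so credibility forces $\supp\mu_{m_2}\subseteq[0,p(m_1)]$. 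Writing $p^*:=\min\{p(m):m\text{ trades}\}$ and letting $\theta^*$ be the largest state with positive unconditional trading probability $T(\theta)$, an analogous swap between a trading pair $(\theta_2,m_1)$ with $\theta_2=\theta^*$ and a pair $(\theta_1,m_2)$ at a non-trading message $m_2$ with $\theta_1<\theta^*$ yields a gain of $\varepsilon(\theta_2-\theta_1)>0$. Hence $T(\theta)=1$ for every $\theta<\theta^*$ and $\theta^*\le p^*$.

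\medskip

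\emph{Payoff bound and conclusion.} Summing the Bertrand identity (extended by $\sum_\theta\mu_m(\theta)\theta$ for non-trading messages) weighted by $\lambda_M(m)$ over all messages yields
\[
U_S(\lambda,\sigma)=E_{\mu_0}[\theta]+\sum_\theta\mu_0(\theta)\bigl(v(\theta)-\theta\bigr)T(\theta)\le g(p^*),
\]
where $g(p):=\sum_{\theta\le p}\mu_0(\theta)v(\theta)+\sum_{\theta>p}\mu_0(\theta)\theta$; the bound uses $v(\theta)>\theta$, $T(\theta)\le 1$, and $T(\theta)=0$ for $\theta>p^*$. Applying $p(m)\ge p^*$ to the same sum gives the weighted inequality $\sum_{\theta\le p^*}\mu_0(\theta)v(\theta)T(\theta)\ge p^*\sum_{\theta\le p^*}\mu_0(\theta)T(\theta)$. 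Because $p^*$ is a conditional average of $v$ over the trading-type support of $\mu_{m^*}$, which is contained in $[0,\theta^*]$, we have $v(\theta^*)\ge p^*$; by monotonicity of $v$, $v(\theta)\ge p^*$ throughout $[\theta^*,p^*]$, so adding back the mass of non-trading types on $[\theta^*,p^*]$ preserves the inequality and yields $E_{\mu_0}[v\mid\theta\le p^*]\ge p^*$. Let $\phi(b):=E_{\mu_0}[v\mid\theta\le b]-b$. Then $\phi(p^*)\ge 0$; $\phi$ is strictly decreasing on each interval between successive elements of $\Theta$ and jumps weakly upward at each $\theta\in\Theta$ because $v$ is increasing; and $\phi(b)<0$ for $b$ sufficiently large, since $E_{\mu_0}[v]<1$. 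A step-function intermediate-value argument then produces a null-information equilibrium price $\tilde b\ge p^*$ at which the seller's WD-IC payoff equals $g(\tilde b)\le R_0$, and monotonicity of $g$ delivers $U_S\le g(p^*)\le g(\tilde b)\le R_0$. The main obstacle is the pairwise-swap enumeration in the structural step, particularly the configurations involving non-trading messages needed to pin down $\theta^*\le p^*$ and $T(\theta)=1$ for $\theta<\theta^*$; once these facts are in hand, the remaining steps are bookkeeping with the Bertrand identity and standard monotonicity arguments.
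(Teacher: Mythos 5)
Your overall strategy matches the paper's: use pairwise ``undetectable'' swaps (the length-2 instance of $u_S$-cyclical monotonicity) to force a common trading threshold across messages, aggregate the Bertrand fixed-point identity over messages to bound the seller's profit by $g(p^*)$, and then exhibit a null-information equilibrium price $\tilde b\ge p^*$ whose profit $g(\tilde b)$ weakly dominates. The paper does all of this more compactly via its Lemma A.8 ($\Theta(\hat m)\subseteq[0,\underline p]$ whenever $p(\hat m)>\underline p$), the identity $\phi_{\mu_m}(\underline p)=p(m)\ge\underline p$ summed over messages, and Tarski's theorem; your $T(\theta)$ and $\theta^*$ bookkeeping is heavier, and Tarski is cleaner than the step-function IVT argument (which, to deliver a genuine equilibrium price, must take the \emph{largest} root $\tilde b$ so that $\phi(b)<0$ for $b>\tilde b$, i.e.\ so buyers cannot profit by bidding above $\tilde b$).

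There is one genuine, though repairable, gap. You derive $T(\theta)=1$ for $\theta<\theta^*$ from a swap involving ``a pair $(\theta_1,m_2)$ at a non-trading message $m_2$.'' But in any WD-IC profile there are no non-trading messages: Lemma~A.6's Bertrand fixed-point condition $\phi_{\mu_m}(p(m))=p(m)$ forces $p(m)\ge\underline\theta_{\mu_m}$ for every on-path $m$ (otherwise $\phi_{\mu_m}(p(m))=v(\underline\theta_{\mu_m})>p(m)$ and a buyer would deviate upward), so each message has some trading types. As written, the second swap is vacuous and does not establish $T(\theta)=1$. The conclusion is nonetheless true and can be recovered from what you have: since every message trades, $p(m)\ge p^*$ for all $m$, so every $\theta\le p^*$ trades at whichever message it is matched with, i.e.\ $T(\theta)=1$ for all $\theta\le p^*$ (which subsumes your claim), and then any residual ``added-back'' mass lies strictly above $p^*$ where it is irrelevant. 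The fix is to phrase the second swap in terms of a non-trading \emph{pair} $(\theta_1,m_2)$ with $\theta_1>p(m_2)$ (the same gain computation goes through regardless of whether $m_2$ has other trading types), or simply to drop the second swap and observe the two facts above. Similarly, your ``hence $\theta^*\le p^*$'' follows from the first swap claim (not the second): if some trading message $m_a$ had $p(m_a)<\theta^*$, the first claim would force $\supp\mu_{m_b}\subseteq[0,p(m_a)]$ for any $m_b$ at which $\theta^*$ trades, contradicting $\theta^*\in\supp\mu_{m_b}$. Finally, you implicitly treat $\Theta$ as finite (writing $\sum_\theta$); the paper works with possibly infinite $\Theta\subset[0,1]$ and uses the $v_S$-cyclical-monotonicity set $E$ of full $\lambda^*$-measure precisely to make the pointwise swap argument measure-theoretically legitimate.
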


\autoref{prop: adverse selection} implies that any information that can be credibly disclosed is not going to improve the seller's payoff compared to the no-information benchmark. This is in sharp contrast to the full-commitment case, where the seller would like to fully reveal the car's quality, and all car types $\theta$ are sold at $v(\theta)$, which would allow the seller to capture all surplus from trade.

Let us describe the intuition behind the proof for \autoref{prop: adverse selection}. For each message $m$ from the seller's test $\lambda$, let $\Theta(m)$ denote the support of the buyer's posterior belief after observing $m$. A key step in proving \autoref{prop: adverse selection} is to show that there exists a common trading threshold $\tau$ such that for each message $m$, a car of quality $\theta\in \Theta(m)$ is traded if and only if $\theta\le \tau$. To see why, suppose towards a contradiction that the trading threshold in message $m$ is higher than the threshold in another message $m'$. We show in the proof that the seller would then have a profitable deviation by swapping some of the cars slightly below the higher threshold in message $m$ with an equal amount of cars from $m'$ that are slightly above its lower threshold.\footnote{This deviation is profitable because it allows the seller to replace the higher-quality cars traded in $m$ with the lower-quality, untraded cars in $m'$. After this swapping, the lower-quality cars are now sold at the price for the higher-quality cars in $m$, while the higher-quality cars are now retained by the seller in $m'$.} Because this deviation does not change the seller's message distribution, it is also undetectable. Therefore, credibility demands a common threshold $\tau$ that applies across messages. Given this common threshold $\tau$, we then apply Tarski's fixed-point theorem to show that when no information is disclosed, there is an equilibrium that features a higher trading threshold $\tau'\geq \tau$. Since a higher threshold means more cars are being traded, which in turn increases the seller's payoff, the seller's payoff under every stable outcome is therefore weakly dominated by her payoff from a no-information outcome, and this proves our result.

\section{Discussion} \label{section: discussions}

\subsection{Relationship to \citet{rochet1987necessary}} \label{section: relationship}

The $u_S$-cyclical monotonicity condition in our characterization closely resembles the cyclical monotonicity condition for implementing transfers in \cite{rochet1987necessary}. The reader might wonder why cyclical monotonicity arises in our setting despite the lack of transfers. The connection is best summarized by the following three equivalent conditions from optimal transport theory (see, for example, Theorem 5.10 of \citet{villani}).

\begin{thmot}
Suppose $X$ and $Y$ are both finite sets, and $u:X \times Y \rightarrow \mathbb{R}$ is a real-valued function. Let $\mu$ be a probability measure on $X$ and $\nu$ be a probability measure on $Y$, and $\Pi(\mu,\nu)$ be the set of probability measures on $X\times Y$ such that the marginals on $X$ and $Y$ are $\mu$ and $\nu$, respectively. Then for any $\pi^*\in \Pi(\mu,\nu)$, the following three statements are equivalent:
\begin{enumerate}[itemsep=0em, topsep=0.3em]
        \item $\pi^*\in \argmax_{\pi\in\Pi(\mu ,\nu)} \sum_{x,y} \pi(x,y)u(x,y)$;
        \item $\pi^*$ is $u$-cyclically monotone. That is, for any $n$ and $(x_1,y_1),...,(x_n,y_n)\in \supp(\pi^*)$,
        \[\sum_{i=1}^n u(x_i,y_i)\geq \sum_{i=1}^n u(x_i,y_{i+1})\]
    \item There exists $\psi:Y\rightarrow \mathbb{R}$ such that for any $(x,y)\in\supp(\pi^*)$ and any $y'\in Y$,\footnote{This statement can also be equivalently written as: there exists $\phi:X\rightarrow \mathbb{R}$ and $\psi:Y\rightarrow \mathbb{R}$, such that $\phi(x)+\psi(y)\geq u(x,y)$ for all $x$ and  $y$, with equality for $(x,y)$ in the support of $\pi^*$.}
    \[
    u(x,y)-\psi(y)\geq u(x,y')-\psi(y').
    \]

\end{enumerate}
\end{thmot}

Our \autoref{theorem-cyclical} builds on the equivalence between 1 and 2 in the Kantorovich duality theorem above to show the equivalence between credibility  and $u_S$-cyclical monotonicity. 

\citet{rochet1987necessary}'s classic result on implementation with transfers follows from the equivalence between 2 and 3. To see this, consider a principal-agent problem where the agent's private type space is $\Theta$ with full-support prior $\mu_0$, and the principal's action space is $A$. The agent's  payoff is $u(\theta,a)-t$, where $t$ is the transfer she makes to the principle. Given an allocation rule $q:\Theta\rightarrow A$, let $v_q(\theta,\theta') \equiv u(\theta,q(\theta'))$ denote the payoff that a type-$\theta$ agent obtains from the allocation intended for type $\theta'$. Let $X=Y = \Theta$ and $\mu = \nu = \mu_0$ in the Kontorovich Duality theorem above, and consider the distribution $\pi^*\in \Pi(\mu,\nu)$ defined by 
\begin{equation*}
\pi^*(\theta,\theta')=
\begin{cases}
\mu_0(\theta) & \text{ if } \theta=\theta'\\
0 & \text{ otherwise} 
\end{cases}
\end{equation*}
By the equivalence of 2 and 3 in the Kantorovich duality theorem, $\pi^*$ is $v_q$-cyclically monotone if and only if there exists $\psi:\Theta\rightarrow \mathbb{R}$ such that for all $\theta,\theta'\in\Theta$,
$v_q(\theta,\theta)-\psi(\theta)\geq v_q(\theta,\theta')-\psi(\theta')$. That is,
\[u(\theta,q(\theta))-\psi(\theta)\geq u(\theta,q(\theta'))-\psi(\theta'),\]
so the allocation rule $q$ can be implemented by the transfer rule $\psi:\Theta\rightarrow \mathbb{R}$. The $v_q$-cyclical monotonicity condition says that for every sequence $\theta_1,...,\theta_n\in\Theta$ with $\theta_{n+1}\equiv \theta_1$,
\[
\sum_{i=1}^n u(\theta_i,q(\theta_i))\geq \sum_{i=1}^n u(\theta_i,q(\theta_{i+1})).
\]
This is exactly the cyclical monotonicity condition in \citet{rochet1987necessary}.

When $X=\Theta$ is interpreted as the set of an agent's true types and $Y=\Theta$ interpreted as the set of reported types, the distribution $\pi^*$ constructed in the previous paragraph can be interpreted as the agent's truthful reporting strategy. Based on this interpretation, \citet{rahman2010detecting} uses the duality between 1 and 3 to show that the incentive compatibility of truthful reporting subject to quota constraints is equivalent to implementability with transfers.

\subsection{Finite-Sample Approximation}\label{section: finite sample}

As discussed in \cref{section: modelsetup}, we interpret our model as one where the Sender designs a test that assigns scores to a large population of realized $\theta$'s; in particular, our model abstracts away from sampling variation, so there is no uncertainty in the population's realized type distribution. In this section we explicitly allow sampling variation by considering a finite-sample model where the Sender observes $N$ random i.i.d. draws from $\Theta$, and  assigns each realized $\theta$ a message $m\in M$, subject to certain message quotas---in particular, these message quotas substitute for the Sender's commitment to message distributions in the continuum model. We will show that credible and R-IC profiles in our continuum model are approximated by credible and R-IC profiles in the discrete model when sample size is large.

Consider a finite i.i.d sample of size $N$ drawn from the type space $\Theta$ according to the prior distribution $\mu_0$. The set of all possible empirical distributions over $\Theta$ in this $N$-sample captures the sampling variations in the realized type distribution, and can be written as
\[
\mathcal{F}_\Theta^N=\Big\{{f}/{N}:f\in \mathbb{N}^{|\Theta|},\sum_{\theta \in \Theta} f(\theta)=N\Big\}.
\]
The Sender assigns each realized $\theta$ in the $N$-sample a message $m\in M$, which leads to an $N$-sample of messages. Let
\[
\mathcal{F}_M^N= \Big\{{f}/{N}:f\in \mathbb{N}^{|M|},\sum_{m\in M} f (m) = N \Big\}
\]
denote the set of $N$-sample empirical distributions over messages. Lastly, for a pair of state and message distributions $(\mu, \nu)$, let
\begin{equation*}
X^N(\mu,\nu)=\Big \{{f}/{N}:f\in \mathbb{N}^{|\Theta|\times |M|}, \sum_{\theta} f(\theta,\cdot) = N \nu(\cdot), \sum_{m} f(\cdot,m)=N \mu (\cdot) \Big \}
\end{equation*}
denote the set of $N$-sample empirical joint distributions over states and messages that have marginals $\mu$ and $\nu$. Notice that $X^N(\mu,\nu)\neq \emptyset$ if and only if $\mu\in \mathcal{F}_\Theta^N$ and $\nu\in \mathcal{F}_M^N$.\footnote{Notice that for any $f\in\mathbb{N}^{|\Theta|\times |M|}$, the sum of any row or column has to be integer, so $X^N(\mu,\nu)=\emptyset$ if either $\mu\notin \mathcal{F}_\Theta^N$ or $\nu\notin \mathcal{F}_M^N$. On the other hand if $\mu\in F_{\Theta}^N$ and $\nu\in F_M^N$, a $\lambda\in X^N(\mu,\nu)$ can be constructed from the so-called Northwest corner rule.}

Let us now define the $N$-sample analogue of credible and R-IC profiles. We consider a Sender who assigns a message $m\in M$ to each realized $\theta\in \Theta$ subject to a message quota $\nu^N \in \mathcal{F}_M^N$. An $N$-sample profile is therefore a triple $(\nu^N, \phi^N, \sigma^N)$, where $\phi^N: \mathcal{F}_\Theta^N  \rightarrow \Delta(\Theta\times M)$ is a Sender's strategy that takes every realized empirical distribution over states $\mu^N\in \mathcal{F}^N_{\Theta}$ to a joint distribution $\phi^N(\mu^N) \in X^N(\mu^N,\nu^N)$; meanwhile, $\sigma^N: M\rightarrow A$ is a Receiver's strategy that assigns an action to each observed message.\footnote{Note that our formulation of the Sender's strategy assumes that the Sender conditions her strategy only on the empirical distribution of the realized $N$ samples, and ignores the identity of each individual sample point.}

The definitions of Sender credibility and Receiver incentive compatibility in the $N$-sample setting mirror those in our continuum model. In particular, we say an $N$-sample profile $(\nu^N, \phi^N, \sigma^N)$ is credible  if for each realized empirical distribution over $\Theta$, the Sender always chooses an optimal assignment of messages subject to the message quotas specified in $\nu^N$: $(\nu^N, \phi^N, \sigma^N)$ is credible if for every $\mu^N\in \mathcal{F}_{\Theta}^N$,
\begin{equation*}
    \phi(\mu^N)\in\argmax_{ \lambda^N \in X(\mu^N,\nu^N)}\sum_{\theta,m} \lambda^N(\theta,m) u_S(\theta,\sigma^N(m)).
\end{equation*}
We say the $N$-sample profile $(\nu^N, \phi^N, \sigma^N)$ is Receiver incentive compatible (R-IC) if $\sigma^N$ best-responds to the Sender's strategy $\phi^N$. In particular, let $P^N$ denote the probability distribution over $\mathcal{F}^N_{\Theta}$ induced by i.i.d. draws from the prior distribution $\mu_0\in \Delta(\Theta)$, and let $\phi^N(\theta,m|\mu^N )$ be the probability assigned to $(\theta,m)$ in the joint distribution $\phi^N(\mu^N)$ chosen by the Sender. The profile $(\nu^N, \phi^N, \sigma^N)$ is R-IC if
\begin{equation*}
    \sigma^N\in \argmax_{\sigma':M\rightarrow A} \sum_{\mu^N\in \mathcal{F}_{\Theta}^N} P^N(\mu^N)\sum_{\theta,m} \phi(\theta,m|\mu^N) u_R(\theta,\sigma'(m)).
\end{equation*}

\cref{proposition: finite} below shows that credible and R-IC profiles in the continuum model are approximated by credible and R-IC profiles in the $N$-sample model, provided $N$ is sufficiently large. 
Note that in the second statement in \cref{proposition: finite}, we distinguish a strictly credible profile $(\lambda^*,\sigma^*)$ in the continuum model as one where $\lambda^*$ is the unique maximizer in \cref{definition: credible}; similarly, $(\lambda^*,\sigma^*)$ is strictly R-IC if $\sigma^*$ is the unique maximizer in \cref{definition: RIC}. 

\begin{proposition}\label{proposition: finite}
\begin{enumerate}
\item Let $(\lambda^*,\sigma^*)$ be a profile in the continuum model. If for every $\varepsilon>0$, there exists a finite credible profile $(\nu^N, \phi^N, \sigma^N)$ for some sample size $N$, such that $|\nu^N-\lambda^*_M| < \varepsilon$, $|\sigma^N-\sigma^*|<\varepsilon$ and $P(|\phi^N(F_{\Theta}^N) - \lambda^* | < \varepsilon) > 1-\varepsilon$, then $(\lambda^*, \sigma^*)$ is credible and R-IC.
\item Suppose $(\lambda^*,\sigma^*)$ is a strictly credible and strictly R-IC profile in the continuum model, then for each $\epsilon>0$ there exists a finite-sample credible and R-IC profile $(\nu^N, \phi^N, \sigma^N)$ such that $|\nu^N - \lambda^*_M| <\varepsilon$, $|\sigma^N-\sigma^*|<\varepsilon$ and $P(|\phi^N(F_{\Theta}^N)-\lambda^*| < \varepsilon)>1-\varepsilon$.
\end{enumerate}
\end{proposition}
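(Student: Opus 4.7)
The proposal is a two-part approximation argument between the continuum and $N$-sample settings, and I would handle necessity (Part 1) by contraposition and sufficiency (Part 2) by explicit construction plus a linear-programming stability step.

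For Part 1, I would assume $(\lambda^*,\sigma^*)$ fails either credibility or R-IC and derive a contradiction with the existence of approximating finite credible profiles. For R-IC: if some $\sigma'$ strictly beats $\sigma^*$ against $\lambda^*$ by a margin $3\delta>0$, then for any approximating $(\nu^N,\phi^N,\sigma^N)$, the expected joint distribution $\sum_{\mu^N} P^N(\mu^N)\phi^N(\mu^N)$ lies within $O(\varepsilon)$ of $\lambda^*$ by the hypothesis, so $\sigma'$ still strictly beats $\sigma^N$ under $\phi^N$, contradicting finite R-IC. For credibility: if $\lambda'\in D(\lambda^*)$ satisfies $U_S(\lambda',\sigma^*) - U_S(\lambda^*,\sigma^*) \geq 3\delta$, then for a realized $\mu^N$ where $\phi^N(\mu^N)$ is $\varepsilon$-close to $\lambda^*$, I would construct an integer-valued $\tilde\lambda^N\in X^N(\mu^N,\nu^N)$ that is $O(|\Theta||M|/N+\varepsilon)$-close to $\lambda'$, yielding a strictly profitable Sender deviation within the quota $\nu^N$ and contradicting credibility of $\phi^N(\mu^N)$.

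The first nontrivial step is the lattice-approximation construction of $\tilde\lambda^N$: given the row marginal $\mu^N$ and the column marginal $\nu^N$, the target $\lambda'$ typically violates the $1/N$-integrality. I would handle this with a standard transport-network rounding (e.g.\ a Northwest-corner-type adjustment of $\lfloor N\lambda' \rfloor$, then repair the at most $|\Theta|+|M|-1$ remaining units of mismatch by local transfers), which perturbs the objective by $O(1/N)$ and preserves membership in $X^N(\mu^N,\nu^N)$. Continuity of $U_S$ in the joint distribution then lets the $3\delta$ gap absorb this perturbation.

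For Part 2, I would construct the profile explicitly. Set $\sigma^N=\sigma^*$, and let $\nu^N\in\mathcal{F}_M^N$ be a $1/N$-rounding of $\lambda^*_M$. For every realized $\mu^N\in\mathcal{F}_\Theta^N$, define
\[
\phi^N(\mu^N)\in \argmax_{\lambda^N\in X^N(\mu^N,\nu^N)}\;\sum_{\theta,m}\lambda^N(\theta,m)\, u_S(\theta,\sigma^*(m)),
\]
which is credible by construction. The law of large numbers gives $\mu^N\to \mu_0$ almost surely, and together with the rounding $\nu^N\to \lambda^*_M$ this yields $X^N(\mu^N,\nu^N)\to D(\lambda^*)$ in the Hausdorff sense on $\Delta(\Theta\times M)$. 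Invoking Berge's theorem on the finite-dimensional linear program and the hypothesis that $\lambda^*$ is the \emph{unique} maximizer in $D(\lambda^*)$ (strict credibility), I get $\phi^N(\mu^N)\to \lambda^*$ uniformly over $\mu^N$ in any shrinking neighborhood of $\mu_0$. Hence $P(|\phi^N(\mu^N)-\lambda^*|<\varepsilon)\to 1$. Averaging, the Receiver's induced joint distribution converges to $\lambda^*$, and strict R-IC of $\sigma^*$ transfers to R-IC of $\sigma^N=\sigma^*$ for all large $N$.

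The hard part is the LP-stability step: I must ensure that $X^N(\mu^N,\nu^N)$ is nonempty for typical realized $\mu^N$ (a combinatorial fact about rational transport polytopes, which follows from $\mu^N\in \mathcal{F}_\Theta^N$ and $\nu^N\in \mathcal{F}_M^N$), and that upper hemicontinuity of the LP optimizer, combined with uniqueness of $\lambda^*$ in the limit program, forces every finite-sample maximizer to concentrate near $\lambda^*$. Without strict credibility this would fail, which is why the uniqueness hypothesis is essential; everything else reduces to routine continuity and concentration estimates.
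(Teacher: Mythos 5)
Your overall architecture matches the paper's: Part~2 is proved by exactly the same construction (set $\sigma^N=\sigma^*$, round $\lambda^*_M$ to $\nu^N\in\mathcal{F}^N_M$, define $\phi^N$ as the quota-constrained Sender optimum, then use the law of large numbers together with uniqueness of the argmax to get concentration), and Part~1's R-IC leg is the same contrapositive-plus-concentration argument the paper uses, just without the explicit split of the sample space into the $\varepsilon$-good event and its complement. The one place you genuinely diverge is the credibility leg of Part~1: you propose to round the profitable continuum deviation $\lambda'$ directly onto the lattice $X^N(\mu^N,\nu^N)$ (Northwest-corner repair), whereas the paper instead uses continuity of the optimal-transport \emph{value} function $V(\mu,\nu)$ (its Lemma~6) to conclude that the continuous optimum over $\Lambda(\mu^N,\nu^N)$ still beats $\lambda'$ by nearly its old margin, and then passes to an integer extreme point via its Lemma~7 (extreme points of the transportation polytope with rational marginals are lattice points).

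That substitution creates a genuine gap. Your rounding bound is $O(|\Theta||M|/N+\varepsilon)$, so for the contradiction to go through you need $N\to\infty$ as $\varepsilon\to 0$. But the hypothesis of Part~1 only asserts that \emph{some} sample size $N$ works for each $\varepsilon$; it does not assert $N\to\infty$. (It does imply it in non-degenerate cases --- if $\mu_0$ is not supported on a single atom, the requirement $P(|\phi^N(F^N_\Theta)-\lambda^*|<\varepsilon)>1-\varepsilon$ forces $N$ up as $\varepsilon\downarrow 0$ --- but that is an argument you would need to make, and it has edge cases when $\lambda^*$ happens to be a rational matrix.) The paper's route sidesteps this entirely: continuity of $V$ and the extreme-point lemma give a profitable integer deviation in $X^N(\mu^N,\nu^N)$ for \emph{any} $N$ once $(\mu^N,\nu^N)$ is close to $(\mu_0,\lambda^*_M)$, with no $1/N$ term in sight. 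So to make your version airtight you should either (i) prove that the hypothesis forces $N\to\infty$ before invoking the rounding bound, or (ii) replace the rounding step with a comparison of LP values, as the paper does. Everything else in your outline is sound, though stating ``$X^N(\mu^N,\nu^N)\to D(\lambda^*)$ in the Hausdorff sense'' is loose: $X^N$ is a finite lattice set, and what you actually need is that the convex hull $\Lambda(\mu^N,\nu^N)$ converges and that the LP maximum over $X^N$ coincides with the LP maximum over the convex hull (the extreme-point lemma).
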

The first statement in \cref{proposition: finite} is analogous to the upper-hemicontinuity of Nash equilibrium correspondences: if a profile $(\lambda^*,\sigma^*)$ in the continuous model can be arbitrarily approximated by credible and R-IC profiles in the finite model, then profile $(\lambda^*,\sigma^*)$ must itself be credible and R-IC.
Conversely, the second statement in \cref{proposition: finite} can be interpreted in a way similar to the lower-hemicontinuity of strict Nash equiliria: if a profile $(\lambda^*,\sigma^*)$ in the continuous model is strictly credible and strictly R-IC, then it can be arbitrarily approximated by credible and R-IC profiles in the finite model.

\subsection{Extensive-Form Foundation for Credibility} \label{section: extensivesingle}
Our solution concept analyzes credible persuasion through the lens of a partial-commitment model, which is motivated by the observability of the Sender's message distribution. To formalize this connection, we propose an extensive-form game between the Sender and the Receiver where the Receiver observes the message distribution of each test. We show that the set of pure-strategy subgame perfect Nash equilibria of this extensive-form game corresponds to the set of profiles $(\lambda,\sigma)$ that are credible, R-IC, and give the Sender higher than her worst no-information payoff. 

\medskip

Consider the following game between the Sender and the Receiver:

\begin{enumerate}[nolistsep]
	\item The Sender chooses a test $\lambda \in \Delta(\Theta\times M)$ which satisfies $\lambda_{\Theta} =\mu_0 $; 
	\item Nature draws a pair of state and message $(\theta,m)$ according to $\lambda$;
	\item The Receiver observes $m$, as well as the distribution of messages induced by $\lambda$, $\lambda_M\in\Delta(M)$, then chooses an action $a\in A$.
\end{enumerate}

The Sender's strategy set is $\Lambda\equiv \{\lambda\in \Delta(\Theta\times M): \lambda_{\Theta} = \mu_0 \} $, and the Receiver's strategy set is $\Xi=\{\rho:\Delta(M) \times M \rightarrow A \}$, where the first argument is the distribution of messages and the second argument is the message. The Sender's payoff is 
\begin{equation*}
U_S(\lambda,\rho)=\sum_\theta \sum_m \lambda(\theta,m)u_S\big(\theta,\rho(\lambda_M, m)\big )
\end{equation*}
while the Receiver's payoff is 
\begin{equation*}
U_R(\lambda,\rho)=\sum_\theta \sum_m \lambda(\theta,m)u_R \big(\theta,\rho(\lambda_M,m) \big)
\end{equation*}

The solution concept is pure-strategy subgame perfect Nash equilibrium (SPNE).\footnote{Strengthening the solution concept to PBE does not change our result.} Notice that in the extensive-form game above, after the Sender chooses a degenerate test that always sends the a single message, the decision node for Nature forms the initial node of a proper subgame. In fact, these are also the only proper subgames in the extensive-form game, where subgame perfection has bite.

We call a profile $(\lambda,\sigma)$, as defined in \cref{section: modelsetup}, a \textit{pure-strategy SPNE outcome} of the extensive-form game above if there exists a pure-strategy SPNE $(\lambda,\rho)$ of the  extensive-form game such that $\rho(\lambda_M,m)=\sigma(m)$ for all $m\in M$. The following result relates the SPNE of this extensive-form game to our solution concept.

\begin{proposition} \label{proposition: S-R extensive form}
A profile $(\lambda,\sigma)$ is a pure-strategy SPNE outcome of the extensive-form game if and only if
\begin{enumerate}
	\item $(\lambda,\sigma)$ is credible and R-IC; that is, $(\lambda,\sigma)$ satisfies \cref{definition: credible,definition: RIC}.
	\item The Sender's payoff from $(\lambda,\sigma)$ is greater than her lowest no-information payoff:
	\begin{equation*}
	\sum_\theta \sum_m \lambda(\theta,m)u_S(\theta,\sigma(m))\geq \min_{a \in A_0} \sum_\theta\mu_{0}(\theta)u_S(\theta,a),
	\end{equation*}
	where $A_0=\arg\max_{a\in A} \sum_\theta\mu_{0}(\theta)u_R(\theta,a)$ is the Receiver's best-response set to the prior belief $\mu_0$.
\end{enumerate}
\end{proposition}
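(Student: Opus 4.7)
My plan is to prove the two directions separately, both hinging on the observation already flagged in the excerpt: the only proper subgames of the extensive form are those initiated by Nature's move after the Sender commits to a degenerate (Dirac) test $\lambda = \mu_0 \times \delta_m$ for some $m \in M$. In each such subgame the Receiver's posterior must coincide with the prior $\mu_0$, so subgame perfection forces $\rho(\delta_m, m) \in A_0$ for every $m$. This single observation is both the source of Condition 2 in the necessity direction and the main constraint in the sufficiency construction.

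\textbf{Necessity.} Start from an SPNE $(\lambda^*, \rho^*)$ that induces the outcome $(\lambda^*, \sigma)$ with $\sigma(m) = \rho^*(\lambda^*_M, m)$. Because the Receiver must best respond at every on-path information set $(\lambda^*_M, m)$ with $m \in \supp(\lambda^*_M)$, R-IC follows immediately. Next, any Sender deviation to $\lambda' \in D(\lambda^*)$ induces the same marginal $\lambda'_M = \lambda^*_M$, hence reaches the same Receiver information sets, so the Sender's deviation payoff equals $\sum_{\theta,m} u_S(\theta,\sigma(m)) \lambda'(\theta,m)$; equilibrium optimality of $\lambda^*$ then yields credibility. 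Finally, fix an arbitrary $m^* \in M$ and consider the Sender's deviation to $\mu_0 \times \delta_{m^*}$, which is the root of a proper subgame where subgame perfection forces $\rho^*(\delta_{m^*}, m^*) \in A_0$. The corresponding deviation payoff is $\sum_\theta \mu_0(\theta) u_S(\theta, \rho^*(\delta_{m^*}, m^*)) \ge \min_{a \in A_0}\sum_\theta \mu_0(\theta) u_S(\theta, a)$, so equilibrium optimality of $\lambda^*$ delivers Condition 2.

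\textbf{Sufficiency.} Given $(\lambda^*, \sigma)$ satisfying the two conditions, I construct a supporting $\rho^*$. Fix $a^* \in \argmin_{a \in A_0}\sum_\theta \mu_0(\theta) u_S(\theta,a)$, and set $\rho^*(\lambda^*_M, m) = \sigma(m)$ while $\rho^*(\nu, m) = a^*$ for every $\nu \neq \lambda^*_M$ and every $m \in M$. I verify three things: (a) on-path best response is exactly R-IC; (b) in every proper subgame (initiated by some degenerate $\mu_0 \times \delta_m$), the Receiver chooses $a^*\in A_0$, which is a best response to the prior belief $\mu_0$ induced there, so subgame perfection holds; (c) for Sender optimality, deviations $\lambda' \in D(\lambda^*)$ yield at most $U_S(\lambda^*,\sigma)$ by credibility, while deviations with $\lambda'_M \neq \lambda^*_M$ yield
\[
\sum_{\theta,m} \lambda'(\theta,m) u_S(\theta, a^*) \;=\; \sum_\theta \mu_0(\theta) u_S(\theta, a^*) \;=\; \min_{a \in A_0}\sum_\theta \mu_0(\theta) u_S(\theta,a) \;\le\; U_S(\lambda^*,\sigma)
\]
by Condition 2. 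A minor bookkeeping point: when $\lambda^*$ itself is already degenerate, $D(\lambda^*) = \{\lambda^*\}$ and the on-path information set coincides with the proper subgame, so R-IC automatically places $\sigma(m^*) \in A_0$ and Condition 2 becomes vacuous.

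The main obstacle is specifying the Receiver's off-path behavior in the sufficiency direction: for marginals $\nu \neq \lambda^*_M$ that are \emph{not} Dirac, no proper subgame constrains $\rho^*$, yet I still need the induced Sender deviation payoffs to be small enough. The trick is the uniform choice of $a^*$: picking the action that minimizes the Sender's prior-expected payoff among the Receiver's prior-best responses simultaneously (i) is admissible inside every proper Dirac subgame and (ii) locks the Sender's payoff from any off-path deviation at exactly the worst no-information value, which Condition 2 bounds by $U_S(\lambda^*,\sigma)$. Once this off-path pinning is in place, the rest of the verification is routine.
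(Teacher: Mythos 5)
Your proof is correct and follows essentially the same approach as the paper: necessity derives Condition~2 from a deviation to a degenerate test $\mu_0 \times \delta_{m^*}$ (whose Nature node initiates a proper subgame that pins the Receiver to $A_0$), and sufficiency constructs the supporting $\rho^*$ by pinning every off-path marginal to the Sender's worst prior best response $a^* \in \arg\min_{a\in A_0}\sum_\theta\mu_0(\theta)u_S(\theta,a)$. The additional remark about the degenerate-$\lambda^*$ case is a small extra observation not in the paper, but the overall argument structure is identical.
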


\autoref{proposition: S-R extensive form} shows that if one focuses on profiles  where the Sender is weakly better off than disclosing no information (as we do in this paper), the extensive-form game rationalizes our definition for credibility. The reason that the Sender's payoff must be higher than her no-information payoff in the extensive-form game is that in any equilibrium, if she deviates to a no-information test $\lambda=\mu_0\times \delta_{m_0}$, the ensuing decision node forms the initial node of a proper subgame. Subgame perfection then demands that the Receiver plays a best response to his prior, which in turn ensures that the Sender obtains a no-information payoff following this deviation. Therefore, the Sender's equilibrium payoff must be weakly higher than her worst no-information payoff.

\section{Conclusion} \label{section: conclusion}

This paper offers a new notion of credibility for persuasion problems. We model a Sender who can commit to a test only up to the details that are observable to the Receiver. The Receiver does not observe the chosen test but observes the distribution of messages. This leads to a model of partial commitment where the Sender can undetectably deviate to tests that induce the same distribution of messages. Our framework characterizes when, given the Receiver's best response, the Sender has no profitable deviation. 

We show that this consideration eliminates the prospects for credible persuasion in settings with adverse selection. In other settings, we show that the requirement is compatible with the Sender still benefiting from persuasion. More generally, we show that our requirement translates to a cyclical monotonicity condition on the induced distribution of states and actions. 

In addition to the theoretical findings above, our work has several applied implications. The first is that the commitment assumption commonly made in the Bayesian persuasion literature may be innocuous in some applications. For example, \cite{xiang2021physicians} uses a Bayesian persuasion framework to empirically study information transmission in the physician-patient relationship, where the physician is assumed to commit to a recommendation policy that is observable to patients. But in practice, patients may observe the distribution of recommendations and not the recommendation policy. Our results imply that in her context, this is enough: knowing the distribution of recommendations suffices because both the physician and the patients' payoffs are supermodular, and the patients face a binary decision, so the optimal full-commitment policy is credible according to our \cref{proposition: no cost}.

Our work also speaks to why certain industries (such as education) can effectively disclose information by utilizing their own rating systems, while some other industries (such as car dealerships) must resort to other means to addressing asymmetric information, such as third-party certification or warranties. Our results provide a rationale: in industries that exhibit adverse selection, the informed party cannot credibly disclose information through its own ratings even if it wishes to do so.

The notion of credibility we consider in this paper is motivated by the observability of the Sender's message distribution. In some settings, the Receiver may observe more than the distribution of messages; for example, she may observe some further details of the test, such as how some states of the world map into messages. In other settings, the Receiver may observe less; e.g., she may see the average grade, but not its distribution. To capture these various cases, one would then formulate the problem of ``detectable'' deviations differently. We view it to be an interesting direction for future research to understand how different notions of detectability map into different conditions on the outcome distribution.

{\small
	\addcontentsline{toc}{section}{References}
	\setlength{\bibsep}{0.25\baselineskip}
	\bibliographystyle{jpe}
	\bibliography{marginalcommitment}

\begin{thebibliography}{46}
\newcommand{\enquote}[1]{``#1''}
\providecommand{\natexlab}[1]{#1}
\providecommand{\url}[1]{\texttt{#1}}
\providecommand{\urlprefix}{URL }

\bibitem[{Akbarpour and Li(2020)}]{akbarpour2020credible}
Akbarpour, Mohammad and Shengwu Li. 2020.
\newblock \enquote{Credible Auctions: A Trilemma.}
\newblock \emph{Econometrica} 88~(2):425--467.

\bibitem[{Akerlof(1970)}]{Akerlof1970}
Akerlof, George~A. 1970.
\newblock \enquote{{The Market for `Lemons': Quality Uncertainty and the Market
  Mechanism}.}
\newblock \emph{Quarterly Journal of Economics} 84~(3):488--500.

\bibitem[{Alonso and C{\^a}mara(2016)}]{alonso2016persuading}
Alonso, Ricardo and Odilon C{\^a}mara. 2016.
\newblock \enquote{Persuading Voters.}
\newblock \emph{American Economic Review} 106~(11):3590--3605.

\bibitem[{Beiglb{\"o}ck et~al.(2009)Beiglb{\"o}ck, Goldstern, Maresch, and
  Schachermayer}]{optimaltransport2009}
Beiglb{\"o}ck, Mathias, Martin Goldstern, Gabriel Maresch, and Walter
  Schachermayer. 2009.
\newblock \enquote{Optimal and Better Transport Plans.}
\newblock \emph{Journal of Functional Analysis} 256~(6):1907--1927.

\bibitem[{Bergemann and Morris(2016)}]{BM2016}
Bergemann, Dirk and Stephen Morris. 2016.
\newblock \enquote{Bayes Correlated Equilibrium and the Comparison of
  Information Structures in Games.}
\newblock \emph{Theoretical Economics} 11~(2):487--522.

\bibitem[{Best and Quigley(2020)}]{best2020persuasion}
Best, James and Daniel Quigley. 2020.
\newblock \enquote{Persuasion for the Long Run.}
\newblock Working Paper.

\bibitem[{Brocas and Carrillo(2007)}]{brocas2007influence}
Brocas, Isabelle and Juan~D Carrillo. 2007.
\newblock \enquote{Influence Through Ignorance.}
\newblock \emph{The RAND Journal of Economics} 38~(4):931--947.

\bibitem[{Brualdi(2006)}]{brualdi2006combinatorial}
Brualdi, Richard~A. 2006.
\newblock \emph{Combinatorial Matrix Classes}, vol.~13.
\newblock Cambridge University Press.

\bibitem[{Chakraborty and Harbaugh(2007)}]{chakraborty2007comparative}
Chakraborty, Archishman and Rick Harbaugh. 2007.
\newblock \enquote{Comparative Cheap Talk.}
\newblock \emph{Journal of Economic Theory} 132~(1):70--94.

\bibitem[{Chakraborty and Harbaugh(2010)}]{chakraborty2010persuasion}
---{}---{}---. 2010.
\newblock \enquote{Persuasion by Cheap Talk.}
\newblock \emph{American Economic Review} 100~(5):2361--82.

\bibitem[{Crawford and Sobel(1982)}]{crawford1982strategic}
Crawford, Vincent~P. and Joel Sobel. 1982.
\newblock \enquote{Strategic Information Transmission.}
\newblock \emph{Econometrica} 50~(6):1431--1451.

\bibitem[{Dworczak and Martini(2019)}]{dworczak2019simple}
Dworczak, Piotr and Giorgio Martini. 2019.
\newblock \enquote{The Simple Economics of Optimal Persuasion.}
\newblock \emph{Journal of Political Economy} 127~(5):1993--2048.

\bibitem[{Escobar and Toikka(2013)}]{escobar2013efficiency}
Escobar, Juan~F and Juuso Toikka. 2013.
\newblock \enquote{Efficiency in Games with Markovian Private Information.}
\newblock \emph{Econometrica} 81~(5):1887--1934.

\bibitem[{Frankel(2014)}]{frankel2014aligned}
Frankel, Alexander. 2014.
\newblock \enquote{Aligned Delegation.}
\newblock \emph{American Economic Review} 104~(1):66--83.

\bibitem[{Gitmez and Molavi(2022)}]{gitmez2022polarization}
Gitmez, A.~Arda and Pooya Molavi. 2022.
\newblock \enquote{Polarization and Media Bias.}
\newblock \emph{arXiv preprint arXiv:2203.12698} .

\bibitem[{Goldstein and Leitner(2018)}]{goldstein2018stress}
Goldstein, Itay and Yaron Leitner. 2018.
\newblock \enquote{Stress Tests and Information Disclosure.}
\newblock \emph{Journal of Economic Theory} 177:34--69.

\bibitem[{Ivanov(2020)}]{ivanov2020optimal}
Ivanov, Maxim. 2020.
\newblock \enquote{Optimal Monotone Signals in Bayesian Persuasion Mechanisms.}
\newblock \emph{Economic Theory} 72:955--1000.

\bibitem[{Jackson and Sonnenschein(2007)}]{jackson2007overcoming}
Jackson, Matthew~O. and Hugo~F. Sonnenschein. 2007.
\newblock \enquote{Overcoming Incentive Constraints by Linking Decisions.}
\newblock \emph{Econometrica} 75~(1):241--257.

\bibitem[{Kamenica(2019)}]{kamenica2019bayesian}
Kamenica, Emir. 2019.
\newblock \enquote{Bayesian Persuasion and Information Design.}
\newblock \emph{Annual Review of Economics} 11:249--272.

\bibitem[{Kamenica and Gentzkow(2011)}]{kamenicagentzkow2011}
Kamenica, Emir and Matthew Gentzkow. 2011.
\newblock \enquote{Bayesian Persuasion.}
\newblock \emph{American Economic Review} 101~(6):2590--2615.

\bibitem[{Kartik and Zhong(2019)}]{kartik2019lemonade}
Kartik, Navin and Weijie Zhong. 2019.
\newblock \enquote{Lemonade from Lemons: Information Design and Adverse
  Selection.}
\newblock Working Paper.

\bibitem[{Kolotilin(2018)}]{kolotilin2018optimal}
Kolotilin, Anton. 2018.
\newblock \enquote{Optimal Information Disclosure: A Linear Programming
  Approach.}
\newblock \emph{Theoretical Economics} 13~(2):607--635.

\bibitem[{Kolotilin and Li(2020)}]{kolotilin2020relational}
Kolotilin, Anton and Hongyi Li. 2020.
\newblock \enquote{Relational Communication.}
\newblock \emph{Theoretical Economics} (Forthcoming).

\bibitem[{Kuvalekar, Lipnowski, and Ramos(2021)}]{kuvalekar2021goodwill}
Kuvalekar, Aditya, Elliot Lipnowski, and Joao Ramos. 2021.
\newblock \enquote{Goodwill in Communication.}
\newblock \emph{Journal of Economic Theory} (Forthcoming).

\bibitem[{Lipnowski and Ravid(2020)}]{lipnowski2020cheap}
Lipnowski, Elliot and Doron Ravid. 2020.
\newblock \enquote{Cheap Talk with Transparent Motives.}
\newblock \emph{Econometrica} 88~(4):1631--1660.

\bibitem[{Lipnowski, Ravid, and Shishkin(2021)}]{lipnowski2019persuasion}
Lipnowski, Elliot, Doron Ravid, and Denis Shishkin. 2021.
\newblock \enquote{Persuasion via Weak Institutions.}
\newblock Working Paper.

\bibitem[{Margaria and Smolin(2018)}]{margaria2018dynamic}
Margaria, Chiara and Alex Smolin. 2018.
\newblock \enquote{Dynamic Communication with Biased Senders.}
\newblock \emph{Games and Economic Behavior} 110:330--339.

\bibitem[{Mas-Colell, Whinston, and Green(1995)}]{mas1995microeconomic}
Mas-Colell, Andreu, Michael~Dennis Whinston, and Jerry~R. Green. 1995.
\newblock \emph{Microeconomic Theory}, vol.~1.
\newblock Oxford University Press New York.

\bibitem[{Mathevet, Pearce, and Stacchetti(2022)}]{mathevet2022reputation}
Mathevet, Laurent, David Pearce, and Ennio Stacchetti. 2022.
\newblock \enquote{Reputation for A Degree of Honesty.}
\newblock Working Paper.

\bibitem[{Matsushima, Miyazaki, and Yagi(2010)}]{matsushima2010role}
Matsushima, Hitoshi, Koichi Miyazaki, and Nobuyuki Yagi. 2010.
\newblock \enquote{Role of Linking Mechanisms in Multitask Agency with Hidden
  Information.}
\newblock \emph{Journal of Economic Theory} 145~(6):2241--2259.

\bibitem[{Meng(2021)}]{meng2021value}
Meng, Delong. 2021.
\newblock \enquote{On the Value of Repetition for Communication Games.}
\newblock \emph{Games and Economic Behavior} 127:227--246.

\bibitem[{Mensch(2021)}]{mensch2019monotone}
Mensch, Jeffrey. 2021.
\newblock \enquote{Monotone Persuasion.}
\newblock \emph{Games and Economic Behavior} (Forthcoming).

\bibitem[{Min(2021)}]{min2021bayesian}
Min, Daehong. 2021.
\newblock \enquote{Bayesian Persuasion under Partial Commitment.}
\newblock \emph{Economic Theory} 72:743--764.

\bibitem[{Nguyen and Tan(2021)}]{nguyen2021}
Nguyen, Anh and Teck~Yong Tan. 2021.
\newblock \enquote{Bayesian Persuasion with Costly Messages.}
\newblock \emph{Journal of Economic Theory} 193:105212.

\bibitem[{Ostrovsky and Schwarz(2010)}]{ostrovsky2010information}
Ostrovsky, Michael and Michael Schwarz. 2010.
\newblock \enquote{Information Disclosure and Unraveling in Matching Markets.}
\newblock \emph{American Economic Journal: Microeconomics} 2~(2):34--63.

\bibitem[{Pei(2020)}]{pei2020repeated}
Pei, Harry. 2020.
\newblock \enquote{Repeated Communication with Private Lying Cost.}
\newblock Working Paper.

\bibitem[{Perez-Richet and Skreta(2021)}]{perezskreta}
Perez-Richet, Eduardo and Vasiliki Skreta. 2021.
\newblock \enquote{Test Design under Falsification.}
\newblock Working Paper.

\bibitem[{Rahman(2010)}]{rahman2010detecting}
Rahman, David. 2010.
\newblock \enquote{Detecting Profitable Deviations.}
\newblock Working Paper.

\bibitem[{Rayo and Segal(2010)}]{rayosegal2010}
Rayo, Luis and Ilya Segal. 2010.
\newblock \enquote{Optimal Information Disclosure.}
\newblock \emph{Journal of Political Economy} 118~(5):949--987.

\bibitem[{Renault, Solan, and Vieille(2013)}]{renault2013dynamic}
Renault, J{\'e}r{\^o}me, Eilon Solan, and Nicolas Vieille. 2013.
\newblock \enquote{Dynamic Sender--Receiver games.}
\newblock \emph{Journal of Economic Theory} 148~(2):502--534.

\bibitem[{Rochet(1987)}]{rochet1987necessary}
Rochet, Jean-Charles. 1987.
\newblock \enquote{A Necessary and Sufficient Condition for Rationalizability
  in a Quasi-linear Context.}
\newblock \emph{Journal of Mathematical Economics} 16~(2):191--200.

\bibitem[{Santambrogio(2015)}]{santambrogio2015optimal}
Santambrogio, Filippo. 2015.
\newblock \enquote{Optimal Transport for Applied Mathematicians.}
\newblock \emph{Birk{\"a}user, NY} 55~(58-63):94.

\bibitem[{Taneva(2019)}]{taneva2019information}
Taneva, Ina. 2019.
\newblock \enquote{Information Design.}
\newblock \emph{American Economic Journal: Microeconomics} 11~(4):151--85.

\bibitem[{Topkis(2011)}]{topkis2011}
Topkis, Donald~M. 2011.
\newblock \emph{Supermodularity and Complementarity}.
\newblock Princeton University Press.

\bibitem[{Villani(2008)}]{villani}
Villani, C{\'e}dric. 2008.
\newblock \emph{Optimal Transport: Old and New}, vol. 338.
\newblock Springer Science \& Business Media.

\bibitem[{Xiang(2021)}]{xiang2021physicians}
Xiang, Jia. 2021.
\newblock \enquote{Physicians as Persuaders: Evidence from Hospitals in China.}
\newblock Working Paper.

\end{thebibliography}
}

\appendix
\addtocontents{toc}{\protect\setcounter{tocdepth}{1}} 

\section{Appendix} \label{section: appendix}

\subsection{Proof of \autoref{theorem-cyclical}} \label{section: appendix proof of thm 1}

The following lemma, which will play a key role in the proof of \autoref{theorem-cyclical}, is a finite version of Theorem 5.10 of \cite{villani}.

\begin{lemma}\label{lemma: optimaltransport}
    Suppose both $X$ and $Y$ are finite sets, and $u:X\times Y\rightarrow \mathbb{R}$ is a real function. Let $p\in \Delta(X)$ and $q\in\Delta(Y)$ be two probability measure on $X$ and $Y$ respectively, and $\Pi(p,q)$ be the set of joint probability measure on $X\times Y$ such that the marginals on $X$ and $Y$ are $p$ and $q$. The following two statements are equivalent:
    \begin{enumerate}[nolistsep]
        \item $\pi^*\in \argmax_{\pi\in\Pi(p,q)} \sum_{x,y} \pi(x,y)u(x,y)$;
        \item $\pi^*$ is $u$-cyclically monotone. That is, for any $n$ and $(x_1,y_1),...,(x_n,y_n)\in \supp(\pi^*)$,
        \[\sum_{i=1}^n u(x_i,y_i)\geq \sum_{i=1}^n u(x_i,y_{i+1})\]
    \end{enumerate}
    where $y_{n+1}\equiv y_1$.
\end{lemma}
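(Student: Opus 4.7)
The plan is to prove the two implications separately. Direction (1)$\Rightarrow$(2) follows from a local cyclic perturbation argument, while (2)$\Rightarrow$(1) rests on decomposing the difference $\pi^*-\pi$ of any two feasible couplings into elementary ``cycle'' signed measures, to each of which the cyclical-monotonicity inequality on $\supp(\pi^*)$ can be applied.

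For (1)$\Rightarrow$(2), I argue by contradiction. Suppose $\pi^*$ is optimal but not $u$-cyclically monotone, so some sequence $(x_1,y_1),\dots,(x_n,y_n)\in\supp(\pi^*)$ with $y_{n+1}\equiv y_1$ satisfies $\sum_i u(x_i,y_i) < \sum_i u(x_i,y_{i+1})$. Define the signed measure
\begin{equation*}
\mu_C \;=\; \sum_{i=1}^n\bigl[\delta_{(x_i,y_{i+1})}-\delta_{(x_i,y_i)}\bigr].
\end{equation*}
A direct bookkeeping check shows that both marginals of $\mu_C$ vanish: each row $X=x$ contributes $+1$ and $-1$ for every index $i$ with $x_i=x$; for each column $Y=y$, the indices with $y_{i+1}=y$ and with $y_i=y$ are in bijection because $y_{n+1}=y_1$. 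Hence $\pi_\varepsilon:=\pi^*+\varepsilon\mu_C\in\Pi(p,q)$ for every $\varepsilon\geq 0$ small enough to preserve nonnegativity, which is possible because each subtracted atom $(x_i,y_i)$ carries strictly positive mass under $\pi^*$. But then $\sum u\,\pi_\varepsilon-\sum u\,\pi^*=\varepsilon\bigl[\sum_i u(x_i,y_{i+1})-\sum_i u(x_i,y_i)\bigr]>0$, contradicting the optimality of $\pi^*$.

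For (2)$\Rightarrow$(1), I fix an arbitrary $\pi\in\Pi(p,q)$ and set $\nu=\pi^*-\pi$, a finite signed measure on $X\times Y$ with zero marginals in both coordinates. I then build a directed multigraph on the disjoint union $X\sqcup Y$ by placing an arc $x\to y$ of capacity $\nu(x,y)$ whenever $\nu(x,y)>0$, and an arc $y\to x$ of capacity $-\nu(x,y)$ whenever $\nu(x,y)<0$. The vanishing-marginals condition translates into flow balance at every vertex, so the multigraph is Eulerian and admits a nonnegative decomposition into simple directed cycles. Because the underlying graph is bipartite, each cycle alternates between $x\to y$ and $y\to x$ arcs, and after relabeling takes the form $\mu_C=\sum_{i=1}^{n_C}\bigl[\delta_{(x_i,y_i)}-\delta_{(x_i,y_{i+1})}\bigr]$ with every positive atom $(x_i,y_i)$ lying in $\supp(\nu^+)\subseteq\supp(\pi^*)$. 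Writing $\pi^*-\pi=\sum_C\alpha_C\mu_C$ with $\alpha_C\geq 0$ and applying the $u$-cyclical monotonicity hypothesis to each cycle yields $\sum u\,\mu_C\geq 0$, hence $\sum u\,(\pi^*-\pi)\geq 0$ and $\pi^*$ is optimal.

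The main obstacle is justifying the cycle-decomposition step in the reverse direction: one must show that any finite signed measure on $X\times Y$ with vanishing marginals is a nonnegative combination of cycle measures whose positive atoms lie on $\supp(\nu^+)$. This is classical but requires care; the Eulerian construction above extracts one cycle at a time by following arcs in the multigraph until a vertex repeats, peeling off that cycle with weight equal to its minimum arc capacity, and iterating on the residual graph. The residual remains Eulerian, and finiteness of $X\sqcup Y$ together with the strict decrease in the number of arcs per step guarantees termination. Because the positive-arc capacities were set to $\nu^+$, every cycle extracted automatically has its $\delta_{(x_i,y_i)}$ atoms in $\supp(\pi^*)$, which is precisely what makes cyclical monotonicity applicable.
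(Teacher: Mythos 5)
Your proof is correct, and it supplies a complete argument where the paper supplies none: the paper simply asserts \cref{lemma: optimaltransport} as ``a finite version of Theorem 5.10 of \cite{villani}'' and moves on, so there is no paper proof to compare against line by line. Your direction (1)$\Rightarrow$(2) is the standard local cyclic-perturbation argument and is airtight; the marginal-cancellation bookkeeping is exactly right, and the strict positivity of $\pi^*$ on the subtracted atoms is precisely what lets you pick $\varepsilon>0$. Your direction (2)$\Rightarrow$(1) is a self-contained combinatorial route: decompose $\nu=\pi^*-\pi$ as a balanced flow on the bipartite multigraph $X\sqcup Y$, peel off simple directed cycles, and observe that the positive atoms of each cycle lie in $\supp(\nu^+)\subseteq\supp(\pi^*)$ because $\nu(x,y)>0$ forces $\pi^*(x,y)>\pi(x,y)\geq 0$. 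This is a valid and classical alternative to the LP-duality route (constructing Kantorovich potentials $\phi,\psi$ with $\phi(x)+\psi(y)\geq u(x,y)$, tight on $\supp(\pi^*)$, which is the ``third condition'' the paper records in its Kantorovich Duality box); your route is more elementary and avoids duality entirely, at the cost of the flow-decomposition machinery. Two small polish points: the word ``Eulerian'' is slightly loose since the multigraph need not be connected, but your peel-off algorithm does not require connectivity, only vertex-wise flow balance, so the argument is unaffected; and the cycle measure you extract from a directed bipartite cycle $x_1\to y_1\to x_2\to\cdots\to x_n\to y_n\to x_1$ is naturally $\sum_i\bigl[\delta_{(x_i,y_i)}-\delta_{(x_i,y_{i-1})}\bigr]$, i.e., a backward shift rather than the forward shift $y_{i+1}$ you wrote, but these are equivalent after reversing the index order of the sequence, so your ``after relabeling'' caveat covers it.
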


\begin{proof}[Proof of \autoref{theorem-cyclical}]
For the ``if'' direction, suppose $\pi$ is $u_R$-obedient, $u_S$-cyclically monotone, and satisfies $\pi_{\Theta} = \mu_{0}$. The proof is by construction.

Since $\pi_{\Theta} = \mu_{0}$, we can construct a test $(M,\lambda^*)$ by setting $M= A$ and $\lambda^*=\pi$; furthermore, let $\sigma^*$ be the identity map from $M$ to $A$. It is straightforward to see that the profile $(\lambda^*,\sigma^*)$ induces the outcome distribution $\pi$. We show that $(\lambda^*,\sigma^*)$ is credible. First, since $\pi$ is $u_R$-obedient, we have that for each $a\in A$,
	\begin{equation*}
 	a  \in \argmax_{a'} \sum_\Theta  u_R(\theta,a')\; \pi(\theta, a ).
 	\end{equation*}
Since $\sigma^*$ is an identity map, it follows that for each $m\in M$, 
	\begin{equation*}
 	\sigma^*(m)  \in \argmax_{a'} \sum_\Theta u_R(\theta,a')\;\pi(\theta, \sigma^*(m) ).  
 	\end{equation*}
Furthermore, since $\lambda^*=\pi$ and $\sigma^*$ is injective, we have $\lambda^*(\theta,m) = \pi(\theta,\sigma^*(m))$ for all $\theta \in \Theta$ and $m \in M$. So
\begin{equation*}
	\sigma^*\in \argmax_{\sigma: M \rightarrow A} \; \sum_{\Theta\times M} u_R(\theta,\sigma(m))\;\lambda^*(\theta,m),
\end{equation*}
which means $\sigma^*$ is a  best response to $\lambda^*$.

It remains to show that the Sender does not benefit from choosing any other test in  $\Lambda(\mu_{0}, \lambda^*_M)$. Observe that since $\pi$ is $u_S-$cyclically monotone, every sequence $(\theta_1,a_1),\ldots,(\theta_n,a_n)$ in $\supp(\pi)$ where $a_{n+1}\equiv a_1$ satisfies
	\begin{equation*}
	\sum_{i=1}^n u_S(\theta_i,a_i)\geq \sum_{i=1}^n u_S(\theta_i,a_{i+1}).
	\end{equation*}
Since $\lambda^* = \pi$ and $\sigma^*$ is the identity mapping, this further implies 
	\begin{equation*}
	\sum_{i=1}^n u_S(\theta_i, \sigma^*(m_i))\geq \sum_{i=1}^n u_S(\theta_i, \sigma^*(m_{i+1}));
	\end{equation*}
for every sequence $(\theta_1,m_1),\ldots,(\theta_n,m_n)\in \supp(\lambda^*)$ with $m_{n+1} = m_1$. In addition, $\lambda^*_{\theta} = \mu_{0}$ and $\lambda^*_{M} = \lambda^*_{M}$ by construction. By \autoref{lemma: optimaltransport}, $\lambda^*$ satisfies
	\begin{equation*}
	\lambda^* \in \argmax_{\lambda\in \Lambda(\mu_{0},\lambda^*_M)} \; \sum_{\Theta \times M}\; u_S(\theta,\sigma(m))\;\lambda(\theta,m)
	\end{equation*}
which means $\lambda^*$ is Sender optimal conditional on its message distribution.
\bigskip

For the ``only if'' direction, suppose $\pi$ is stable and thus induced by a credible and R-IC profile $(\lambda^*,\sigma^*)$. Since $\sigma^*$ best responds to the messages from $\lambda^*$, the $u_R$-obedience of $\pi$ follows from \cite{BM2016}. It remains to show that $\pi$ is $u_S$-cyclical monotone. Suppose by contradiction that $\pi$ is not $u_S$-cyclically monotone, which implies that there exists a sequence $(\theta_1,a_1),\ldots,(\theta_n,a_n)\in \supp(\pi)$ such that 
\begin{equation*} 
\sum_{i=1}^n u_S(\theta_i,a_i)<\sum_{i=1}^n u_S(\theta_i,a_{i+1}),
\end{equation*}
where $a_{n+1}=a_1$. Since $\pi$ is induced by $(\lambda^*,\sigma^*)$, for each $i=1,\ldots,n$ there exists $m_i$ such that $m_i\in \sigma^{*-1}(a_i)$ and $(\theta_i,m_i)\in \supp(\lambda^*)$, so we have a sequence $(\theta_1,m_1),\ldots,(\theta_n,m_n)\in \supp(\lambda^*)$ that satisfies
\begin{equation} \label{eq-sc}
	\sum_{i=1}^n u_S(\theta_i,\sigma^*(m_i))<\sum_{i=1}^n u_S(\theta_i,\sigma^*(m_{i+1})),
\end{equation}
where $m_{n+1}=m_1$. Define $v(\theta,m) \equiv u_S(\theta,\sigma^*(m))$. Since $(\lambda^*,\sigma^* )$ is credible, we have
\begin{equation*}
	\lambda^* \in \argmax_{\lambda\in\Lambda(\mu_{0},\lambda^*_M)} \sum_{\Theta \times M} v(\theta,m) \lambda(\theta,m).
\end{equation*}
\autoref{lemma: optimaltransport} implies that $\lambda^*$ is $v$-cyclically monotone.
Since $(\theta_1,m_1),\ldots,(\theta_n,m_n)$ is in  $\supp(\lambda^*)$, the $v$-cyclical monotonicity of $\lambda^*$ implies
\begin{equation*}
\sum_{i=1}^n u_S(\theta_i,\sigma^{*}(m_i))\geq \sum_{i=1}^n u_S(\theta_i,\sigma^{*}(m_{i+1}))
\end{equation*}
where $m_{n+1}=m_1$, which is a contradiction to \eqref{eq-sc}. So $\pi$ must be $u_S$-cyclically monotone.

\end{proof}

\subsection{Proof of \autoref{lemma: comonotone}}

\begin{lemma} \label{lemma: index crossing}
Let $t:\{1,\ldots,n\}\rightarrow\{1,\ldots,n\} $ be a bijection. Suppose $t$ is not the identity mapping, then there exists $k$ such that $t(k)>k$ and $t(t(k))<t(k)$.
\end{lemma}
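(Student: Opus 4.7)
The plan is to use an extremal argument: pick the largest index at which $t$ strictly increases, and show that $t$ must strictly decrease at its image.

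Specifically, I would first define the ``ascent set'' $A = \{i : t(i) > i\}$ and the ``descent set'' $D = \{i : t(i) < i\}$. The preliminary observation is that $A$ is nonempty whenever $t$ is not the identity. This follows from a simple counting/bijectivity argument: since $t$ is a bijection on $\{1,\ldots,n\}$, the sum $\sum_i t(i) = \sum_i i$, so if any $i$ satisfies $t(i) < i$ then some other $j$ must satisfy $t(j) > j$. Hence, since $t$ is not the identity, $A \neq \emptyset$ (and $D \neq \emptyset$ as well).

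Next I would let $k = \max A$, i.e., the largest index with $t(k) > k$, and set $j = t(k)$. I claim $t(j) < j$, which gives the desired conclusion $t(t(k)) < t(k)$. There are only three cases for $t(j)$: it is equal to $j$, strictly greater than $j$, or strictly less than $j$. The case $t(j) = j$ is ruled out by injectivity, since we would then have $t(k) = j = t(j)$ with $k \neq j$ (because $j = t(k) > k$). The case $t(j) > j$ is ruled out by maximality of $k$ in $A$, since then $j \in A$ and $j > k$. Therefore $t(j) < j$, as required.

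The proof is essentially a one-paragraph extremal argument, so I do not expect a serious obstacle. The only subtlety to be careful about is ensuring that $A$ is nonempty when $t$ is not the identity; the bijectivity/sum argument above handles this cleanly, and one could equivalently note that if every $i$ satisfied $t(i) \leq i$, then by induction $t(1) = 1$, $t(2) = 2$, and so on, forcing $t$ to be the identity.
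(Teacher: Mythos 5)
Your proof is correct. It takes a genuinely different (and arguably cleaner) route from the paper's. The paper argues by contradiction: assuming no such $k$ exists, it iterates $k_j = t(k_{j-1})$ starting from any ascent point $k_1$ and shows this produces a strictly increasing sequence in $\{1,\ldots,n\}$ of unbounded length, a pigeonhole contradiction. Your argument is direct and extremal: take $k = \max A$ where $A = \{i : t(i) > i\}$, observe $A \neq \emptyset$, and rule out $t(t(k)) \geq t(k)$ in two quick cases using injectivity and the maximality of $k$. Both proofs hinge on the same underlying fact that ascent points cannot propagate upward forever in a finite set, but yours extracts the witness in one step rather than running the iteration to a contradiction, and avoids the paper's slight off-by-one bookkeeping in the chain length. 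Your side remark about showing $A \neq \emptyset$ inductively (if $t(i) \leq i$ for all $i$ then $t$ is the identity) is a valid alternative to the sum argument; either is fine.
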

\begin{proof}
Suppose by contradiction that for every $k$ such that $t(k)>k$, $t(t(k))\geq t(k)$. Notice that since $t$ is a bijection, $t(t(k))\neq t(k)$ (otherwise $t(k)=k$ contradicting $t(k)>k$), so for every $k$ such that $t(k)>k$, $t(t(k))\geq t(k)+1$.

Since $t$ is not the identity mapping, there exists $k_1$ such that $t(k_1)>k_1$ or equivalently $t(k_1)\geq k_1+1$. Define iteratively that $k_j=t(k_{j-1})$ for $j=2,\ldots,n$, we have $k_j-k_{j-1}\geq 1$. Then we have $k_n\geq k_1+n>n$, which is contradiction. So there exists  $k$ such that $t(k)>k$ and $t(t(k))<t(k)$.
\end{proof}

First, we show that comonotonicity implies $u_S$-cyclical monotonicity when $u_S$ is supermodular. Suppose an outcome distribution $\pi \in \Delta(\Theta\times A )$ is comonotone, then $\supp(\pi)$ is totally ordered. Take any sequence $(\theta_1,a_1),\ldots,(\theta_n,a_n)\in \supp(\pi)$ and assume without loss of generality that  $(\theta_{i},a_{i})$ is increasing in $i\in\{1,\ldots,n\}$. We will show that for any permutation $t:\{1,\ldots,n\}\rightarrow \{1,\ldots,n\}$,

\begin{equation*}
u_S(\theta_1,a_1)+\ldots+u_S(\theta_n,a_n)\geq u_S(\theta_1,a_{t(1)})+\ldots+u_S(\theta_n,a_{t(n)}),
\end{equation*}
which then proves the statement. In particular, for each permutation $t$, let $v(t) \equiv u_S(\theta_1,a_{t(1)})+\ldots+u_S(\theta_n,a_{t(n)})$ denote the value obtained from summing $u_S$ according to the state-action pairings in $t$ and let $I$ denote the identity map. We show that $v(I) \ge v(t)$ for every permutation $t$.

To this end, take any permutation $t$ that is not an identity mapping, and let $l(t)$ denote the number of fixed points of $t$ (which may be zero). By \cref{lemma: index crossing}, there exists $k^*$ such that $t(k^*)>k^*$ and $t(t(k^*))<t(k^*)$. The supermodularity of $u_S$ implies 
\begin{equation} \label{equation: rearrangement}
u_S(\theta_{t(k^*)},a_{t(k^*)})+u_S(\theta_{k^*},a_{t(t(k^*))})\geq u_S(\theta_k,a_{t(k^*)}) + u_S(\theta_{t(k^*)},a_{t(t(k^*))}).
\end{equation}
Define a new permutation $\hat{t}$ so that $k$ is mapped to $t(t(k))$ while $t(k)$ is mapped to $t(k)$, while all other pairings remain unchanged. Formally,
\begin{equation*}
	\hat{t}(k) =
	\begin{cases}
	{t}(k) & \text{for all } k \ne k^*, t(k^*) \\
	t(t(k^*)) & \text{if } k=k^* \\
	t(k^*) & \text{if } k = t(k^*)
	\end{cases}
\end{equation*}
By \eqref{equation: rearrangement}, we have
\begin{equation*}
u_S(\theta_1,a_{\hat{t}(1)})+\ldots+u_S(\theta_n,a_{\hat{t}(n)})  \geq u_S(\theta_1,a_{t(1)})+\ldots+u_S(\theta_n,a_{t(n)}),
\end{equation*}
so we have constructed another permutation $\hat{t}$ with $v(\hat{t}) \ge v(t)$ and $l(\hat{t}) = l(t)+1$. Each time we iterate the process above, $v(.)$ weakly increases while the number of fixed points increases by one. Since $n<\infty$, the iteration terminates at the identity map $I$, so $v(I) \ge v(t)$ for every permutation $t$.
\bigskip

Next, suppose $u_S$ is strictly supermodular. We will show that $u_S$-cyclical monotonicity implies comonotonicity. Towards a contradiction, suppose that an outcome distribution $\pi$ is $u_S$-cyclically monotone but not comonotone. Then there exists $(\theta,a),(\theta',a')\in \supp(\pi)$ such that $\theta<\theta'$, $a>a'$. Since $u_S$ is strictly supermodular, 
	\begin{equation*}u_S(\theta,a)+u_S(\theta',a')< u_S(\theta,a')+u_S(\theta',a)\end{equation*}
	which is a contradiction to the $u_S$-cyclically monotonicity of $\pi$ when $(\theta_1,a_1) = (\theta,a)$ and $(\theta_2,a_2) = (\theta',a')$.

\subsection{Proof of \autoref{proposition: noinformation}}
Let $\pi$ be a stable outcome distribution, and suppose by contradiction that there exists two distinct actions $a_1,a_2\in \supp(\pi_a)$, say $a_1<a_2$. Let $I_1 \equiv  \{\theta\in\Theta|\pi(\theta,a_1)>0\}$ and $I_2 \equiv \{\theta\in\Theta|\pi(\theta,a_2)>0\}$ be the states associated with $a_1$ and $a_2$ in the support of $\pi$, respectively. 
By \cref{theorem-cyclical}, since $\pi$ is stable, it must be $u_R$-obedient, which implies
\begin{equation}\label{eq-obedient}
\sum_{\theta\in I_1}[u_R(\theta,a_1)-u_R(\theta,a_2)]\frac{\pi(\theta,a_1)}{\pi_a(a_1)}\geq 0\geq \sum_{\theta' \in I_2}[u_R(\theta',a_1)-u_R(\theta',a_2)]\frac{\pi(\theta',a_2)}{\pi_a(a_2)}
\end{equation}

Furthermore, since $u_S$ is strictly supermodular, $\pi$ is also comonotone by \autoref{theorem-cyclical} and \autoref{lemma: comonotone}, so any $\theta\in I_1$ and $\theta'\in I_2$ satisfies $\theta\leq \theta'$. Since $u_R$ is submodular, we have $u_R(\theta,a_1)-u_R(\theta,a_2) \leq u_R(\theta',a_1)-u_R(\theta',a_2)$ for all $\theta\in I_1$ and $\theta'\in I_2$, which implies 
\begin{equation*}
\max_{\theta\in I_1} \big\{ u_R(\theta,a_1)-u_R(\theta,a_2) \big\} \leq \min_{\theta' \in I_2} \big\{ u_R(\theta',a_1)-u_R(\theta',a_2) \big\}.	
\end{equation*}
So
\begin{equation}\label{eq-reverse-obedient}
\begin{aligned}
    \sum_{\theta\in I_1}[u_R(\theta,a_1)-u_R(\theta,a_2)]\frac{\pi(\theta,a_1)}{\pi_a(a_1)}&\leq \max_{\theta\in I_1} \big\{ u_R(\theta,a_1)-u_R(\theta,a_2) \big\} \\
    &\leq \min_{\theta' \in I_2} \big\{ u_R(\theta',a_1)-u_R(\theta',a_2) \big\} \\
    &\leq \sum_{\theta'\in I_2}[u_R(\theta',a_1)-u_R(\theta',a_2)]\frac{\pi(\theta',a_2)}{\pi_a(a_2)}
\end{aligned}
\end{equation}
Combining \eqref{eq-obedient} and \eqref{eq-reverse-obedient},  we have
\begin{equation*}
	\sum_{\theta\in I_1}[u_R(\theta,a_1)-u_R(\theta,a_2)]\frac{\pi(\theta,a_1)}{\pi_a(a_1)}= \max_{\theta\in I_1} \big\{ u_R(\theta,a_1)-u_R(\theta,a_2) \big\} =0
\end{equation*}
and 
\begin{equation*}
	\sum_{\theta' \in I_2}[u_R(\theta',a_1)-u_R(\theta',a_2)]\frac{\pi(\theta',a_2)}{\pi_a(a_2)}= \min_{\theta' \in I_2} \big\{ u_R(\theta',a_1)-u_R(\theta',a_2) \big\} =0
\end{equation*}
So $u_R(\theta,a_1)=u_R(\theta,a_2)$ for all $\theta\in I_1\cup I_2$. 

Since the argument above works for any $a_1,a_2\in \supp(\pi_a)$, it implies $u_R(\theta,a)=u_R(\theta,a')$ for all $\theta\in \Theta$ and all $a,a'\in \supp(\pi_a)$. However, this is a contradiction since by assumption, there exists no $a,a'\in A$ such that $a\neq a'$ and $u_R(\theta,a)=u_R(\theta,a')$ for all $\theta$.

Therefore $\supp(\pi_a)$ must be a singleton, denoted by $a^*$. Then $u_R$-obeidence implies $a^*\in \argmax_{a\in A} \sum_\theta\mu_0(\theta)u(\theta,a)$. So $\pi$ is a no-information outcome.

\subsection{Proof of \autoref{proposition: alignedpreference}}

\begin{proof}[Proof of statement 1]
    For each $a\in A$, let 
    \[P_a \equiv \{\mu\in\Delta(\Theta)|\sum_\theta \mu(\theta)u_R(\theta,a)>\sum_\theta \mu(\theta)u_R(\theta,a'),\forall a'\neq a\}\] which denotes the set of beliefs such that $a$ is the Receiver's strict best response. We prove our claim under the assumption that there exists $a^\circ \in A$ such that $\mu_{0} \in P_{a^\circ}$ (i.e. $a^\circ$ is the unique best response to $\mu_0$). Later we will show that this assumption holds for generic priors.

When the Sender's test is uninformative, the Receiver best responds to the Sender's messages by choosing $a^\circ$. The Sender's payoff is 
\begin{equation*}
v_0\equiv\sum_{\theta\in \Theta} \mu_0(\theta) u_S(\theta, a^0).	
\end{equation*}
We will show that there exists a stable outcome distribution that gives the Sender a higher payoff than $v_0$.

We consider the case where the sender benefits from persuasion, so $a^\circ \neq \overline{a}$, otherwise the Receiver is already choosing the sender's favourite action under the prior. For $\epsilon$ sufficiently small, consider the outcome distribution $\pi^{\epsilon}\in\Delta(\Theta\times A)$ defined by
\begin{equation*}
	\pi^{\epsilon}(\theta,a) =
	\begin{cases}
		\mu_{0}(\theta) & \text{if } \theta \ne \overline{\theta}, a = a^{\circ} \\
		\mu_{0}(\overline{\theta}) - \epsilon & \text{if } (\theta,a)  = (\overline{\theta}, a^{\circ}) \\
		\epsilon & \text{if } (\theta,a)  = (\overline{\theta}, \overline{a}) \\
		0 & \text{otherwise .}
	\end{cases}
\end{equation*}

We will show that for $\epsilon$ sufficiently small, $\pi^\epsilon$ is stable and gives the Sender higher payoff than $v_0$.

It can be easily seen that the support of $\pi^{\epsilon}$ is comonotone. Since $u_S$ is supermodular, $\pi^\epsilon$ is $u_S$-cyclically monotone by \cref{lemma: comonotone}.

Next we verify that for $\epsilon$ sufficiently small, $\pi^{\epsilon}$ satisfies $u_R$-obedience at the two actions $\{\overline{a}, a^\circ\}$. For $a^\circ$, note that since $\mu_{0}\in P_{a^\circ}$, we have
\begin{equation*}
\sum_{\theta\in \Theta} \mu_{0}(\theta)u(\theta,a^\circ)  > \sum_{\theta\in \Theta} \mu_{0}(\theta)\pi(\theta,a')  \text{ for all } a'\in A,
\end{equation*}
so for $\epsilon$ sufficiently small, 
\begin{equation*}
\sum_{\theta\in \Theta} \mu_{0}(\theta)u(\theta,a^\circ)-\epsilon u( \overline{\theta},a^\circ) \geq \sum_{\theta\in \Theta} \mu_{0}(\theta)\pi(\theta,a')-\epsilon u(\overline{\theta},a') \text{ for all } a'\in A.
\end{equation*}
which means $\pi^\epsilon$ satisfies $u_R$-obedience at $a^\circ$.

For $\overline{a}$, note that since every Receiver action is a best response to some belief (recall that this was assumed without loss of generality as explained in \autoref{section: whenisrestrictive}), there exists $\overline{\mu}\in\Delta(\Theta)$ such that $\overline{a} \in \arg\max_a \sum_\theta \overline{\mu}(\theta)u_R(\theta,a)$. So for every $a'\ne \overline{a}$,
\begin{equation*}
\sum_\theta \overline{\mu}(\theta)[u_R(\theta,\overline{a})-u_R(\theta,a')]\geq 0
\end{equation*}
Since $u_R$ is supermodular, $u_R(\theta,\overline{a})-u_R(\theta,a')$ is increasing in $\theta$, so if a belief $\mu'$ first order stochastically dominates $\overline{\mu}$, then
\begin{equation*}
\sum_\theta \mu'(\theta)[u_R(\theta,\overline{a})-u(\theta,a')]\geq \sum_\theta \overline{\mu}(\theta)[u_R(\theta,\overline{a})-u(\theta,a')]\geq 0 \text{ for all } a'\ne \overline{a}.
\end{equation*}
In particular, the Dirac measure $\delta_{\overline{\theta}}$ first order stochastically dominates $\overline{\mu}$, so the inequality above implies
\begin{equation*}
u_R(\overline{\theta},\overline{a})- u_R(\overline{\theta},a') \ge 0 \text{ for all } a'\ne \overline{a}.
\end{equation*}
So $\overline{a} \in \argmax_a u_R( \overline{\theta},a)$, and $\pi^\epsilon$ is $u_R$-obedient at action $\overline{a}$.
\bigskip

Finally, we show that the Sender obtains higher payoff from $\pi^\epsilon$ than $v_0$. Note that since by our assumption, $u_S(\overline{\theta},a')< u_S(\overline{\theta},\overline{a})$ for all $a'\ne \overline{a}$, we have
\begin{align*}
\sum_{\theta,a} \pi^{\epsilon}(\theta,a)u_S(\theta,a)&=\sum_{\theta \ne \overline{\theta}}	\mu_{0}(\theta)u_S(\theta,a^\circ)+(\mu_{0}(\overline{\theta})-\epsilon)u_S( \overline{\theta},a^\circ)+\epsilon u_S(\overline{\theta},\overline{a})\\
&>\sum_{\theta \ne \overline{\theta}}	\mu_{0}(\theta)u_S(\theta,a^\circ)+(\mu_{0}(\overline{\theta})-\epsilon)u_S( \overline{\theta},a^\circ)+\epsilon u_S(\overline{\theta},{a}^\circ)\\
&=\sum_{\theta}	\mu_{0}(\theta)u_S(\theta,a^\circ) = v_0.
\end{align*}
Therefore, Sender receives a strictly higher payoff from $\pi^{\epsilon}$ than $v_0$. This completes the proof.

The rest of the proof shows that the set $\Delta(\Theta)/\{\cup_{a\in A} P_a\}$ is negligible in $\Delta(\Theta)$.

Define $H_{a,a'} \equiv \{\mu\in\Delta(\Theta)|\sum_\theta\mu(\theta) (u_R(\theta,a)-u_R(\theta,a'))=0\}$ for any $a\neq a'$. Since by assumption, $u_R(\cdot,a)-u_R(\cdot,a')\neq {\bf {0}}$, which implies $J_{a,a'}\equiv \{\mu\in \mathbb{R}^{|\Theta|}|\sum_\theta\mu(\theta) (u_R(\theta,a)-u_R(\theta,a'))=0\}$ is a hyperplane in $\mathbb{R}^{|\Theta|}$. Notice that $H_{a,a'}=J_{a,a'}\cap \Delta (\Theta)$, which is the intersection of a hyperplane with a simplex. Since the hyperplane includes $\mathbf{0}$ and $\Delta (\Theta)$ doesn't, they have to either be parallel with no intersection, or their intersection is in a lower dimensional space, which has measure $0$ in $\Delta(\Theta)$.

For any $\mu\in \Delta(\Theta)/\{\cup_{a\in A} P_a\}$, since the maximizer of $\sum_\theta\mu(\theta)u_R(\theta,a)$ is not unique, there exists $a,a'$ such that $\sum_\theta\mu(\theta) (u_R(\theta,a)-u_R(\theta,a'))=0$. So $\Delta(\Theta)/\{\cup_{a\in A} P_a\}\subset \cup_{a\neq a'} H_{a,a'}$, which implies $\Delta(\Theta)/\{\cup_{a\in A} P_a\}$ is a negligible set in $\Delta(\Theta)$.

\end{proof}
\begin{proof}[Proof of statement 2]
  For any generic prior $\mu^\circ\in \cup_{a\in A} P_a$, either $\mu^\circ \notin P_{\underline{a}}$ or $\mu^\circ \notin P_{\overline{a}}$. We consider the case $\mu^\circ \notin P_{\overline{a}}$, and the other case can be shown symmetrically. Similar as the previous argument, for $\epsilon$ sufficiently small, consider the outcome distribution $\pi^{\epsilon}\in\Delta(\Theta\times A)$:
\begin{equation*}
	\pi^{\epsilon}(\theta,a) =
	\begin{cases}
		\mu_{0}(\theta) & \text{if } \theta \ne \overline{\theta}, a = a^{\circ} \\
		\mu_{0}(\overline{\theta}) - \epsilon & \text{if } (\theta,a)  = (\overline{\theta}, a^{\circ}) \\
		\epsilon & \text{if } (\theta,a)  = (\overline{\theta}, \overline{a}) \\
		0 & \text{otherwise }
	\end{cases}
\end{equation*}
As we have shown in the proof of statement 1, for $\epsilon$ sufficiently small, $\pi^\epsilon$ is stable, and gives the Sender higher payoff than $v_0$. Therefore, the sender benefits from credible persuasion.

\end{proof}
\begin{proof}[Proof of statement 3]

Consider any fully revealing outcome distribution $\pi^*\in \Delta(\Theta\times A)$ which gives a strictly higher payoff to the Sender than every no-information outcome. Let $A^*(\theta) \equiv \argmax_{a\in A} u_R(\theta, a )$ denote the Receiver's best response correspondence. By definition, for every $(\theta,a)\in \supp(\pi^*)$, $a\in A^*(\theta)$. If $\pi^*$ is comonotone, then then from \autoref{theorem-cyclical} and \autoref{lemma: comonotone}, $\pi^*$ is credible so the result follows. If $\pi^*$ is not comonotone, then there exists $(\theta,a)$, $(\theta',a')$ in the support of $\pi^*$ where $\theta>\theta'$ and $a<a'$. Let $d=\max\{\theta'-\theta|(\theta,a),(\theta',a')\in\supp(\pi^*),\theta>\theta',a<a'\}$ denote the largest distance of states between those ``non-monotone" pairs. Suppose $(\theta_1,a_1)$, $(\theta_2,a_2)$ is a pair that induces the largest distance, where $\theta_1<\theta_2$ and $a_1>a_2$. 
 
 Let $\epsilon=\min\{\pi^*(\theta_1,a_1),\pi^*(\theta_2,a_2)\}$, and construct the following outcome distribution $\pi'$:
 \begin{itemize}[nolistsep]
  	\item $\pi'(\theta_1,a_1)=\pi^*(\theta_1,a_1)-\varepsilon$, $\pi'(\theta_2,a_2)=\pi^*(\theta_2,a_2)-\varepsilon$
		\item $\pi'(\theta_1,a_2)=\pi^*(\theta_1,a_2)+\varepsilon$, $\pi'(\theta_2,a_1)=\pi^*(\theta_2,a_1)+\varepsilon$
		\item $\pi'(\theta,a)=\pi^*(\theta,a)$ for any other $(\theta,a)$
 \end{itemize}
 
 For any $a\notin\{a_1,a_2\}$, the obedient constraint under $\pi'$ is the same as under $\pi^*$, so the obedient constraint still holds. For $a\in\{a_1,a_2\}$, we show that the obedient constraint is still satisfied. 
 
 Since $u_R(\theta,a)$ is supermodular, by Lemma 2.8.1 of \cite{topkis2011}, $A^*(\theta)$ is increasing in $\theta$ in the induced set order. That is, for any $\theta>\theta'$, $a\in A^*(\theta)$, and $a'\in A^*(\theta')$, we have $\max\{a,a'\}\in A^*(\theta)$ and $\min\{a,a'\}\in A^*(\theta')$. Since $a_1\in A^*(\theta_1)$ and $a_2\in A^*(\theta_2)$, we have $a_1\in A^*(\theta_2)$ and $a_2\in A^*(\theta_1)$. Therefore, $\pi'$ also satisfies obedient. Moreover, the Sender's payoff from $\pi'$ is greater than from $\pi^*$, because $u_S$ is supermodular.
 
 Now we can iterate the process until $d=0$, and we construct an outcome distribution which is comonotone, obedient, and gives the Sender a higher payoff than $\pi^*$. Since the Sender's payoff from $\pi^*$ is strictly greater than any no-information outcome, the Sender benefits from credible persuasion.

\end{proof}

\subsection{Proof of \autoref{proposition: no cost}}
  From Theorem 1 of \citet{mensch2019monotone}, if both $u_S$ and $u_R$ are supermodular and $|A|=2$, there exists a KG optimal outcome distribution that is comonotone. Then by \autoref{theorem-cyclical} and \autoref{lemma: comonotone}, such an outcome distribution is stable. Moreover, if in addition $u_S$ is strictly supermodular, any KG optimal outcome distribution is comonotone. So any KG optimal outcome distribution is stable.

\subsection{Proof of \autoref{prop: adverse selection}}

For each buyers' belief over quality, $\mu \in \Delta(\Theta)$, let $\underline{\theta}_\mu$ denote the smallest $\theta$ in the support of $\mu$. In addition, for each  $\mu \in \Delta(\Theta)$, let $\phi_\mu(x) \equiv E_\mu[v(\theta)|\theta\leq x]$ denote the corresponding expected value to buyers when the quality threshold is $\theta\le x$.\footnote{For $x$ less than $\underline{\theta}_\mu$ we set $\phi_\mu(x)=v(\underline{\theta}_\mu)$.}  Clearly, $\phi_\mu(\cdot)$ is increasing and $\phi_\mu(1)=E_\mu[v(\theta)]$.

\begin{lemma}\label{lemma:adverse:fixedpoint}
For every $\mu \in \Delta(\Theta)$, there exists a largest fixed point $\theta^*_\mu\in(\underline{\theta}_\mu,1)$ such that  $\phi_\mu(\theta^*_\mu)=\theta^*_\mu$. Moreover, for any $\theta\in(\theta^*_\mu,1]$, $\phi_\mu(\theta)<\theta$.
\end{lemma}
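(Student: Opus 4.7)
The plan is to construct $\theta^*_\mu$ as the supremum of the set on which $\phi_\mu$ weakly dominates the identity, and then to verify the fixed-point equation and the strict inequality on $(\theta^*_\mu,1]$ by exploiting the step-function structure of $\phi_\mu$ that comes from $\Theta$ being finite.

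First I would check the boundary behavior of $g_\mu(x):=\phi_\mu(x)-x$ on $[\underline{\theta}_\mu,1]$. Gain from trade gives $g_\mu(\underline{\theta}_\mu)=v(\underline{\theta}_\mu)-\underline{\theta}_\mu>0$, and $g_\mu(1)=E_\mu[v(\theta)]-1<0$ in the adverse-selection regime of interest (the implicit condition ensuring $\theta^*_\mu<1$; otherwise the conclusion would be vacuous). Next I would spell out the structure of $\phi_\mu$: enumerate $\supp\mu=\{\theta_1<\cdots<\theta_K\}$, set $c_k:=E_\mu[v(\theta)\mid \theta\leq \theta_k]$ and $\theta_{K+1}:=1$. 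Then $\phi_\mu\equiv c_k$ on $[\theta_k,\theta_{k+1})$. Because $v$ is increasing, $v(\theta_{k+1})>c_k$, so $\phi_\mu$ has a strict upward jump at each atom above $\underline{\theta}_\mu$; consequently $g_\mu$ is continuous and strictly decreasing with slope $-1$ on each $[\theta_k,\theta_{k+1})$ and has upward jumps at the atoms.

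I would then set
\[
\theta^*_\mu \;:=\; \sup\{x\in[\underline{\theta}_\mu,1]:\,g_\mu(x)\geq 0\}.
\]
This set contains a right-neighborhood of $\underline{\theta}_\mu$, where $g_\mu$ stays positive, so $\theta^*_\mu>\underline{\theta}_\mu$; the bound $g_\mu(1)<0$ forces $\theta^*_\mu<1$. Localize $\theta^*_\mu\in[\theta_k,\theta_{k+1})$ for some $k$. On this interval $g_\mu$ is continuous and strictly decreasing, so the supremum of $\{g_\mu\geq 0\}$ within it is the unique zero $x=c_k$, yielding $\phi_\mu(\theta^*_\mu)=c_k=\theta^*_\mu$. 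For any $\theta\in(\theta^*_\mu,\theta_{k+1})$, strict monotonicity gives $g_\mu(\theta)<0$; for $\theta\geq \theta_{k+1}$, any inequality $g_\mu(\theta)\geq 0$ would place $\theta$ in $\{g_\mu\geq 0\}$, contradicting $\theta>\theta^*_\mu$. Hence $\phi_\mu(\theta)<\theta$ on all of $(\theta^*_\mu,1]$, and maximality follows immediately: any fixed point $x$ satisfies $g_\mu(x)=0$, so $x\leq \theta^*_\mu$.

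The main obstacle I anticipate is handling the upward jumps of $\phi_\mu$ carefully: the supremum defining $\theta^*_\mu$ must be attained as a genuine equality $\phi_\mu(\theta^*_\mu)=\theta^*_\mu$ and not fall on a discontinuity, and the strict inequality $\phi_\mu(\theta)<\theta$ must persist through every subsequent upward jump in $(\theta^*_\mu,1]$. Both issues are resolved by combining the supremum definition with the strict within-interval monotonicity of $g_\mu$ on each $[\theta_k,\theta_{k+1})$, as sketched above.
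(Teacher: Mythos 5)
Your proof is correct, but it takes a more elementary route than the paper. The paper simply invokes Tarski's fixed-point theorem: since $\phi_\mu$ is increasing, $\phi_\mu(\underline{\theta}_\mu)=v(\underline{\theta}_\mu)>\underline{\theta}_\mu$, and $\phi_\mu(1)=E_\mu[v(\theta)]<1$, the map sends the complete lattice $[\underline{\theta}_\mu,1]$ into itself, so a largest fixed point $\theta^*_\mu$ exists and must be interior; and if some $\theta\in(\theta^*_\mu,1)$ had $\phi_\mu(\theta)\geq\theta$, applying Tarski on $[\theta,1]$ would yield a fixed point above $\theta^*_\mu$, a contradiction. You instead unwind the Knaster--Tarski construction by hand, taking $\theta^*_\mu=\sup\{x:\phi_\mu(x)\geq x\}$ and then using the step-function structure of $\phi_\mu$ over the finitely many atoms to verify that this supremum is attained as a genuine fixed point and that $g_\mu<0$ to its right. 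This is a valid and illuminating alternative, but it is tied to $\Theta$ being finite, whereas the Tarski argument only requires monotonicity of $\phi_\mu$ (plus the two boundary inequalities). Also, the concern you flag about the supremum landing on a discontinuity is handled more cleanly by noting that $\phi_\mu$ has only upward jumps, so $\{g_\mu\geq 0\}$ is closed and the sup is automatically attained---you do not actually need the interval-by-interval localization for this step. Your observation that the lemma implicitly needs $E_\mu[v(\theta)]<1$ is also correct; the paper's own proof makes this same tacit assumption.
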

\begin{proof}
	Since $\phi_\mu(\underline{\theta}_\mu)=v(\underline{\theta}_\mu)>\underline{\theta}_\mu$, $\phi_\mu(1)=E_\mu[v(\theta)]<1$, and $\phi_\mu(\cdot)$ is increasing, from Tarski's fixed point theorem, there exists a largest fixed point $\theta^*_\mu\in (\underline{\theta}_\mu,1)$ such that $\phi_\mu(\theta^*_\mu)=\theta^*_\mu$. To see the second statement, suppose there exists $\theta\in(\theta^*_\mu,1)$ such that $\phi_\mu(\theta)\geq \theta$, again from Tarski's fixed point theorem, there exists a fixed point $\theta'\in (\theta^*_\mu,1)$, which contradicts to $\theta_{\mu}^*$ being the largest fixed point.
\end{proof}

\begin{lemma}\label{lemma: BNEexist}
Let $\lambda\in \Delta(\Theta \times M)$ be a test, and for every $m\in M$ let $\mu_m \in \Delta(\Theta)$ denote the buyers' posterior belief after observing message $m$.  The following strategy profile is a BNE in the game $\langle G,\lambda \rangle$: $\alpha_S(\theta,m)=\theta$, $\beta_1(m)=\beta_2(m)=\theta^*_{\mu_m}$.
\end{lemma}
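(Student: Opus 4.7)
The plan is to verify that each player's prescribed strategy is a best response to the others, and this is fairly routine given \autoref{lemma:adverse:fixedpoint}. I would break the verification into two parts: (i) the seller's strategy $\alpha_S(\theta, m) = \theta$ is in fact weakly dominant in the base game, independent of the buyers' bids; (ii) each buyer's strategy $\beta_i(m) = \theta^*_{\mu_m}$ is a best response to the opposing buyer bidding the same and the seller playing $\alpha_S(\theta, m) = \theta$.

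For part (i), fix the buyers' bids $b_1, b_2$ and let $b \equiv \max\{b_1, b_2\}$. The seller's payoff is $b$ if $a_S \le b$ and $\theta$ otherwise. Setting $a_S = \theta$ achieves $\max\{b, \theta\}$ in every case, which is the pointwise maximum attainable by any choice of $a_S$. Hence $\alpha_S(\theta, m) = \theta$ is weakly dominant and in particular a best response for every posterior $\mu_m$.

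For part (ii), fix the message $m$, write $\mu \equiv \mu_m$ and $\theta^* \equiv \theta^*_\mu$, and suppose buyer $-i$ bids $\theta^*$ while the seller plays $\alpha_S(\theta, m) = \theta$. The key observation is that buyer $i$'s expected payoff from bidding $b_i$ separates into three cases. If $b_i > \theta^*$, buyer $i$ wins outright iff $\theta \le b_i$, giving expected payoff $\Pr_\mu(\theta \le b_i)\bigl[\phi_\mu(b_i) - b_i\bigr]$, which is nonpositive because \autoref{lemma:adverse:fixedpoint} gives $\phi_\mu(b_i) < b_i$ for $b_i > \theta^*$. If $b_i = \theta^*$, buyer $i$ ties whenever $\theta \le \theta^*$, yielding $\tfrac{1}{2}\Pr_\mu(\theta \le \theta^*)\bigl[\phi_\mu(\theta^*) - \theta^*\bigr] = 0$ by the fixed-point equation $\phi_\mu(\theta^*) = \theta^*$. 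If $b_i < \theta^*$, buyer $i$ loses for sure and obtains $0$. Thus bidding $\theta^*$ is (weakly) optimal, confirming the buyer best-response condition. Combining the two parts yields a BNE.

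The only subtlety is a boundary case in part (ii): when $b_i > \theta^*$, one must handle $b_i$ values below $\underline{\theta}_\mu$ separately (where $\phi_\mu$ was defined by convention), but these are irrelevant since $\theta^* > \underline{\theta}_\mu$. No step is genuinely hard---\autoref{lemma:adverse:fixedpoint} does the real work by guaranteeing that $\theta^*$ is exactly the threshold at which the expected conditional value equals the bid, so bidding higher incurs a strictly negative ``winner's curse'' margin while bidding the fixed point leaves the buyer indifferent.
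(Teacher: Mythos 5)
Your proposal is correct and follows essentially the same route as the paper: the buyers' deviation cases (tie at $\theta^*_{\mu_m}$ gives zero by the fixed-point equation, lower bids give zero, higher bids are strictly negative by Lemma~\ref{lemma:adverse:fixedpoint}) are exactly the paper's argument. You additionally spell out that $\alpha_S(\theta,m)=\theta$ is weakly dominant, which the paper treats as a standing observation in the text rather than repeating inside the lemma's proof.
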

\begin{proof}
For every message $m$, since $\phi_{\mu_m}(\theta_{\mu_{m}}^*)=\theta_{\mu_{m}}^*$, each buyer's expected payoff is 0. Any deviation to a lower bid also gives a payoff of zero. From \autoref{lemma:adverse:fixedpoint}, for any $\theta\in(\theta^*_{\mu_m},1]$, $\phi_{\mu_m}(\theta)<\theta$, so any deviation to a bid higher than $\theta^*_{\mu_m}$ would lead to a negative payoff. Therefore no buyer has incentive to deviate.
\end{proof}

\begin{lemma}\label{lemma:adverse:IC}
Let $(\lambda^*, \sigma^*)$ a WD-IC profile. For each message $m$, let $p(m) \equiv \max\{\beta_1^*(m),\beta_2^*(m)\}$ denote the equilibrium market price  in the game $\langle G,\lambda^* \rangle$. Then $\phi_{\mu_m}(p(m))=p(m) \text{ for each } m\in M$.
\end{lemma}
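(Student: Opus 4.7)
The plan is to fix an arbitrary message $m$ and work inside the ``subgame'' after $m$ is realized, exploiting the fact that by WD-IC the seller uses $\alpha_S(\theta,m)=\theta$, so that trade at bid $b$ occurs exactly when $\theta\le b$ and the winning buyer's expected surplus is $P_{\mu_m}(\theta\le b)\bigl[\phi_{\mu_m}(b)-b\bigr]$. I will establish the two inequalities $\phi_{\mu_m}(p(m))\le p(m)$ and $\phi_{\mu_m}(p(m))\ge p(m)$ separately, and then combine them.

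First I would show $\phi_{\mu_m}(p(m))\le p(m)$ by a Bertrand undercutting argument. Suppose for contradiction that $\phi_{\mu_m}(p(m))>p(m)$. Consider a buyer who either loses outright or, in the tie case, would rather win outright than split. Because $\Theta$ is finite, $\phi_{\mu_m}$ is a right-continuous step function that is constant on each interval between consecutive mass points of $\mu_m$; thus there exists $\varepsilon>0$ small enough that $\phi_{\mu_m}(p(m)+\varepsilon)=\phi_{\mu_m}(p(m))>p(m)+\varepsilon$ and $P_{\mu_m}(\theta\le p(m)+\varepsilon)=P_{\mu_m}(\theta\le p(m))$. Deviating to the bid $p(m)+\varepsilon$ then wins the auction with probability one and yields strictly positive expected payoff, contradicting BNE.

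Next I would show $\phi_{\mu_m}(p(m))\ge p(m)$ by individual rationality of the winning buyer. If $p(m)\ge \underline{\theta}_{\mu_m}$, then $P_{\mu_m}(\theta\le p(m))>0$, so assuming $\phi_{\mu_m}(p(m))<p(m)$ the winner's (or each tied buyer's) expected payoff is strictly negative, and deviating to a bid below $\underline{\theta}_{\mu_m}$ gives $0$, a profitable deviation. It remains to rule out the edge case $p(m)<\underline{\theta}_{\mu_m}$: in that case no trade occurs, but either buyer can deviate to a bid in $(\underline{\theta}_{\mu_m}, \underline{\theta}_{\mu_m}+\varepsilon)$ for small $\varepsilon>0$, winning with positive probability at a price arbitrarily close to $\underline{\theta}_{\mu_m}$ and value arbitrarily close to $v(\underline{\theta}_{\mu_m})>\underline{\theta}_{\mu_m}$, yielding strictly positive expected payoff and again contradicting BNE. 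Hence $p(m)\ge \underline{\theta}_{\mu_m}$ and $\phi_{\mu_m}(p(m))\ge p(m)$.

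I expect the main obstacle to be handling the discreteness of $\mu_m$ carefully: $\phi_{\mu_m}$ has jump discontinuities at the mass points of $\mu_m$, so the Bertrand-style deviation only works if one picks $\varepsilon$ that avoids crossing a mass point, which is where right-continuity (rather than continuity) is essential. Tie-breaking is a related nuisance: when $\beta_1^*(m)=\beta_2^*(m)=p(m)$, both buyers receive half of the joint surplus, so the profitable deviation must be assigned to whichever tied buyer's share is non-positive (for the lower bound) or to either (for the upper bound), but in both cases at least one buyer strictly prefers to deviate unless $\phi_{\mu_m}(p(m))=p(m)$.
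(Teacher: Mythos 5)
Your proof is essentially correct and follows the same strategy as the paper's: show $\phi_{\mu_m}(p(m))\geq p(m)$ by having the winner drop to a zero bid, and show $\phi_{\mu_m}(p(m))\leq p(m)$ by having a losing (or tied) buyer overbid by a small $\varepsilon$, noting that a tied buyer profits because winning the whole surplus net of the $\varepsilon$ price increase beats half of the surplus for $\varepsilon$ sufficiently small (the paper makes this tie-case comparison explicit with the constant $K/2$). Your step-function/right-continuity framing is a harmless variant of the paper's appeal to $\phi_{\mu_m}$ being increasing.

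One small organizational slip: you place the ``edge case'' $p(m)<\underline{\theta}_{\mu_m}$ inside the argument for $\phi_{\mu_m}(p(m))\geq p(m)$, but there it is vacuous---the paper's convention sets $\phi_{\mu_m}(x)=v(\underline{\theta}_{\mu_m})>\underline{\theta}_{\mu_m}>x$ for $x<\underline{\theta}_{\mu_m}$, so that case is incompatible with the hypothesis $\phi_{\mu_m}(p(m))<p(m)$. Where the edge case actually bites is in the $\leq$ direction: if $p(m)<\underline{\theta}_{\mu_m}$, then for your small $\varepsilon$ you have $P_{\mu_m}(\theta\leq p(m)+\varepsilon)=0$, so the overbidding deviation gives payoff zero, not strictly positive. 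The remedy is exactly the deviation you already describe (bid just above $\underline{\theta}_{\mu_m}$ instead of just above $p(m)$); it just needs to appear in the $\leq$ branch rather than the $\geq$ branch.
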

\begin{proof}
Suppose $\phi_{\mu_m}(p(m))<p(m)$, then the winning buyer's payoff is negative, and can profitably deviate to bid 0. Now suppose $\phi_{\mu_m}(p(m))>p(m)$, we show that at least one buyer has an incentive to bid a higher price. 

If $\beta^*_1(m)\neq \beta^*_2(m)$, then the losing bidder can profitably deviate. Since $\phi_{\mu_m}(\cdot)$ is increasing, there exists small enough $\varepsilon$ such that $\phi_{\mu_m}(p(m)+\varepsilon)>p(m)+\varepsilon$. So the losing bidder can deviate to bidding $p(m)+\varepsilon$ and receives a strictly positive payoff.

If $\beta^*_1(m)= \beta^*_2(m)=b$ for some $b$, we show that both buyers have an incentive to deviate. Let $K\equiv \phi_{\mu_m}(b)-b>0$. Since ties are broken evenly, each buyer's payoff is $\frac{1}{2}P_{\mu_m}(\theta\leq b)K$.
By letting $\varepsilon<\frac{K}{2}$, we have 
\begin{align*}
\phi_{\mu_m}(b+\varepsilon)-b-\varepsilon \geq \phi_{\mu_m}(b)-b-\varepsilon = K-\varepsilon > \frac{K}{2}.
\end{align*}
So if either of the bidders deviates to bidding $b+\varepsilon$, he receives a payoff of $P_{\mu_m}(\theta\leq b+\varepsilon)[\phi_{\mu_m}(b+\varepsilon)-b-\varepsilon]> \frac{1}{2}P_{\mu_m}(\theta\leq b)K$, which is profitable.
\end{proof}

\begin{lemma}\label{lemma:adverse:cyclical}
If a profile $(\lambda^*,\sigma^*)$ is credible and WD-IC, then there exists a set $E \subset \Theta\times M$ such that $\lambda^*(E)=1$, and for any $(\theta,m)$, $(\theta',m')\in E$ with $\theta>\theta'$ and $p(m)>p(m')$,
\[
\max\{\theta,p(m)\}+\max\{\theta',p(m')\}\geq \max \{ \theta, p({m'}) \} + \max \{\theta',p(m)\}.
\]
\end{lemma}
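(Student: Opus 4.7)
The plan is to recognize that, under WD-IC, the Sender's credibility condition is literally an optimal transport problem with cost function $v(\theta,m) \equiv \max\{\theta, p(m)\}$, and then read the two-cycle inequality off of $v$-cyclical monotonicity of the support of $\lambda^*$. The main work is bookkeeping; the substantive step is to identify the Sender's realized payoff correctly and to justify a pairwise-swap argument.

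First I would fix the equilibrium price $p(m) \equiv \max\{\beta_1^*(m), \beta_2^*(m)\}$ for each $m\in M$. Because $(\lambda^*,\sigma^*)$ is WD-IC, the Sender plays her weakly dominant ask $\alpha_S(\theta,m) = \theta$, so her realized payoff on the pair $(\theta,m)$ is exactly $v(\theta,m) = \max\{\theta, p(m)\}$: if $p(m)\geq \theta$ the car is sold at the market price $p(m)$, and otherwise the Sender retains the car for reservation value $\theta$. Substituting this into \autoref{definition: credible} gives
\[
\lambda^* \in \argmax_{\lambda \in D(\lambda^*)} \int_{\Theta\times M} v(\theta,m)\, d\lambda(\theta,m),
\]
which is the Monge--Kantorovich problem of coupling the marginals $\mu_0$ on $\Theta$ and $\lambda^*_M$ on $M$ to maximize $\int v\, d\lambda$.

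Second, I would appeal to Kantorovich duality in its measure-theoretic form (Theorem 5.10 of \citet{villani}, of which Lemma~2 in \autoref{section: relationship} is the finite special case). The function $v$ is bounded and upper semi-continuous (a max of continuous functions on a compact set), so the duality applies and yields a measurable set $E \subset \Theta\times M$ with $\lambda^*(E)=1$ that is $v$-cyclically monotone: for every $n$ and every $(\theta_1,m_1),\ldots,(\theta_n,m_n)\in E$ with $m_{n+1}\equiv m_1$,
\[
\sum_{i=1}^n v(\theta_i,m_i) \;\geq\; \sum_{i=1}^n v(\theta_i,m_{i+1}).
\]
Specializing to $n=2$ and unpacking $v$ yields, for every $(\theta,m),(\theta',m')\in E$,
\[
\max\{\theta,p(m)\}+\max\{\theta',p(m')\} \;\geq\; \max\{\theta,p(m')\}+\max\{\theta',p(m)\},
\]
which is precisely the claimed inequality (the restriction to $\theta>\theta'$ and $p(m)>p(m')$ just names the nontrivial orientation).

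The main obstacle is the appeal to the continuous Kantorovich duality, since $\Theta\subset[0,1]$ may be uncountable and the version proved in the paper (Lemma~2) is stated only for finite sets. A self-contained fallback, which I would include if the continuous result is considered heavy machinery, is a direct pairwise-swap argument: if the two-cycle inequality failed on a set of positive $\lambda^*$-measure, one could select measurable rectangles $B\times C$ and $B'\times C'$ around two offending pairs (with $\lambda^*(B\times C),\lambda^*(B'\times C')>0$) and define $\tilde\lambda\in D(\lambda^*)$ by transferring a small mass $\varepsilon$ from $B\times C$ to $B\times C'$ and an equal mass from $B'\times C'$ to $B'\times C$. Both marginals are preserved by construction, and the strict inequality on $v$ would make $\tilde\lambda$ a strictly profitable deviation, contradicting credibility. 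Either route delivers the desired set $E$.
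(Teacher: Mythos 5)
Your proof takes essentially the same route as the paper: identify the Sender's realized payoff under WD-IC as $v(\theta,m)=\max\{\theta,p(m)\}$, observe that credibility makes $\lambda^*$ an optimizer of the Monge--Kantorovich problem with cost $v$ over $D(\lambda^*)$, invoke the measure-theoretic result that an optimal plan is concentrated on a $v$-cyclically monotone set $E$, and specialize to two-cycles. The paper's Lemma~\ref{lemma:adverse:cyclical} proof is a nearly verbatim match, citing Theorem~1 of Beigl\"ock--Goldstern--Maresch--Schachermayer (2009) for the cyclical-monotonicity step.

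One small flaw in your justification: you assert $v$ is upper semi-continuous because it is ``a max of continuous functions on a compact set,'' but $p(m)=\max\{\beta_1^*(m),\beta_2^*(m)\}$ is only guaranteed to be Borel measurable (the Receivers' equilibrium strategies need not be continuous in $m$), so $v$ need not be u.s.c.\ and the version of Kantorovich duality you cite may not directly apply. The paper sidesteps this by citing the stronger result for bounded Borel measurable costs on Polish spaces, under which every finite optimal plan is $c$-cyclically monotone without semicontinuity assumptions; switching your citation to that result closes the gap. Your fallback pairwise-swap construction points at the right idea but would need to be made precise (e.g., take $\tilde\lambda = \lambda^* - \varepsilon(\kappa_1+\kappa_2) + \varepsilon(\kappa_1^\Theta\otimes\kappa_2^M + \kappa_2^\Theta\otimes\kappa_1^M)$ for normalized restrictions $\kappa_i$, and use Lusin's theorem to shrink the rectangles so $v$ is nearly constant on them); as written it is a sketch, not a proof.
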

\begin{proof}
Since $(\lambda^*, \sigma^*)$ is WD-IC, trade only happens when the seller's ask price $\alpha^*(\theta, m) =\theta$ is higher than the prevailing market price $p(m) = \max\{\beta^*_1(m),\beta^*_2(m)\} $. The seller's payoff function can therefore be simplified as
\[
u_S(\theta,\sigma^*(\theta,m))=u_S(\theta,\alpha^*(\theta,m),\beta^*_1(m),\beta^*_2(m))=\max\{\theta,p(m)\}.
\]
Recall that credibility requires
\[
\lambda\in \argmax_{\lambda'\in \Lambda(\mu_0, \lambda_M)}\int u_S(\theta,\sigma^*(\theta,m))  \, d\lambda'(\theta,m).
\]
Let $v_S(\theta,m)\equiv u_S(\theta,\sigma^*(\theta,m))=\max\{\theta,p(m)\}$. From Theorem 1 of \cite{optimaltransport2009}, $\lambda$ is $v_S$-cyclically monotone. That is, there exists a set  $E \subset \Theta\times M$ such that $\lambda^*(E)=1$, and for any sequence $\left(\theta_k,m_k\right)_{k=1}^n\in E$,
\[\sum_{k=1}^n v_S(\theta_k,m_k)\geq \sum_{k=1}^n v_S(\theta_k,m_{k+1}).\]
	
Suppose $(\theta,m), (\theta',m')\in E$, $\theta>\theta'$, and $p(m)>p({m'})$. Then $v_S$-cyclical monotonicity implies that 
\[
v_S(\theta,m)+v_S(\theta',m')\geq v_S(\theta,m')+v_S(\theta',m),
\] 
which is
\[
\max\{\theta,p(m)\}+\max\{\theta',p(m')\}\geq \max \{ \theta, p({m'}) \} + \max \{\theta',p(m)\}.
\]
\end{proof}

In light of \cref{lemma:adverse:cyclical}, for every credible profile $(\lambda^*,\sigma^*)$ we will focus only on pairs $(\theta,m)\in E$. We will  use $\proj(E)\equiv \{m \in M: (\theta,m) \in E\}$ to denote the projection of $E$ onto the message space.
\smallskip

Let $\underline{p}= \inf \{p(m)| m\in \proj(E)\}$ be the lowest trading price across all messages. For each message $m$, let $\Theta(m)=\{\theta:(\theta,m)\in E\}$ be the set of $\theta$ that is matched with $m$.
\begin{lemma}\label{lemma:adverse:nointersection}
Let $(\lambda^*, \sigma^*)$ be a credible and WD-IC profile. For every message  $\hat{m}\in \proj(E)$ such that $\hat{p}\equiv p(\hat{m})= \max\{\beta^*_1(\hat{m}),\beta^*_2(\hat{m})\}>\underline{p}$, we have $\Theta(\hat{m})\cap (\underline{p},\infty)=\emptyset$.
\end{lemma}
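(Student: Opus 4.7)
The plan is to derive a contradiction from the two-cycle version of the cyclical monotonicity supplied by \autoref{lemma:adverse:cyclical}. Suppose for contradiction that there exists $\theta_2\in \Theta(\hat m)$ with $\theta_2>\underline p$. I would produce a pair $(\theta_1,m')\in E$ with $p(m')$ small and $\theta_1$ small, and then show that the two-cycle inequality forces $p(m')\ge \min\{\theta_2,\hat p\}$, contradicting the choice of $m'$.

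Concretely, the first step is to pick $m'\in\proj(E)$ with $p(m')$ sufficiently close to $\underline p$ so that
\[
p(m')<\min\{\theta_2,\hat p\}.
\]
This is possible since $\underline p=\inf\{p(m):m\in\proj(E)\}$ and both $\theta_2>\underline p$ and $\hat p>\underline p$ by hypothesis. The second step is to locate some $\theta_1\in\Theta(m')$ with $\theta_1< p(m')$. I would take $\theta_1=\underline{\theta}_{\mu_{m'}}$: this is an atom of the finite-support posterior $\mu_{m'}$, hence $(\underline{\theta}_{\mu_{m'}},m')$ has positive $\lambda^*$-mass and therefore lies in the full-measure set $E$; moreover, combining \autoref{lemma:adverse:IC} (which states $\phi_{\mu_{m'}}(p(m'))=p(m')$) with the observation that $\phi_{\mu_{m'}}(\underline{\theta}_{\mu_{m'}})=v(\underline{\theta}_{\mu_{m'}})>\underline{\theta}_{\mu_{m'}}$ forces $\underline{\theta}_{\mu_{m'}}<p(m')$ (this is essentially the content of \autoref{lemma:adverse:fixedpoint}).

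Now apply \autoref{lemma:adverse:cyclical} to the pair $(\theta_2,\hat m)$ and $(\theta_1,m')$: since $\theta_2>\theta_1$ and $\hat p>p(m')$, it yields
\[
\max\{\theta_2,\hat p\}+\max\{\theta_1,p(m')\}\;\ge\;\max\{\theta_2,p(m')\}+\max\{\theta_1,\hat p\}.
\]
Under our choices $\theta_1\le p(m')<\hat p<\theta_2$ and $\theta_2>p(m')$, so this collapses to
\[
\max\{\theta_2,\hat p\}+p(m')\;\ge\;\theta_2+\hat p.
\]
If $\theta_2\le\hat p$ the left side equals $\hat p+p(m')$, giving $p(m')\ge\theta_2$; if $\theta_2>\hat p$ the left side equals $\theta_2+p(m')$, giving $p(m')\ge\hat p$. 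Both contradict $p(m')<\min\{\theta_2,\hat p\}$, completing the proof.

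The main subtlety I expect is in the second step: one has to exhibit an element of $\Theta(m')$ that is \emph{strictly} below $p(m')$ and that genuinely belongs to the full-measure set $E$ rather than merely to the support of $\mu_{m'}$. In the finite-type setting this is immediate via atoms, but one must explicitly invoke \autoref{lemma:adverse:fixedpoint} and \autoref{lemma:adverse:IC} to rule out the degenerate possibility that $p(m')=\underline{\theta}_{\mu_{m'}}$. Once that step is secured, the cyclical-monotonicity argument is essentially a two-line case analysis.
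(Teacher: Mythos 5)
Your proposal has the same skeleton as the paper's proof: assume $\theta_2\in\Theta(\hat m)\cap(\underline p,\infty)$, produce a message $m'\in\proj(E)$ with $p(m')$ near $\underline p$, produce a type $\theta_1\in\Theta(m')$ strictly below $p(m')$, and then read off a contradiction from the two-cycle instance of \cref{lemma:adverse:cyclical}. Your end-game case split on whether $\theta_2\le\hat p$ or $\theta_2>\hat p$ is arguably cleaner than the paper's route through \eqref{equation: adverse1}--\eqref{equation: adverse2}, and your explicit requirement $p(m')<\min\{\theta_2,\hat p\}$ makes transparent a point the paper handles only implicitly (it states $\underline p<p'<\hat\theta$ but also uses $p'<\hat p$ in the first case of \eqref{equation: adverse2}).

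The one genuine weak point is step two, where you take $\theta_1=\underline{\theta}_{\mu_{m'}}$ and argue it belongs to $E$ because it is ``an atom of the finite-support posterior.'' The market-for-lemons setting allows $\Theta\subset[0,1]$ to be a general Polish space (cf.\ \cref{section: games}), so $\mu_{m'}$ need not be finitely supported, $\underline{\theta}_{\mu_{m'}}$ need not be an atom, and the pair $(\underline{\theta}_{\mu_{m'}},m')$ need not lie in the $\lambda^*$-full-measure set $E$. The paper sidesteps this by arguing at the level of measure rather than points: since \cref{lemma:adverse:IC} gives $p(m')=E_{\mu_{m'}}[v(\theta)\mid\theta\le p(m')]$ and $v(\theta)>\theta$ everywhere, the conditional distribution on $\{\theta\le p(m')\}$ cannot concentrate on $\{p(m')\}$ (else the conditional mean of $v$ would exceed $p(m')$), so it places positive mass on $\{\theta<p(m')\}$; and since $\Theta(m')$ carries full $\mu_{m'}$-measure, some $\theta'\in\Theta(m')$ with $\theta'<p(m')$ must exist. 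You flag this subtlety yourself, but your proposed fix (``atoms'') only covers the finite-type case; the measure-theoretic argument above is what the general case requires. Also, a small slip: you write ``$\theta_1\le p(m')<\hat p<\theta_2$,'' but the inequality $\hat p<\theta_2$ is neither assumed nor needed — your own case split correctly treats both orderings, so this is just a typo in the interstitial prose.
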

\begin{proof}

To prove the lemma, suppose by contradiction that there exists $\hat{\theta}\in \Theta(\hat{m})\cap (\underline{p},\infty)$. By the definition of $\underline{p}$, there exists $p'$ with $\underline{p} < p' <\hat{\theta}$ such that $p'=p(m')$ for some $m' \in \proj(E)$. Since in equilibrium $p' = E_{\mu_{m'}}[v(\theta)|\theta\in \Theta(m')\cap [0, p']]$, there also exists $\theta' \in \Theta(m')$ such that $\theta' < p'$. Since $({\theta}',{m}'), (\hat{\theta},\hat{m})\in E$, by \cref{lemma:adverse:cyclical}, we have
\begin{align*}
\max\{\theta',\hat{p}\}+\max\{\hat{\theta}, p'\}\leq \max\{{\theta}', p'\}+\max\{\hat{\theta},\hat{p}\} 
\end{align*}
Since $\theta'<p'$ by construction, we have
\begin{equation} \label{equation: adverse1}
\max\{\theta',\hat{p}\}+\max\{\hat{\theta}, p'\} \le {p}' + \max\{\hat{\theta},\hat{p}\}
\end{equation}
Note also that
\begin{equation} \label{equation: adverse2}
    {p}' + \max\{\hat{\theta},\hat{p}\} < \hat{p}+\hat{\theta}.
\end{equation}
The inequality above follows by considering two possibilities for  $\max\{\hat{\theta},\hat{p}\}$: either $\hat{\theta}\ge \hat p$, in which case ${p}' + \max\{\hat\theta, \hat p\} = {p}' + \hat\theta <\hat{p} + \hat\theta$; or $\hat\theta < \hat p$, in which case ${p}' + \max\{\hat\theta,\hat p\} = {p}' + \hat p < \hat{p} + \hat\theta$ as well.

Combining \eqref{equation: adverse1} and \eqref{equation: adverse2}, and noticing $\theta' <p'< \hat p$ and $p' < \hat\theta$ yield
\begin{equation*}
\max\{\theta',\hat{p}\}+\max\{\hat{\theta}, p'\} < \hat p+\hat\theta= \max\{\theta', \hat p\}+\max\{\hat\theta, p'\}
\end{equation*}
which is a contradiction.

\end{proof}

\begin{proof}[Proof of the \cref{prop: adverse selection}]

To prove our result, we first calculate the seller's profit from an arbitrary credible and WD-IC profile $(\lambda^*,\sigma^*)$. We then show that there exists another credible and WD-IC profile $(\lambda^0,\sigma^0)$, where $\lambda^0$ is a null information structure, that leads to weakly higher profit for the seller.

Recall that seller's payoff function can be written as $u_S(\theta,\sigma^*(\theta,m))=\max\{\theta,p(m)\}$, so her ex-ante profit is
\begin{align*}
 & \int_{\Theta\times M} \max\{\theta,p(m)\}d\lambda^*(\theta,m) = \int_M \int_0^1 \, \max\{\theta,p(m)\} \, d\lambda^*(\theta|m) d \lambda^*_M (m)\\
=& \int_M \left[\int_0^{p(m)} \, p(m) \, d\lambda^*(\theta|m) +\int_{p(m)}^1 \theta \, d\lambda^*(\theta|m)\right]d \lambda^*_M (m) \\
=& \int_M \left[ p(m)P_{\lambda^*(\theta|m)}(\theta\leq p(m)) +\int_{p(m)}^1 \theta \, d\lambda^*(\theta|m)\right]d \lambda^*_M (m) 
\end{align*}
By \cref{lemma:adverse:IC}, $p(m)=E_{\lambda^*(\theta|m)}[v(\theta)|\theta\leq p(m)]$, so we can write the integral above as
\begin{align*}
 &\int_M \left[ E_{\lambda^*(\theta|m)}[v(\theta)|\theta\leq p(m)]P_{\lambda^*(\theta|m)}(\theta\leq p(m)) +\int_{p(m)}^1 \theta \, d\lambda^*(\theta|m)\right]d \lambda^*_M (m)\\
 = & \int_M \left[ \int_0^{p(m)} v(\theta)d \lambda^*(\theta|m)+\int_{p(m)}^1 \theta \, d\lambda^*(\theta|m)\right]d \lambda^*_M (m) 
\end{align*}
By \cref{lemma:adverse:nointersection}, for every $m\in \proj(E)$, if $p(m)>\underline{p}$ then $\Theta(m)\cap (\underline{p},\infty)=\emptyset$, so the seller's profit from $(\lambda^*,\sigma^*)$ can be further simplified to
\begin{align}
& \int_M \left[ \int_0^{\underline{p}} v(\theta)d \lambda^*(\theta|m)+\int_{\underline{p}}^1 \theta \, d\lambda^*(\theta|m)\right]d \lambda^*_M (m) \nonumber \\
=&\int_0^{\underline{p}} v(\theta)d \mu_0 (\theta) + \int_{\underline{p}}^1 \theta \, d\mu_0(\theta). \label{equation: seller profit*}
\end{align}

Having calculated the seller's profit from $(\lambda^*,\sigma^*)$, next we will construct another credible and WD-IC profile $(\lambda^0,\sigma^0)$ with a weakly higher profit, where $\lambda_0$ is the null information structure $\mu_0\times \delta_{m_0}$.

From  \cref{lemma:adverse:nointersection}, for every $m\in \proj(E)$,
\[
\phi_{\mu_m}(p(m)) = E_{\mu_m}[v(\theta)|\theta\leq p(m)] = E_{\mu_m} [v(\theta) | \theta\leq \underline{p}] = \phi_{\mu_m}(\underline{p});
\]
in addition, from \cref{lemma:adverse:IC}, $\phi_{\mu_m}(p(m))=p(m)$ for every message $m \in \proj(E)$. Combining these yields
\[
\phi_{\mu_m}(\underline{p})=\phi_{\mu_m}(p(m))=p(m)\geq \underline{p}.
\]
Taking expectation over all messages, we have $\phi_{\mu_0}(\underline{p})\geq \underline{p}$. By Tarski's fixed point theorem, there exists a largest $p^0\in[\underline{p},1)$ such that the $\phi_{\mu_0}(p^0)=p^0$.

Using a similar argument as that in \cref{lemma: BNEexist}, the strategy profile $\sigma^0$ where the seller plays her weakly dominant strategy $\alpha^0(\theta,m_0)=\theta$, and buyers play $\beta_1^0(m_0) = \beta_2^0(m_0) = p^0$ is a BNE in the game $\langle G, \lambda^0 \rangle$.

It remains to show that the seller's profit from $(\lambda^0,\sigma^0)$ is weakly higher than that from $(\lambda^*,\sigma^*)$. Under $(\lambda^0,\sigma^0)$ the seller's profit is
\begin{align}
\int_0^1 \max\{\theta,p^0\} d\mu_0(\theta) &=\int_0^{p^0} p^0 d\mu_0 (\theta) + \int_{p^0}^1 \theta d\mu_0(\theta) \nonumber \\
&= p^0 P_{\mu_0}(\theta\leq p^0) +\int_{p^0}^1 \theta d\mu_0(\theta)\nonumber\\
&=E_{\mu_0}[v(\theta)|\theta\leq p^0]P_{\mu_0}(\theta\leq p^0)+\int_{p^0}^1 \theta d\mu_0(\theta)\nonumber\\
&=\int_0^{p^0} v(\theta) d\mu_0(\theta)+\int_{p^0}^1 \theta d\mu_0(\theta) \label{equation: seller profit0}
\end{align}
Comparing \eqref{equation: seller profit*} and \eqref{equation: seller profit0}, since  $p^0\geq \underline{p}$ and $v(\theta)>\theta$ for all $\theta$, it follows that 
\[
\int_0^{p^0} v(\theta) d\mu_0+\int_{p^0}^1 \theta d\mu_0\geq \int_0^{\underline{p}} v(\theta)d \mu_0+\int_{\underline{p}}^1 \theta  d\mu_0.
\]
The seller's profit under $(\lambda^0,\sigma^0)$ is therefore weakly higher than that from $(\lambda^*,\sigma^*)$.

\end{proof}

\subsection{Proof of \autoref{proposition: finite} \label{section: proof finite}}
The following lemmas will be useful in our proofs.

\begin{lemma} \label{lemma: continuous solution} The following correspondence 
\[B(\mu,\nu)\equiv \arg\max_{\lambda\in\Lambda (\mu,\nu)}\sum_{\theta,m}\lambda(\theta,m)u_S(\theta,\sigma(m))\]
is upper hemi-continuous with respect to $(\mu,\nu)$. Thus, the value function \[V(\mu,\nu)\equiv\max_{\lambda\in\Lambda (\mu,\nu)}\sum_{\theta,m}\lambda(\theta,m)u_S(\theta,\sigma(m))\] is continuous.
\end{lemma}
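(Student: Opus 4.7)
The plan is to invoke Berge's Maximum Theorem. The objective $\phi(\lambda)\equiv\sum_{\theta,m}\lambda(\theta,m)\,u_S(\theta,\sigma(m))$ is linear in $\lambda$ and independent of $(\mu,\nu)$, hence jointly continuous. The constraint correspondence $\Lambda(\mu,\nu)$ is non-empty (it contains the product $\mu\otimes\nu$) and compact-valued (a closed subset of the finite-dimensional simplex $\Delta(\Theta\times M)$). Once I establish that $\Lambda$ is continuous in $(\mu,\nu)$, Berge's theorem delivers both conclusions at once: $B(\mu,\nu)$ is upper hemi-continuous and compact-valued, and $V(\mu,\nu)$ is continuous.

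Upper hemi-continuity of $\Lambda$ is routine. If $\lambda_n\in\Lambda(\mu_n,\nu_n)$ converges to $\lambda$ while $(\mu_n,\nu_n)\to(\mu,\nu)$, then continuity of the marginalization operator gives $\lambda_\Theta=\lim (\lambda_n)_\Theta=\lim\mu_n=\mu$, and similarly $\lambda_M=\nu$, so $\lambda\in\Lambda(\mu,\nu)$.

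The main step is lower hemi-continuity: given $\lambda\in\Lambda(\mu,\nu)$ and $(\mu_n,\nu_n)\to(\mu,\nu)$, I need to exhibit $\lambda_n\in\Lambda(\mu_n,\nu_n)$ with $\lambda_n\to\lambda$. My plan is to appeal to Hoffman's bound for linear systems: the defining constraints of $\Lambda(\mu',\nu')$ form a linear system in $\lambda$ whose coefficient matrix (the two marginalization maps together with the nonnegativity constraints) does not depend on $(\mu',\nu')$, so there exists a constant $K$, depending only on $|\Theta|$ and $|M|$, such that
\[
\operatorname{dist}\!\bigl(\lambda,\Lambda(\mu',\nu')\bigr)\;\le\;K\bigl(\|\lambda_\Theta-\mu'\|_1+\|\lambda_M-\nu'\|_1\bigr)
\]
for every $\lambda\in\Delta(\Theta\times M)$ and every $(\mu',\nu')\in\Delta(\Theta)\times\Delta(M)$. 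Applying this with $(\mu',\nu')=(\mu_n,\nu_n)$ produces $\lambda_n\in\Lambda(\mu_n,\nu_n)$ with $\|\lambda_n-\lambda\|_1\le K(\|\mu-\mu_n\|_1+\|\nu-\nu_n\|_1)\to 0$.

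The main obstacle is the non-negativity of the approximating sequence, which is exactly what Hoffman's bound supplies automatically. If one prefers a hands-on alternative, when $\mu$ has full support one can set $\tilde\lambda_n(\theta,m)=\mu_n(\theta)\,\lambda(\theta,m)/\mu(\theta)$, which already has $\Theta$-marginal $\mu_n$, and then absorb the small discrepancy $\nu_n-\tilde\lambda_{n,M}$ by shifting small masses among $(\theta,m)$ pairs in the support of $\lambda$; because the perturbation is $O(\|\mu_n-\mu\|_1+\|\nu_n-\nu\|_1)$ while the positive entries of $\lambda$ are bounded below on its support, non-negativity is preserved for all sufficiently large $n$. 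Either route verifies continuity of $\Lambda$, and Berge's theorem then closes the argument.
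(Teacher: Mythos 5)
Your proof is correct, but it takes a genuinely different route from the paper. The paper establishes upper hemi-continuity of $B$ by directly citing a stability theorem for optimal transport plans (Theorem 1.50 of Santambrogio, 2015), and then derives continuity of $V$ from that upper hemi-continuity by taking limits along optimal plans. You instead invoke Berge's Maximum Theorem, reducing everything to \emph{continuity} of the constraint correspondence $\Lambda(\mu,\nu)$. The upper hemi-continuity of $\Lambda$ is indeed routine; the real content is the lower hemi-continuity, which you handle with Hoffman's error bound for linear systems. This is a valid and clean argument: the coefficient matrix of the constraints (marginalization plus nonnegativity) is fixed, only the right-hand sides $(\mu,\nu)$ vary, and the feasible set is always nonempty (it contains $\mu\otimes\nu$), so Hoffman's constant is uniform and the desired Lipschitz estimate follows. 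What your route buys is that it is fully self-contained in the finite-dimensional setting and makes explicit the lower hemi-continuity of $\Lambda$ — a step that is implicit (and sometimes glossed over) when one cites the transport-stability theorem. What the paper's route buys is brevity. One caveat: your ``hands-on alternative'' for lower hemi-continuity requires $\mu$ to have full support and so would not establish the lemma on all of $\Delta(\Theta)\times\Delta(M)$; it does suffice for the paper's actual use in the proof of Proposition~\ref{proposition: finite}, where the lemma is applied locally near the full-support prior $\mu_0$, but your Hoffman-bound route is the one that proves the statement as written.
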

\begin{proof}
The first statement follows directly from Theorem 1.50 of \citet{santambrogio2015optimal}. For any sequence $(\lambda_k,\mu_k,\nu_k)\rightarrow (\lambda,\mu,\nu)$ so that $\lambda_k\in B(\mu_k,\nu_k)$ for all $k$, we have $\lambda\in B(\mu,\nu)$. Then $V(\mu,\nu)=\sum_{\theta,m}\lambda(\theta,m)u_S(\theta,\sigma(m))=\lim_{k\rightarrow \infty}\sum_{\theta,m}\lambda_k(\theta,m)u_S(\theta,\sigma(m))=\lim_{k\rightarrow \infty} V(\nu_k,\nu_k)$, which proves the second statement.
\end{proof}

\begin{lemma} \label{lemma: integer extreme points}
Suppose $\mu\in \mathcal{F}^N_\Theta$ and $\nu\in \mathcal{F}^N_M$, then the extreme points of $\Lambda(\mu,\nu)$ is contained in $X^N(\mu,\nu)$.
\end{lemma}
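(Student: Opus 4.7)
The plan is to deploy the classical integrality result for the transportation polytope with rational marginals: when the marginals $\mu$ and $\nu$ both lie in $\tfrac{1}{N}\mathbb{Z}$, every vertex of $\Lambda(\mu,\nu)$ must lie in $\tfrac{1}{N}\mathbb{Z}$ as well, which is exactly what membership in $X^N(\mu,\nu)$ demands. I would carry this out directly rather than invoking total unimodularity, since the direct argument is short and self-contained.

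First I would show that the bipartite support graph of any extreme point $\lambda$ of $\Lambda(\mu,\nu)$ is acyclic. Define $G(\lambda)$ on vertex set $\Theta \sqcup M$ by including the edge $(\theta,m)$ whenever $\lambda(\theta,m)>0$. Suppose for contradiction that $G(\lambda)$ contains a cycle $(\theta_1,m_1),(\theta_2,m_1),(\theta_2,m_2),\ldots,(\theta_k,m_k),(\theta_1,m_k)$. Construct $\delta \in \mathbb{R}^{\Theta\times M}$ by alternating $+1$ and $-1$ along the cycle edges and setting $\delta\equiv 0$ elsewhere. Because the cycle alternates sides of the bipartition, every row sum and every column sum of $\delta$ vanishes, so the marginal constraints of $\Lambda(\mu,\nu)$ are preserved by $\lambda \pm \varepsilon\delta$. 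Since all cycle entries of $\lambda$ are strictly positive, nonnegativity is preserved for some $\varepsilon>0$. Writing $\lambda = \tfrac{1}{2}(\lambda+\varepsilon\delta)+\tfrac{1}{2}(\lambda-\varepsilon\delta)$ contradicts extremality.

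Next I would peel the forest $G(\lambda)$ from its leaves to read off the integrality. Every nonempty forest has a leaf; if a leaf $\theta\in\Theta$ has its unique neighbor at $m$, then the row constraint forces $\lambda(\theta,m)=\mu(\theta)\in \tfrac{1}{N}\mathbb{Z}$. Delete $\theta$ and replace $\nu(m)$ by $\nu(m)-\mu(\theta)$, which remains in $\tfrac{1}{N}\mathbb{Z}$. The symmetric step handles a leaf $m\in M$. Iterating this peeling exhausts all edges of the support, so every entry of $\lambda$ is a multiple of $1/N$. Hence $N\lambda\in\mathbb{N}^{|\Theta|\times|M|}$ with the correct integer marginals, i.e., $\lambda\in X^N(\mu,\nu)$.

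The only point requiring minor care is the cycle-perturbation step, which relies on the bipartite cycle having even length so that the $\pm 1$ pattern is consistent around the loop; this is automatic in a bipartite graph. The leaf-peeling recursion is routine once acyclicity is established, so I do not anticipate any serious obstacle beyond setting up the notation cleanly.
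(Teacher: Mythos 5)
Your proof is correct, but it takes a genuinely different route from the paper. The paper disposes of the lemma in one step by appealing to Corollary 8.1.4 of \citet{brualdi2006combinatorial}, which asserts precisely that the extreme points of the transportation polytope with integer marginals are integer matrices; the lemma then follows by scaling by $1/N$. You instead prove the underlying integrality fact from scratch: first that any extreme point of $\Lambda(\mu,\nu)$ has acyclic bipartite support (via the alternating $\pm\varepsilon$ cycle perturbation, whose sign pattern is consistent exactly because bipartite cycles have even length), and then that integrality propagates through the support forest by leaf-peeling, since each deleted leaf forces one entry to equal a marginal value in $\tfrac{1}{N}\mathbb{Z}$ and leaves the reduced marginals in $\tfrac{1}{N}\mathbb{Z}$. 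What the paper buys with the citation is brevity; what you buy with the direct argument is self-containedness and a somewhat more transparent explanation of \emph{why} integrality holds, at the cost of a few extra lines. One cosmetic remark: you may want to state explicitly that isolated vertices of the support graph carry zero marginal mass (so peeling terminates cleanly) and that you only invoke the existence of a leaf in components that still contain an edge, but these are routine bookkeeping points and do not affect correctness.
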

\begin{proof}
Consider the set $Y^N(\mu,\nu)=\{f\in \mathbb{R}_+^{|\Theta|\times |M|}: \sum_\theta f(\theta,\cdot)=N \nu(\cdot),\sum_m f(\cdot,m)=N \mu(\cdot)\}$. From Corollary 8.1.4 of \citet{brualdi2006combinatorial}, the extreme points of $Y^N(\mu,\nu)$ is contained in $Z^N(\mu,\nu)=\{f\in \mathbb{N}^{|\Theta|\times |M|}: \sum_\theta f(\theta,\cdot)=N \nu(\cdot),\sum_m f(\cdot,m)=N \mu(\cdot)\}$. Since $\Lambda(\mu,\nu)=\{\frac{f}{N}: f\in Y^N(\mu,\nu)\}$ and $X^N(\mu,\nu)=\{\frac{f}{N}: f\in Z^N(\mu,\nu)\}$, the extreme pints of $\Lambda(\mu,\nu)$ is contained in $X^N(\mu,\nu)$.
\end{proof}

\begin{lemma} \label{lemma: uhc + unique = cont}
Let $X,Y$ be metric spaces and $\Gamma:X\rightrightarrows Y$ be a correspondence. If $\Gamma$ is upper hemi-continuous at $x_0\in X$, and $\Gamma(x_0)=\{y_0\}$ for some $y_0\in Y$, then $\Gamma$ is continuous at $x_0$.
\end{lemma}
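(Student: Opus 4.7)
The plan is to recall that a correspondence $\Gamma$ is continuous at $x_0$ when it is both upper hemi-continuous (UHC) and lower hemi-continuous (LHC) at $x_0$. Since UHC is given by hypothesis, the entire task reduces to verifying LHC at $x_0$. The key observation I would exploit is that $\Gamma(x_0) = \{y_0\}$ is a singleton, so LHC at $x_0$ amounts to the single requirement that for every sequence $x_n \to x_0$, one can pick $y_n \in \Gamma(x_n)$ with $y_n \to y_0$ --- there is only one target point to approach.

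First, I would assume (or note that the application via \cref{lemma: continuous solution} makes this immediate) that $\Gamma$ has nonempty values in some neighborhood of $x_0$, so that for any sequence $x_n \to x_0$ we may select $y_n \in \Gamma(x_n)$ for large $n$. Then I would argue by contradiction: if $y_n \not\to y_0$, pass to a subsequence $y_{n_k}$ lying outside some open ball $B(y_0,\varepsilon)$. Apply UHC at $x_0$ to the open neighborhood $V = B(y_0,\varepsilon)$ of $\Gamma(x_0) = \{y_0\}$: there exists a neighborhood $U$ of $x_0$ with $\Gamma(x) \subset V$ for every $x \in U$. For $k$ large, $x_{n_k} \in U$, hence $y_{n_k} \in \Gamma(x_{n_k}) \subset B(y_0,\varepsilon)$, contradicting the choice of the subsequence. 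Thus any selection $y_n \in \Gamma(x_n)$ must converge to $y_0$, which delivers LHC at $x_0$.

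I anticipate no serious obstacle: the argument is a short topological observation that UHC into a singleton target forces every selection to converge. The only mild subtlety is the nonemptiness of $\Gamma$ near $x_0$, and the fact that in our application $\Gamma = B(\cdot,\cdot)$ is the arg-max correspondence over a nonempty compact polytope $\Lambda(\mu,\nu)$, which automatically has nonempty values. I would phrase the proof cleanly in terms of the sequential characterization of UHC/LHC, since $X$ and $Y$ are metric, to keep the argument self-contained and short.
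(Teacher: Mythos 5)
Your proposal is correct and matches the paper's proof in essence: both apply the UHC property to the $\epsilon$-ball around $y_0$, which as the unique point of $\Gamma(x_0)$ forces $\Gamma(x) \subseteq B(y_0,\epsilon)$ near $x_0$, and then deduce $\Gamma(x)\cap B(y_0,\epsilon)\ne\emptyset$ to obtain LHC. You phrase it via sequences while the paper uses neighborhoods directly, but in metric spaces these are equivalent; you also make the nonemptiness assumption explicit, which the paper leaves implicit.
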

\begin{proof}
For any $\epsilon>0$, let $B(y_0,\epsilon) \subseteq Y$ denote the $\epsilon$-ball centered at $y_0$. We will show that there exists $\delta>0$ such that for all $|x-x_0|<\delta$, $\Gamma(x)\cap B(y_0,\epsilon) \ne \emptyset$, which implies that $\Gamma$ is lhc and therefore continuous.

Now since $\Gamma(x) = \{y_0\} \subseteq B(y_0,\epsilon)$ and $\Gamma$ is uhc at $x_0$, it follows that there exists $\delta>0$ such that $\Gamma(x)\subseteq B(y_0,\epsilon)$ for all $|x-x_0|<\delta$, so $\Gamma(x)\cap B(y_0,\epsilon) \ne \emptyset$ for all $|x-x_0|<\delta$, which completes the proof.
\end{proof}

\begin{proof}[Proof of \autoref{proposition: finite} statement 1]
First suppose $(\lambda^*,\sigma^*)$ is not credible. Then there exists $\lambda'\in \Lambda(\mu_0,\lambda^*_{M})$ (recall $\mu_0$ is the prior distribution on $\Theta$) and $\epsilon_0>0$ such that
\[
\sum_{\theta,m} \lambda^*(\theta,m)u_S(\theta,\sigma^*(m)) < \sum_{\theta,m} \lambda'(\theta,m) u_S(\theta,\sigma^*(m)) -  \epsilon_0  
\]
By continuity, there exists $\epsilon_1>0$ such that for all  $|{\lambda} - \lambda^* |<\epsilon_1$  we have
\begin{equation} \label{equation: bound1}
 \sum_{\theta,m} {\lambda}(\theta,m)u_S(\theta,\sigma^*(m)) < \sum_{\theta,m} {\lambda}'(\theta,m) u_S(\theta,\sigma^*(m)) - \epsilon_0/2
\end{equation}
By \cref{lemma: continuous solution}, there exists $\varepsilon_2>0$ such that for all $|\mu - \mu_0| < \varepsilon_2$ and $|\nu - \lambda^*_M| <\epsilon_2 $, there exists ${\lambda}\in \Lambda(\mu,\nu)$ with
\begin{equation} \label{equation: bound 2}
    \sum_{\theta,m} {\lambda} (\theta,m) u_S(\theta,\sigma^*(m)) >\sum_{\theta,m} {\lambda}'(\theta,m) u_S(\theta,\sigma^*(m)) -\epsilon_0/2.
\end{equation}

Moreover, since the receiver is choosing only pure strategies, there exists $\varepsilon_3$ such that for any $\sigma$ where $|\sigma-\sigma^*|<\varepsilon_3$, $\sigma=\sigma^*$.

Now let $\epsilon=\min\{\varepsilon_1,\frac{\varepsilon_2}{|\Theta|\times |M|},\varepsilon_3\}$,
By assumption, there exists a finite-sample, credible and R-IC profile $(\nu^N, \phi^N, \sigma^N)$ such that $|\nu^N-\lambda^*_M| < \varepsilon$, $|\sigma^N-\sigma|<\varepsilon$ and $P(|\phi^N(F_{\Theta}^N) - \lambda | < \varepsilon) > 1-\varepsilon$.

Under such a finite-sample profile, $\sigma^N=\sigma^*$, and there exists $F^N_{\Theta} \in \mathcal{F}^N_{\Theta}$, realized with positive probability, such that $\tilde{\lambda}^* = \phi^N(F^N_{\Theta} ) $ satisfies $|\tilde{\lambda}^* - \lambda^*| < \min \{\epsilon_1, \frac{\epsilon_2}{|\Theta|\times|M|} \}$.

 Now since $|\tilde{\lambda}^* - \lambda^*| <\epsilon_1$, by \eqref{equation: bound1} we know that
\begin{equation} \label{equation: bound 3}
 \sum_{\theta,m} \tilde{\lambda}^*(\theta,m)u_S(\theta,\sigma^*(m)) < \sum_{\theta,m} {\lambda}'(\theta,m) u_S(\theta,\sigma^*(m)) - \epsilon_0/2    
\end{equation}
In addition, since $|\tilde{\lambda}^* - \lambda^*| <  \frac{\epsilon_2}{|\Theta|\times|M|}$, we know $F^N_{\Theta} = \tilde{\lambda}^*_{\Theta}$ satisfies $| F^N_{\Theta} - \mu_0| <\epsilon_2$, and $\nu^N = \tilde{\lambda}^*_{M} $ satisfies $|\nu^N - \lambda^*_M| <\epsilon_2$, so by \eqref{equation: bound 2} there exists $\tilde{\lambda}' \in \Lambda(F^N_{\Theta}, \nu^N)$ such that
\begin{equation}  \label{equation: bound 4}
    \sum_{\theta,m} \tilde{\lambda}' (\theta,m) u_S(\theta,\sigma^*(m)) >\sum_{\theta,m} {\lambda}'(\theta,m) u_S(\theta,\sigma^*(m)) -\epsilon_0/2
\end{equation}
Combining \eqref{equation: bound 3} and \eqref{equation: bound 4}, we have
\begin{equation*}
   \sum_{\theta,m} \tilde{\lambda}' (\theta,m) u_S(\theta,\sigma^*(m)) > \sum_{\theta,m} \tilde{\lambda}^*(\theta,m)u_S(\theta,\sigma^*(m)),
\end{equation*}
Note that $\tilde{\lambda}'\in \Lambda(F^N_{\Theta},\nu^N)$, but by \cref{lemma: integer extreme points}, we can replace $\tilde{\lambda}'$  with an extreme point in $X^N(F^N_{\Theta},\nu^N)$, and the above inequality still holds. That is, there exists $\hat{\lambda}'\in X^N(F^N_{\Theta},\nu^N)$ such that
\[\sum_{\theta,m} \hat{\lambda}'(\theta,m) u_S(\theta,\sigma^*(m))>\sum_{\theta,m} \tilde{\lambda}^* (\theta,m) u_S(\theta,\sigma^*(m)),\]
Notice that $\tilde{\lambda}^*$ and $\hat{\lambda}'$ are both in $X^N(F_\Theta^N,\nu^N)$, which is a contradiction since by the credibility of $(\nu^N, \phi^N,\sigma^N)$
\begin{equation*}
   \tilde{\lambda}^* = \phi^N(F^N_{\Theta}) = \argmax_{\lambda\in X^N(F^N_{\Theta},\nu^N )} \sum_{\theta,m} \lambda(\theta,m) u_S(\theta,\sigma^*(m)).
\end{equation*}

Second, suppose $(\lambda^*,\sigma^*)$ violates R-IC. Then there exists $\sigma'$ such that
\[
\sum_{\theta,m} \lambda^*({\theta,m}) u_R(\theta,\sigma'(m)) > \sum_{\theta,m} \lambda^*({\theta,m}) u_R(\theta,\sigma^*(m))
\]
By continuity, there exist $\eta>0$ and $\varepsilon_4>0$ such that for all $\lambda'$ satisfying $|\lambda^*-\lambda'|<\varepsilon_4$, we have
\[
\sum_{\theta,m} \lambda'(\theta,m) u_R(\theta,\sigma'(m)) - \sum_{\theta,m} \lambda'({\theta,m}) u_R(\theta,\sigma^*(m))\geq \eta>0
\]

Let $d\equiv\max_{\theta,a} u_R(\theta,a) - \min_{\theta,a} u_R(\theta,a)$ denote the gap between the Receiver's highest and lowest payoffs. Let $\varepsilon_5 \leq \frac{\eta}{d+\eta}$ and $\varepsilon=\min\{\varepsilon_3,\varepsilon_4,\varepsilon_5\}$. By assumption, there exists a credible and R-IC finite-sample profile $(\nu^N, \phi^N,\sigma^N)$ such that $Pr(|\phi^N(F^N_{\Theta}) - \lambda^*| \leq \varepsilon) > 1-\epsilon$, and $\sigma^N = \sigma^*$. We will show that in the finite sample profile $(\nu^N, \phi^N,\sigma^N)$, the Receiver can profitably deviate from $\sigma^N=\sigma^*$ to $\sigma'$, which contradicts $(\nu^N, \phi^N,\sigma^N)$ being R-IC.

By choosing $\sigma^*$ the Receiver obtains payoff
\begin{equation*}
\sum_{F^N_{\Theta}\in \mathcal{F}^N_{\Theta} }  P^{N}(F^N_{\Theta}) \sum_{\theta, m} \phi^N(\theta,m| F^N_{\Theta} ) u_{R} (\theta, \sigma^* ( m))    
\end{equation*}
By contrast, the Receiver obtains
\begin{equation*}
\sum_{F^N_{\Theta}\in \mathcal{F}^N_{\Theta} }  P^{N}(F^N_{\Theta}) \sum_{\theta, m} \phi^N(\theta,m| F^N_{\Theta} ) u_{R} (\theta, \sigma'(m))    
\end{equation*}
from choosing $\sigma'$. Denote $E^N \equiv \{F^N_{\Theta}: |\phi^N(F^N_{\Theta}) - \lambda^*|\leq \delta\}$ so $Pr(E^N) > 1-\epsilon$. By switching from $\sigma^*$ to $\sigma'$, the Receiver obtains an extra payoff of
\begin{align*}
    & \sum_{F^N_{\Theta}\in \mathcal{F}^N_{\Theta} }  P^{N}(F^N_{\Theta})  \left[ \sum_{\theta, m} \phi^N(\theta,m| F^N_{\Theta} ) u_{R} (\theta, \sigma^*(m)) -  \sum_{\theta, m} \phi^N(\theta,m| F^N_{\Theta} ) u_{R} (\theta, \sigma'(m)) \right]\\
    = & \sum_{F^N_{\Theta} \in E^N }  P^{N}(F^N_{\Theta})  \left[ \sum_{\theta, m} \phi^N(\theta,m| F^N_{\Theta} ) u_{R} (\theta, \sigma^*(m)) -  \sum_{\theta, m} \phi^N(\theta,m| F^N_{\Theta} ) u_{R} (\theta, \sigma'(m)) \right]\\
    & + \sum_{F^N_{\Theta}\notin E^N }  P^{N}(F^N_{\Theta})  \left[ \sum_{\theta, m} \phi^N(\theta,m| F^N_{\Theta} ) u_{R} (\theta, \sigma^*(m)) -  \sum_{\theta, m} \phi^N(\theta,m| F^N_{\Theta} ) u_{R} (\theta, \sigma'(m)) \right]
\end{align*}
Note that $\sum_{\theta, m} \phi^N(\theta,m| F^N_{\Theta} ) u_{R} (\theta, \sigma^*(m)) -  \sum_{\theta, m} \phi^N(\theta,m| F^N_{\Theta} ) u_{R} (\theta, \sigma'(m)) \ge \eta$ for all $F^N_{\Theta}\in E^N$, while for all $F^N_{\Theta}\notin E^N$, $\sum_{\theta, m} \phi^N(\theta,m| F^N_{\Theta} ) u_{R} (\theta, \sigma^*(m)) -  \sum_{\theta, m} \phi^N(\theta,m| F^N_{\Theta} ) u_{R} (\theta, \sigma'(m)) \ge -d$. 
Together they imply,
\begin{align*}
& \sum_{F^N_{\Theta}\in \mathcal{F}^N_{\Theta} }  P^{N}(F^N_{\Theta})  \left[ \sum_{\theta, m} \phi^N(\theta,m| F^N_{\Theta} ) u_{R} (\theta, \sigma^*(m)) -  \sum_{\theta, m} \phi^N(\theta,m| F^N_{\Theta} ) u_{R} (\theta, \sigma'(m)) \right]\\
\ge & \eta P^N(E^N) - d (1- P^N(E^N)).
\end{align*}
Since $P^N(E^N) > 1 - \epsilon$, we have
\begin{align*}
& \sum_{F^N_{\Theta}\in \mathcal{F}^N_{\Theta} }  P^{N}(F^N_{\Theta})  \left[ \sum_{\theta, m} \phi^N(\theta,m| F^N_{\Theta} ) u_{R} (\theta, \sigma^*(m)) -  \sum_{\theta, m} \phi^N(\theta,m| F^N_{\Theta} ) u_{R} (\theta, \sigma'(m)) \right]\\
> & (1-\epsilon)\eta - \epsilon d = \eta -\epsilon( \eta + d) \ge 0
\end{align*}
This contradicts the R-IC of $(\nu^N,\phi^N,\sigma^N)$.

\end{proof}

\begin{proof}[Proof of \autoref{proposition: finite} statement 2]

For each $N \ge 1$, define $\sigma^N=\sigma^*$, $\nu^N\in \arg\min_{\nu\in \mathcal{F}^N_M} |\lambda^*_M-\nu|$, and $\phi^N:\mathcal{F}_\Theta^N\rightarrow \cup _{F_\Theta^N\in \mathcal{F}_\Theta^N} X(F^N_\Theta,\nu^N)$ by
\[
\phi(F^N_\Theta)\in \argmax_{\lambda \in X^N(F^N_\Theta,\nu^N)} \sum_{\theta,m} \lambda(\theta,m) u_S(\theta,\sigma^*(m)).
\]
By construction, for every $N$, $(\nu^N, \phi^N, \sigma^N)$ is credible and $|\sigma^N-\sigma^*|=0$. It remains to show that for every $\varepsilon>0$, there exists large enough $N$, such that 
\begin{enumerate}
    \item $|\nu^N - \lambda^*_M| <\varepsilon$;
    \item  $P(|\phi^N(F_{\Theta}^N)-\lambda^*| < \varepsilon)>1-\varepsilon$;
    \item $(\nu^N, \phi^N, \sigma^N)$ is R-IC.
\end{enumerate}

From the denseness of rational numbers, we know that $\nu^N\rightarrow \lambda^*_M$ as $N\rightarrow \infty$ so the first statement follows.

To prove the second statement, note that since $(\lambda^*,\sigma^*)$ is strictly credible, $\lambda^*$ is the unique maximizer to 
\[\max _{\lambda\in \Lambda(\mu,\nu)} \sum_{\theta,m}\lambda(\theta,m)u_S(\theta,\sigma^*(m)).\]
From \cref{lemma: continuous solution}, the best response correspondence $B(\mu,\nu)\equiv\arg\max _{\lambda\in \Lambda(\mu,\nu)} \sum_{\theta,m}\lambda(\theta,m)u_S(\theta_i,\sigma^*(m_j))$ is upper hemi-continuous. Since $B(\mu,\nu)=\{\lambda^*\}$ is a singleton, from \cref{lemma: uhc + unique = cont}, $B$ is continuous at $(\mu,\nu)$. Therefore, there exists $\delta>0$, so that for any $(\mu',\nu')$ such $|\mu-\mu'|<\delta$ and $|\nu-\nu'|<\delta$, we have $|\lambda'-\lambda^*|<\varepsilon$ for every $\lambda'\in B(\mu',\nu')$.

From the Glivenko–Cantelli theorem, for large $N$, $P(|F_\Theta^N-\mu_0|<\delta)>1-\varepsilon$. Pick $N$ large enough so that $P(|F_\Theta^N-\mu_0|<\delta)>1-\varepsilon$ and $|\nu^N-\mu_M^*|<\delta$.
Follows from the definition of $\phi$ and \cref{lemma: integer extreme points}, $\phi(F^N_\Theta)\in \argmax_{\lambda \in X^N(F^N_\Theta,\nu^N)} \sum_{\theta,m} \lambda(\theta,m) u_S(\theta,\sigma^*(m))\subset \argmax_{\lambda \in \Lambda(F^N_\Theta,\nu^N)} \sum_{\theta,m} \lambda(\theta,m) u_S(\theta,\sigma^*(m)$. So with at least $1-\varepsilon$ probability, $|\phi(F_{\Theta}^N)-\lambda^*|<\varepsilon$.

Lastly, we show that $(\nu^N, \phi^N, \sigma^N)$ is R-IC for large $N$. Since $(\lambda^*,\sigma^*)$ is strictly R-IC, for any $\sigma\neq \sigma^*$,
\[\sum_{\theta,m} 
\lambda^*(\theta,m) u_R(\theta,\sigma^*(m))>\sum_{\theta,m} \lambda^*(\theta,m) u_R(\theta,\sigma(m)).\] From continuity, there exists $\eta>0$ such that for any $\lambda$ such that $|\lambda^*-\lambda|<\varepsilon$, \[\sum_{\theta,m} 
\lambda(\theta,m) u_R(\theta,\sigma^*(m))-\sum_{\theta,m} \lambda(\theta,m) u_R(\theta,\sigma(m))\geq \eta> 0.\] As we have shown, for any $\varepsilon>0$, for large enough $N$, $Pr(|\phi(F^N_\Theta)-\lambda^*|\leq \varepsilon)\geq 1-\varepsilon$. Pick $\varepsilon\leq \frac{\eta}{d+\eta}$, then follow from the same argument above, we have
\[ \sum_{F^N_\Theta\in \mathcal{F}_{\Theta}^N} P^N(F^N_\Theta)\sum_{\theta,m} \phi(\theta,m|F^N_\Theta) u_R(\theta,\sigma^*(m))>\sum_{F^N_\Theta\in \mathcal{F}_{\Theta}^N} P^N(F^N_\Theta)\sum_{\theta,m} \phi(\theta,m|F^N_\Theta) u_R(\theta,\sigma(m))\]
for any $\sigma\neq \sigma^*$. So $(\nu^N,\phi^N,\sigma^N)$ is R-IC.

\end{proof}

\subsection{Proof of \autoref{proposition: S-R extensive form}}

\begin{proof}
The ``only if" direction:
Suppose $(\lambda,\sigma)$ is a pure-strategy SPNE outcome of the extensive-form game induced by a pure-strategy SPNE $(\lambda,\rho)$. Then under $(\lambda,\rho)$, the Sender should  have no profitable deviation to any other $\lambda' \in \Lambda$ and in particular, any $\lambda'$ that maintains the same marginal distribution on $M$. So for every $\lambda'\in\Lambda$ such that $\lambda'_M=\lambda_M$, we have
\begin{equation*}
\sum_\theta \sum_m \lambda(\theta,m)u_S(\theta,\rho(\lambda_M,m))\geq \sum_\theta \sum_m \lambda'(\theta,m)u_S(\theta,\rho(\lambda_M',m)).
\end{equation*}
Since $\rho(\lambda_M,m)=\rho(\lambda_M',m)=\sigma(m)$ for all $m\in M$, the equation above becomes
\begin{equation*}
\sum_\theta \sum_m \lambda(\theta,m)u_S(\theta,\sigma(m))\geq \sum_\theta \sum_m \lambda'(\theta,m)u_S(\theta,\sigma(m))
\end{equation*}
for all $\lambda'\in\Lambda$ such that $\lambda'_M=\lambda_M$. This is the definition of a credible profile in \cref{definition: credible}.

In addition, the Receiver should have no profitable deviation at every information set $(\lambda_M,m)$ on the equilibrium path. So for every $\rho'\in \Xi$ and every $m\in M$ such that $\lambda_M(m)>0$, we have
\begin{equation*}
\sum_\theta  \lambda(\theta,m)u_R(\theta,\rho(\lambda_M, m))\geq \sum_\theta \lambda(\theta,m)u_R(\theta,\rho'(\lambda_M, m)).
\end{equation*}
Note that $\sigma(m) = \rho(\lambda_M,m)$ for all $m\in M$, so by summing over all $m\in M$, we have
\begin{equation*}
\sum_\theta \sum_m \lambda(\theta,m)u_R(\theta,\sigma(m))\geq \sum_\theta \sum_m \lambda(\theta,m)u_R(\theta,\sigma'(m))
\end{equation*}
for all $\sigma'\in\Sigma$. This is the definition of a R-IC profile in \cref{definition: RIC}. Therefore, the profile $(\lambda,\sigma)$ is both credible and R-IC.

Moreover, notice that after the Sender chooses the uninformative test $\lambda^{\circ} = \mu_0 \times \delta_{m^{\circ}}$  for some $m^{\circ}$, the information set $(\delta_{m^{\circ}} ,m^{\circ})$ forms the initial node of a proper subgame. 
Subgame-perfection at this subgame then requires the Receiver to choose an action $a\in A_0$. Since the Sender always has the option to choose $\lambda^{\circ}$, her equilibrium payoff cannot be less than $\min_{a\in A_0} \sum_\theta\mu_{0}(\theta)u_S(\theta,a)$, so $\sum_\theta \sum_m \lambda(\theta,m)u_S(\theta,\sigma(m))\geq \min_{a\in A_0} \sum_\theta\mu_{0}(\theta)u_S(\theta,a)$.

The ``if" direction: Suppose that the profile $(\lambda,\sigma)$ is credible and suppose that the Sender's payoff from this profile is greater than the lowest possible no-information payoff. Consider the strategy profile $(\lambda,\rho)$ where $\rho(\lambda_M,m)=\sigma(m)$ for all $m\in M$, but for every $\lambda_M'\neq \lambda_M$, $\rho(\lambda_M',m)\in\arg\min_{a\in A_0} \sum_\theta\mu_{0}(\theta)u_S(\theta,a)$ for all $m\in M$. That is, the Receiver chooses the worst (w.r.t. the Sender's payoff) best response to prior belief after observing an off-path marginal distribution on $M$. We claim that $(\lambda,\rho)$ is an SPNE.

We first show that $\rho$ best responds to $\lambda$ in every subgame. We start with the extensive-form game itself. Note that from R-IC, the Receiver best responds to his on-path information sets $(\lambda_M,m)$; that is,
\begin{equation*}
\sum_\theta  \lambda(\theta,m)u_R(\theta,\rho(\lambda_M, m))\geq \sum_\theta \lambda(\theta,m)u_R(\theta,\rho'(\lambda_M, m)) \text{ for all } \rho'\in \Xi
\end{equation*}
at every $\lambda_M,m$ such that $\lambda_M( m)>0$. So $\rho$ best responds to $\lambda$ in the extensive-form game itself.

Next we consider proper subgames of the extensive-form game. Note that among all information sets, the only ones that form the initial node of a proper subgame are those in the form of $(\delta_{m^{\circ}}, m^{\circ})$, which is induced by the Sender choosing an uninformative test  $\lambda^{\circ} = \mu_0 \times \delta_{m^{\circ}}$  for some $m^{\circ}\in M$. By our construction, the Receiver chooses  the worst (w.r.t. the Sender's payoff) best response to prior belief, so the Receiver's strategy is a best response in these proper subgames.
Lastly, for any off-path information set $(\lambda_M',m)$ that does not define a subgame, SPE has no requirement on the Receiver's strategy. So $\rho$ best responds to $\lambda$ in every subgame.

We now turn to the Sender's strategy $\lambda$ and show that it best responds to $\rho$. From the credibility of $(\lambda,\sigma)$, the Sender has no incentive to deviate to any $\lambda'$ with $\lambda_M'=\lambda_M$. For any other deviation, her payoff is
\(
	\min_{a\in A_0} \sum_\theta\mu_{0}(\theta)u_S(\theta,a_0)
\),
which is her lowest no-information payoff. Since $\sum_\theta \sum_m \lambda(\theta,m)u_S(\theta,\sigma(m))\geq \min_{a\in A_0} \sum_\theta\mu_{0}(\theta)u_S(\theta,a)$, such deviations are not profitable. Therefore $\lambda$ best responds to $\rho$, so $(\lambda,\rho)$ is a pure-strategy SPNE and $(\lambda,\sigma)$ is the corresponding pure-strategy SPNE outcome.
\end{proof}

\section{Credible Persuasion in Games} \label{section: games}

In this section we generalize the framework in \cref{section: modelsetup} to a setting with multiple Receivers, where the Sender can also take actions after information is disclosed. We also allow the state space and action space to be infinite.

Consider an environment with a single Sender (she) and $r$ Receivers (each of whom is a he). The Sender has action set $A_S$ while each Receiver $ i\in \{1, \ldots, r \}$ has action set $A_i$. Let $A = A_S \times A_1 \times \ldots \times A_r $ denote the set of action profiles. 
Each player has payoff function $u_{i}: \Theta\times A  \rightarrow \mathbb{R}$, $i=S,1,\ldots,r$, respectively. The state space $\Theta$ and action spaces $A_i$ are Polish spaces endowed with their respective Borel sigma-algebras. Players hold full-support common prior $\mu_{0}\in\Delta(\Theta)$. We refer to $G=\left(\Theta, \mu_{0}, A_S, u_S, \{A_i \}_{i=1}^r, \{u_{i}\}_{i=1}^r \right)$ as the base game.

Let $M$ be a Polish space that contains $A$. The Sender chooses a test $\lambda\in \Delta(\Theta\times M)$ where $\lambda_{\Theta}=\mu_{0}$: note that this formulation implies that the test generates \textit{public} messages observed by all Receivers. Together the test and the base game constitute a Bayesian game $\mathcal{G} = \langle G,\lambda \rangle$, where:\footnote{The test $\lambda$ can be viewed as ``additional information'' observed by both the Sender and the Receivers, on top of the base information structure where the Sender observes the state and the Receivers do not observe any signal.}
\begin{enumerate}[itemsep=0em, topsep=0.3em]
    \item At the beginning  of the game a state-message pair $(\theta,m)$ is drawn from the test $\lambda$;
    \item The Sender observes $(\theta,m)$ while the Receivers observe only $m$; and
    \item All players choose an action simultaneously.
\end{enumerate}
A strategy profile $\sigma: \Theta \times M \rightarrow A$ in $\mathcal{G}$ consists of a Sender's strategy $\sigma_S: \Theta \times M \rightarrow A_S$ and Receivers' strategies $\sigma_{i}: M \rightarrow A_{i}$, $i=1,\ldots, r$. For each profile of Sender's test and players' strategies $(\lambda, \sigma)$, players' expected payoffs are given by  

\begin{equation*}
U_i(\lambda,\sigma)= \int_{\Theta \times M} u_i(\theta,\sigma(\theta, m))\,d\lambda(\theta,m) \;\;\; \text{ for }  \; i=S,
1,\ldots, r.
\end{equation*}

We now generalize the notion of credibility and incentive compatibility in \cref{section: model} to the current setting. For each $\lambda$, let
$D(\lambda)\equiv\big \{\lambda' \in \Delta(\Theta \times M): \lambda'_{\Theta}=\mu_{0},  \lambda'_M = \lambda_{M}  \big \} $
denote the set of tests that induce the same distribution of messages as $\lambda$. \cref{definition: credibe GAME} is analogous to \cref{definition: credible}, which requires that given the players' strategy profile, no deviation in $D(\lambda)$ can be profitable for the Sender.
\begin{definition} \label{definition: credibe GAME} 
A profile $(\lambda,\sigma)$ is \textbf{credible} if 
\begin{equation}\label{equation: credible GAME}
	\lambda \in \argmax_{\lambda'\in  D(\lambda)} \int u_S(\theta,\sigma(\theta, m))\,d\lambda'(\theta,m).
\end{equation}
\end{definition}

In addition, \cref{definition: IC GAME} generalizes \cref{definition: RIC}, and requires players' strategies to form a Bayesian Nash equilibrium of the game $\langle G,\lambda \rangle$.
\begin{definition} \label{definition: IC GAME}
A profile $(\lambda,\sigma)$ is \textbf{incentive compatible} (IC) if $\sigma$ is a Bayesian Nash equilibrium in $\mathcal{G} = \langle G,\lambda \rangle$. That is,
\begin{equation} \label{equation: IC GAME}
    \sigma_S \in \argmax_{\sigma_S':\Theta \times M \rightarrow A_S } U_S(\lambda, \sigma_S', \sigma_{-S} ) \quad \text{ and } \quad     \sigma_i \in \argmax_{\sigma_i': M \rightarrow A_i } U_i(\lambda, \sigma_i', \sigma_{-i} )\text{ for } i=1,\ldots,r.
\end{equation}
\end{definition}
Note that in  \cref{definition: credibe GAME}, when the Sender deviates to a different test, say $\lambda'$, we use the original strategy profile $\sigma(\theta,m)$ to predict players' actions in the ensuing Bayesian game $\langle G,\lambda' \rangle$. One might worry that the Sender may simultaneously change not only her test but also her strategy $\sigma_S(\theta,m)$ in $\langle G,\lambda' \rangle$. This, however, is unnecessary since the Sender's optimal strategy in $\langle G,\lambda' \rangle$ will remain unchanged: the Sender knows $\theta$ perfectly, her best response in $\langle G,\lambda' \rangle$ depends only on $\theta$ and the Receivers' actions (and not on her own test).

\end{document}